\theoremstyle{plain}
\newtheorem{thm}{Theorem}
\newtheorem{lem}{Lemma}
\newtheorem{prp}{Proposition}
\newtheorem{dfn}{Definition}
\newtheorem{rmk}{Remark}
\newcommand{\qed}{\hfill\mbox{\raggedright $\Box$}\medskip}
\newcommand{\mydate}{
 \ifcase\month \or
 January\or February\or March\or April\or May\or June\or
 July\or August\or September\or October\or November\or December\fi
 \space \number\year}
\newcommand{\vcirc}{\raisebox{1pt}{$\, \scriptstyle \circ \,$}}
\newcommand{\smin}{\,\raisebox{0.06em}{${\scriptstyle \in}$}\,}
\newcommand{\smwedge}{{\scriptstyle \wedge\,}}
\newcommand{\circledstar}{\mathbin{\mathchoice%
 {\ooalign{\hss$\displaystyle\ocircle$\hss\cr\hss$\displaystyle\star$\hss\cr}}%
 {\ooalign{\hss$\textstyle\ocircle$\hss\cr\hss$\textstyle\star$\hss\cr}}%
 {\ooalign{\hss$\scriptstyle\ocircle$\hss\cr\hss$\scriptstyle\star$\hss\cr}}%
 {\ooalign{\hss$\scriptscriptstyle\ocircle$\hss\cr%
           \hss$\scriptscriptstyle\star$\hss\cr}}}}
\newcommand{\bwedge}{\raisebox{0.2ex}{${\textstyle \bigwedge}$}}
\newcommand{\mathslf}[1]{\ensuremath{\mbox{\slshape\textsf{#1}}}}
\newcommand{\mathsmf}[1]{\ensuremath{\mbox{\footnotesize \slshape\textsf{#1}}}}
\begin{document}

\title{On Covariant Poisson Brackets \\
       in Classical Field Theory\,%
       \thanks{Work partially supported by CNPq (Conselho Nacional de
               Desenvolvimento Cient\'{\i}fico e Tecno\-l\'o\-gico), Brazil}}
\author{Michael Forger$^{\,1}$~~and~~M\'ario O.\ Salles$^{\,1,\,2}$
        \thanks{Work done in partial fulfillment of the requirements
                for the degree of Doctor in Science}}
\date{\normalsize
      $^1\,$ Instituto de Matem\'atica e Estat\'{\i}stica, \\
      Universidade de S\~ao Paulo, \\
      Caixa Postal 66281, \\
      BR--05315-970~ S\~ao Paulo, SP, Brazil  \\[4mm]
      $^2\,$ Centro de Ci\^encias Exatas e da Terra, \\
      Universidade Federal do Rio Grande do Norte, \\
      Campus Universit\'ario -- Lagoa Nova, \\
      BR--59078-970 Natal, RN, Brazil
      }
\maketitle

\thispagestyle{empty}

\begin{abstract}
\noindent
 How to give a natural geometric definition of a covariant Poisson bracket
 in classical field theory has for a long time been an open problem~-- as
 testified by the extensive literature on ``multi\-symplectic Poisson
 brackets'', together with the fact that all these proposals suffer
 from serious defects.
 On the other hand, the functional approach does provide a good
 candidate which has come to be known as the Peierls\,--\,De\,Witt
 bracket and whose construction in a geometrical setting is now
 well understood.
 Here, we show how the basic ``multisymplectic Poisson bracket''
 already proposed in the 1970s can be derived from the Peierls%
 \,--\,De\,Witt bracket, applied to a special class of functionals.
 This relation allows to trace back most (if not all) of the
 problems encountered in the past to ambiguities (the relation
 between differential forms on multiphase space and the functionals
 they define is not one-to-one) and also to the fact that this class
 of functionals does not form a Poisson subalgebra.
\end{abstract}

\begin{flushright}
 \parbox{12em}{
  \begin{center}
   Universidade de S\~ao Paulo \\
   RT-MAP-1203 \\
   Revised version \\
   January 2015
  \end{center}
 }
\end{flushright}

\newpage

\setcounter{page}{1}

\section{Introduction}

The quest for a fully covariant hamiltonian formulation of classical
field theory has a long history which can be traced back to the work
of Carath\'eodory~\cite{Ca}, De\,Donder~\cite{DD} and Weyl~\cite{We}
on the calculus of variations.
From a modern point of view, one of the main motivations is the issue of
quantization which, in traditional versions like canonical quantization
as well as more recent ones such as deformation quantization, starts
by bringing the classical theory into hamiltonian form. \linebreak
In the context of mechanics, where one is dealing with systems with
a finite number of degrees of freedom, this has led mathematicians to
develop entire new areas of differential geo\-metry, namely symplectic
geometry and then Poisson geometry, whereas physicists have been
motivated to embark on a more profound analysis of basic physical
concepts such as those of states and observables.
In the context of (relativistic) field theory, however, this is
not sufficient since, besides facing the formidable mathematical
problem of handling systems with an infinite number of degrees
of freedom, we have to cope with new physical principles, most
notably those of covariance and of locality.
The principle of covariance states that meaningful laws of physics
do not depend on the choice of (local) coordinates in space-time
employed in their formulation: extending the axiom of Lorentz
invariance in special relativity, it is one of the cornerstones of
general relativity and underlies the modern geometrical approach
to field theory as a whole.
Equally important is the principle of locality, stating that events
(including measurements of physical quantities) localized in regions
of space-time that are spacelike separated cannot exert any influence
on each other.
Clearly, a mathematically and physically correct hamiltonian formalism
for classical field theory should respect these principles: it should
be manifestly covariant and local, as is the modern algebraic approach
to quantum field theory; see, e.g., Ref.~\cite{BFV}.

As an example of a method that does not meet these requirements, we may
quote the standard hamiltonian formulation of classical field theory,
based on a functional formalism in terms of Cauchy data: there, the
mere necessity of choosing some Cauchy surface spoils covariance
from the very beginning!
To avoid that, a different approach is needed.

Over the last few decades, attempts to construct such a different approach
have produced a variety of proposals that, roughly speaking, can be assembled
into two groups.

One of these extends the geometrical tools that were so sucessful in
mechanics to the situation encountered in field theory by treating
spatial derivatives of fields on the same footing as time derivatives:
in the context of a first order formalism, as in mechanics, this
requires associating to each field component, say $\varphi^i$, not
just one canonically conjugate momentum $\, \pi_i^{} = \partial L /
\partial \dot{\varphi}^i \,$ but rather $n$ canonically conjugate
momenta $\, \pi_i^\mu = \partial L / \partial \, \partial_\mu \varphi^i$,
where $n$ is the dimension of space-time.
(In mechanics, time is the only independent variable, so $n\!=\!1$.)
Identifying the appropriate geometrical context has led to the
introduction of new geometrical entities now commonly referred
to as ``multisymplectic'' and/or ``polysymplectic'' structures,
and although their correct mathematical definition has only
recently been completely elucidated~\cite{FG}, the entire
circle of ideas surrounding them is already reasonably well
established, forming a new area of differential geometry;
see~\cite{Gaw,Kij,GoS,GuS,KS1,KS2,Gun,Ma1,Ma2,Got,GIM,CCI}
for early references.

A different line of thought is centered around the concept of
``covariant phase space''~\cite{CW,Cr,Zu}, defined as the space
of solutions of the equations of motion: using this space to
substitute the corresponding space of Cauchy data eliminates
the need to refer to a specific choice of Cauchy surface and
has the additional benefit of providing an embedding into the
larger space of all field configurations, allowing us to classify
statements as valid ``off shell'' (i.e., on the entire space of
field configurations) or ``on shell'' (i.e., only on the subspace
of solutions of the equations of motion).

Each of the two methods, the \emph{multi\-symplectic formalism}
as well as the \emph{covariant functional formalism}, has its
own merits and its own drawbacks, and experience has shown that
best results are obtained by appropriately combining them.

As an example to demonstrate how useful the combination of these two
approaches can become, we shall in the present paper discuss the
problem of giving an appropriate definition of the Poisson bracket,
or better, the \emph{covariant Poisson bracket}.
From the point of view of quantization, this is a question of
fundamental importance, given the fact that the Poisson bracket
is expected to be the classical limit of the commutator in quantum
field theory.
Moreover, quantum field theory provides compelling motivation for
discussing this limit in a covariant setting, taking into account
that the (non-covariant) equal-time Poisson brackets of the standard
hamiltonian formulation of classical field theory would correspond,
in the sense of a classical limit, to the (non-covariant) equal-%
time commutators of quantum field theory, which are known not to
exist in interacting quantum field theories, due to Haag's theorem.

Unfortunately, in the context of the multisymplectic formalism, the
status of covariant \linebreak Poisson brackets is highly unsatisfactory.
This may come as a bit of a surprise, given the beautiful and conceptually
simple situation prevailing in mechanics, where the existence of a Poisson
bracket on the algebra $C^\infty(P)$ of smooth functions on a manifold~$P$
is equivalent to the statement that $P$ is a Poisson manifold and, as such,
qualifies as a candidate for the phase space of a classical hamiltonian
system: for any such system, the algebra of observables is just the
Poisson algebra $C^\infty(P)$ itself or, possibly, an appropriate
subalgebra thereof, and the space of pure states is just the
Poisson manifold~$P$ itself.
In particular, this is true in the special case when $P$ is a \emph%
{symplectic manifold}, with symplectic form~$\omega$, say, and where
the Poisson bracket of two functions $\, f,g \in C^\infty(P) \,$ is
the function  $\, \{f,g\} \in C^\infty(P) \,$ defined by
\begin{equation} \label{eq:POISB1}
 \{f,g\}~=~i_{X_g}^{} i_{X_f}^{} \, \omega~=~\omega(X_f,X_g)~,
\end{equation}
where $\, X_f \in \mathfrak{X}(P) \,$ denotes the hamiltonian vector field
associated with $\, f \in C^\infty(P)$, i.e.,
\begin{equation} \label{eq:HAMVF1}
 i_{X_f}^{} \, \omega~=~df~.
\end{equation}

This situation changes considerably, and for the worse, when we pass to
the multisymplectic setting, where $\omega$ is no longer a $2$-form but
rather an $(n\!+\!1)$-form and the hamiltonian vector field $X_f$ is no
longer associated with a function $f$ but rather with an $(n\!-\!1)$-%
form~$f$, $n$ being the dimension of space-time.
It can then be shown that equation~(\ref{eq:HAMVF1}) imposes restrictions
not only on the type of vector field that is allowed on its lhs but also
on the type of differential form that is allowed on its rhs.
Indeed, the validity of an equation of the form $\, i_X^{} \omega = df \,$
implies that the vector field $X$ must be locally hamiltonian, i.e., we
have $\, L_X \omega = 0$, but it also implies that the form~$f$ must be
hamiltonian, which by definition means that its exterior derivative~$df$
must vanish on all multivectors of degree~$n$ whose contraction with the
$(n+1)$-form $\omega$ is zero, and this is a non-trivial condition as
soon as $n\!>\!1$.
(It is trivial for $n\!=\!1$ since $\omega$ is assumed to be non-degenerate.)
Thus it is only on the space $\Omega_H^{n-1}(P)$ of hamiltonian $(n-1)$-%
forms that equation~(\ref{eq:POISB1}) provides a reasonable candidate
for a Poisson bracket.
(Note, however, that as a Lie algebra with respect to such a bracket,
$\Omega_H^{n-1}(P)$ would have a huge center, containing the entire space
$Z^{n-1}(P)$ of closed $(n\!-\!1)$-forms on~$P$, since the linear map from
$\Omega_H^{n-1}(P)$ to $\mathfrak{X}_{LH}(P)$ that takes $f$ to $X_f$ is far
from being one-to-one: its kernel is precisely $Z^{n-1}(P)$.)
Anyway, the argument suggests that the transition from mechanics to field
theory should somehow involve a replacement of functions by differential
forms of degree $n\!-\!1$~-- which is not completely unreasonable when we
consider the fact that, in field theory, conservation laws are formulated
in terms of conserved currents, which are closed $(n\!-\!1)$-forms.

Unfortunately, this replacement leads to a whole bunch of serious problems,
some of which are insurmountable.
First and foremost, there is no reasonable candidate for an associative
product on the space $\Omega_H^{n-1}(P)$ which would provide even a
starting point for defining a Poisson algebra.
Second, as has been observed repeatedly in the literature~%
\cite{Gaw,Kij,Ka1,Ka2,HK1,HK2,Hel}, the condition of being a
locally hamiltonian vector field or a hamiltonian $(n\!-\!1)$-%
form forces these objects to depend at most linearly on the
multimomentum variables, and moreover we can easily think of
observables that are associated to forms of other degree
(such as a scalar field, corresponding to a $0$-form, or
the electromagnetic field strength tensor, corresponding
to a $2$-form): this by itself provides enough evidence
to conclude that hamiltonian $(n\!-\!1)$-forms constitute
an extremely restricted class of observables and that setting
up an adequate framework for general observables will require
going beyond this domain.
And finally, as has already also been noted long ago \cite{Gaw,GoS,GuS,%
Kij,KS1,KS2}, the multisymplectic Poisson bracket defined by equation~%
(\ref{eq:POISB1}) fails to satisfy the Jacobi identity.
In the case of an exact multisymplectic form ($\omega = - d\theta$),
this last problem can be cured by modifying the defining equation~%
(\ref{eq:POISB1}) through the addition of an exact (hence closed)
term, as follows~\cite{FHR}:
\begin{equation} \label{eq:POISB2}
 \{f,g\}~=~i_{X_g}^{} i_{X_f}^{} \omega \, + \,
           d \, \Bigl( i_{X_g}^{} f \, - \, i_{X_f}^{} g \, - \,
                       i_{X_g}^{} i_{X_f}^{} \theta \Bigr)~.
\end{equation}

\noindent
However, this does not settle any of the other two issues, namely
\vspace{-1mm}
\begin{itemize}
 \item the lack of an associative product to construct a Poisson algebra;%
       \vspace{-1mm}
 \item the restriction to hamiltonian forms and forms of degree $n\!-\!1$,
       which leads to unreasonable constraints on the observables that are
       allowed, excluding some that appear naturally in physicists'
       calculations.
       \vspace{-1mm}
\end{itemize}
It should be mentioned here that these are long-standing problems: they
have been recognized since the early stages of development of the subject
(see, e.g., \cite{Gaw,Kij} and also~\cite{Ka1,Ka2}) but have so far
remained unsolved.

A simple idea in this direction that has already been exploited is based
on the observation that differential forms do admit a natural associative
product, namely the wedge product, so one may ask what happens if, in the
above construction, vector fields are replaced by multivector fields and
$(n\!-\!1)$-forms by forms of arbitrary degree.
As it turns out, this leads to a modified super-Poisson bracket, defined
by a formula analogous to equation~(\ref{eq:POISB2})~\cite{FPR1,FPR2}.
But it does not help to overcome either of the aforementioned other
two issues.

On the other hand, in the context of the covariant functional formalism,
there is an obvious associative and commutative product, namely just the
pointwise product of functionals, and apart from that, there also exists
a natural and completely general definition of a covariant Poisson bracket
such that, when both are taken together, all the properties required of a
Poisson algebra are satisfied: this bracket is known as the Peierls\,--\,%
De\,Witt bracket~\cite{Pe,DW1,DW2,DW3,FSR}.

Thus the question arises as to what might be the relation, if any, between
the covariant functional Poisson bracket, or Peierls\,--\,De\,Witt bracket,
and the various candidates for multi\-symplectic Poisson brackets that
have been discussed in the literature, among them the ones written down
in equations~(\ref{eq:POISB1}) and~(\ref{eq:POISB2}).
That is the question we shall address in this paper.

In the remainder of this introduction, we want to briefly sketch
the answer proposed here: details will be filled in later on.
Starting out from the paradigm that, mathematically, classical fields are
to be described by sections of fiber bundles, suppose we are given a fiber
bundle~$P$ over a base manifold~$M$, where $M$ represents space-time,
with projection $\, \rho: P \longrightarrow M$, and suppose that the
classical fields appearing in the field theoretical model under study
are sections $\, \phi: M \longrightarrow P \,$ of~$P$ (i.e., maps
$\phi$ from~$M$ to~$P$ satisfying $\, \rho \circ \phi = \mathrm{id}_M$),
subject to appropriate regularity conditions: for the sake of definite%
ness, we shall assume here that all manifolds and bundles are ``regular''
in the sense of being smooth, while the regularity of sections may vary
between smooth ($C^\infty$) and distributional ($C^{-\infty}$).%
\footnote{If no specification is given, it is tacitly assumed that we
are dealing with smooth sections.}
To fix terminology, we define, for any section $f$ of any vector
bundle~$V$ over~$P$, its \emph{base support}\/ or \emph{space-time
support}\/, denoted here by $\mathrm{supp} f$, to be the closure
of the set of points in~$M$ such that the restriction of~$f$ to
the corresponding fibers of~$P$ does not vanish identically, i.e.,%
\footnote{Using the abbreviation ``supp'' for the base support rather
than the ordinary support (which would be a subset of~$P$) constitutes
a certain abuse of language, but will do no harm since the ordinary
support will play no role in this paper.}
\begin{equation} \label{eq:BASSUPP}
 \mathrm{supp} f~=~\overline{\bigl\{ x \in M~|~f\vert_{P_x} \neq 0
   \bigr\}}~.
\end{equation}
Now suppose that $f$ is a differential form on~$P$ of degree~$p$,
say, and that $\Sigma$ is a closed $p$-dimensional submanifold
of~$M$, possibly with boundary, subject to the restriction that
$\Sigma$ and $\mathrm{supp} f$ should have compact intersection%
\footnote{Of course, this restriction is automatically satisfied
if $\Sigma$ is compact and also if $f$ has compact base support.
Moreover, given an arbitrary differential form~$f$ on~$P$, we can
always construct one with compact base support by multiplying with
a ``cutoff function'', i.e., the pull-back to~$P$ of a function
of compact support on~$M$.}
so as to guarantee that the following integral is well defined,
providing a functional $\mathcal{F}_{\Sigma,f}^{}$ on the space of
sections of~$P$,
\begin{equation} \label{eq:LFUNCT1}
 \mathcal{F}_{\Sigma,f}^{}[\phi]~=~\int_\Sigma \phi^* f~,
\end{equation}
where $\phi^* f$ is of course the pull-back of~$f$ to~$M$ via~$\phi$.
Regarding boundary conditions, we shall usually require that if $\Sigma$
has a boundary, it should not intersect the base support of~$f$:
\begin{equation} \label{eq:BOUNDC}
 \partial\Sigma \cap \mathrm{supp} f~=~\emptyset~.
\end{equation}

\noindent
This simple construction provides an especially interesting class of
functionals for various reasons, the most important of them being the
fact that they are \emph{local}\/, since they are simple integrals,
over regions or more general submanifolds of space-time, of local
densities such as, e.g., polynomials of the basic fields and their
derivatives, up to a certain order.%
\footnote{To incorporate derivatives up to order~$r$, say, of fields
that are sections of some fiber bundle~$E$ over~$M$, one has to define
$P$ using the $r$-th order jet bundle $J^r E$ of~$E$.}
This is an intuitive notion of locality for functionals of classical
fields, but as has been shown recently, it can also be formulated in
mathematically rigorous terms~\cite{BFR}.
Either way, it is clear that the product of two local functionals of
the form~(\ref{eq:LFUNCT1}) is no longer a local functional of the
same form: rather, we get a ``bilocal'' functional associated with a
submanifold of~$M \times M$ and a differential form on~$P \times P$.
Therefore, a mathematically interesting object to study would be the
algebra of ``multilocal'' functionals which is generated by the local
ones, much in the same way as, on an ordinary vector space, the algebra
of polynomials is generated by the monomials.

But the point of main interest for our work appears when we assume $P$
to be a multi\-symplectic fiber bundle~\cite{FG} and $M$ to be a Lorentz
manifold, usually satisfying some additional hypotheses regarding its
causal structure: more specifically, we shall assume $M$ to be globally
hyperbolic since this is the property that allows us to speak of Cauchy
surfaces.
In fact, as is now well known, $M$ will in this case admit a foliation
by Cauchy surfaces defined as the level sets of some smooth time function.
However, it is often convenient not to fix any metric on~$M$ ``a priori''
since, in the context of general relativity, the space-time metric
itself is a dynamical entity and not a fixed background field.
Within this context, and for the case of a regular first-order
hamiltonian system where fields are sections of a given configuration
bundle~$E$ over~$M$ and the dynamics is obtained from a regular first-%
order lagrangian via Legendre transform, it has been shown in~\cite{FSR}
that one may use that structure to define the Peierls\,--\,De\,Witt bracket
as a functional Poisson bracket on covariant phase space.
Here, we want to show how, in the same context, \emph{multisymplectic
Poisson brackets between forms}, such as in equations~(\ref{eq:POISB1})
and~(\ref{eq:POISB2}), \emph{can be derived from the Peierls\,--\,De\,Witt
bracket between the corresponding functionals}\/.
For the sake of simplicity, this will be done for the case of
$(n\!-\!1)$-forms, but we expect similar arguments to work in any degree.

Concretely, we shall prove that given a fixed hypersurface $\Sigma$ in~$M$
(typically, a Cauchy surface) and two hamiltonian $(n\!-\!1)$-forms $f$
and $g$, we have
\begin{equation} \label{eq:PDWBMSPB}
 \bigl\{ \mathcal{F}_{\Sigma,f}^{} \,,\, \mathcal{F}_{\Sigma,g}^{} \bigr\}~
 =~\mathcal{F}_{\Sigma,\{f,g\}}^{}~,
\end{equation}
where the bracket on the lhs is the Peierls\,--\,De\,Witt bracket of
functionals and the bracket $\{f,g\}$ that appears on the rhs is a
``multisymplectic pseudo-bracket'' or ``multisymplectic bracket''
given by a formula analogous to equation~(\ref{eq:POISB1}) or to
equation~(\ref{eq:POISB2}).
A more detailed explanation of this result will be deferred to the main
body of the paper~-- last but not least because the construction requires
the systematic use of both types of multiphase space that appear in
field theory and that we refer to as ``ordinary multiphase space''
and ``extended multiphase space'', respectively: they differ in that
the latter is a one-dimensional extension of the former, obtained by
including an additional scalar ``energy type'' variable.
Geometrically, extended multiphase space is an affine line bundle
over ordinary multiphase space, and the hamiltonian $\mathcal{H}$
of any theory with this type of ``field content'' is a section of
this affine line bundle.
Moreover, each of these two multiphase spaces comes equipped with a
multisymplectic structure which is exact (i.e., the multisymplectic
form is, up to a sign introduced merely for convenience, the exterior
derivative of a multicanonical form), naturally defined as follows.
First, one constructs the multisymplectic form~$\omega$ and the
multicanonical form~$\theta$ on the extended multiphase space by
means of a procedure that can be thought of as a generalization
of the construction of the symplectic structure on the cotangent
bundle of an arbitrary manifold.
Then, the corresponding forms on the ordinary multiphase space are
obtained from the previous ones by pull-back via the hamiltonian~%
$\mathcal{H}$: therefore, they will in what follows be denoted by~%
$\omega_{\mathcal{H}}^{}$ and by~$\theta_{\mathcal{H}}^{}$ to indicate
their dependence on the choice of hamiltonian.
We can express this by saying that the multisymplectic structure on
extended multiphase space is ``kinematical'' whereas that on ordinary
multiphase space is ``dynamical''.
Correspondingly, we shall refer to the brackets defined by equations~%
(\ref{eq:POISB1}) and~(\ref{eq:POISB2}) on extended multiphase space as
``kinematical multisymplectic brackets'' and to the brackets defined by
the analogous equations
\begin{equation} \label{eq:POISB3}
 \{f,g\}~=~i_{X_g}^{} i_{X_f}^{} \, \omega_{\mathcal{H}}^{}~
         =~\omega_{\mathcal{H}}^{}(X_f,X_g)
\end{equation}
with
\begin{equation} \label{eq:HAMVF2}
 i_{X_f}^{} \, \omega_{\mathcal{H}}^{}~=~df
\end{equation}
and
\begin{equation} \label{eq:POISB4}
 \{f,g\}~=~i_{X_g}^{} i_{X_f}^{} \omega_{\mathcal{H}}^{} \, + \,
           d \, \Bigl( i_{X_g}^{} f \, - \, i_{X_f}^{} g \, - \,
                       i_{X_g}^{} i_{X_f}^{} \theta_{\mathcal{H}}^{} \Bigr)
\end{equation}
as ``dynamical multisymplectic brackets''.
In both cases, the brackets defined by the simpler formulas~(\ref{eq:POISB1})
and~(\ref{eq:POISB3}) are really only ``pseudo-brackets'' because they fail
to satisfy the Jacobi identity, and the correction terms that appear in
equations~(\ref{eq:POISB2}) and~(\ref{eq:POISB4}) are introduced to cure
this defect.
At any rate, what appears on the rhs of equation~(\ref{eq:PDWBMSPB}) above
is the dynamical bracket on ordinary multiphase space and not the kinematical
bracket on extended multiphase space~-- in accordance with the fact that the
Peierls\,--\,De\,Witt bracket itself is dynamical.

We conclude this introduction with a few comments about the organization
of the paper.
In~Section~2, we set up the geometric context for the functional calculus
in classical field theory, introduce the class of local functionals to be
investigated and give an explicit formula for their first functional
derivative.
In~Section~3, we present a few elementary concepts from multisymplectic
geometry, which is the adequate mathematical background for the covariant
hamiltonian formulation of classical field theory.
In~Section~4, we combine the two previous sections to formulate, in this
context, the variational principle that provides the dynamics and derive
not only the equations of motion (De\,Donder\,--\,Weyl equations) but also
their linearization around a given solution (linearized De\,Donder\,--\,Weyl
equations), with emphasis on a correct treatment of boundary conditions.
In~Section~5, we present the classification of locally hamiltonian and
exact hamiltonian vector fields on multiphase space.
Section~6 contains the main result of the paper on the connection between
multisymplectic Poisson brackets and the functional Poisson bracket of
Peierls and De\,Witt.
Finally, Section~7 provides further discussion of this result, its
implications and perspectives for future investigations.

The paper presents a substantially revised and expanded version of the
main results contained in the PhD thesis of the second author~\cite{Sa},
which was elaborated under the supervision of the first author.

\section{Geometric setup for the functional calculus}

We begin by collecting some concepts and notations that we use throughout
the article. \linebreak
As already mentioned in the introduction, classical fields are sections
of fiber bundles over space-time, so our starting point will be to fix
a fiber bundle $P$ over the space-time manifold~$M$ (not necessarily
endowed with a fixed metric, as mentioned before), with projection
$\, \rho: P \longrightarrow M$.
The space of field configurations $\mathscr{C}$ is then the space of
(smooth) sections of~$P$, or an appropriate subspace thereof,
\begin{equation} \label{eq:FCONFS1} 
 \mathscr{C}~\subset~\Gamma^\infty(P)~,
\end{equation}
whose elements will, typically, be denoted by $\phi$.
Formally, we can view this space as a mani\-fold which, at each point $\phi$,
has a tangent space $T_\phi^{} \mathscr{C}$ and, similarly, a cotangent space
$T_\phi^* \mathscr{C}$. \linebreak
Explicitly, denoting by $V_\phi^{}$ the vertical bundle of~$P$, pulled back
to~$M$ via~$\phi$,
\begin{equation}
 V_\phi^{}~=~\phi^*(\mathrm{Ver} P)~,
\end{equation}
and by $V_\phi^{\circledast}$ its twisted dual, defined by taking the tensor
product of its ordinary dual with the line bundle of volume forms over
the base space,
\begin{equation}
 V_\phi^{\circledast}~=~V_\phi^* \otimes \bwedge^n T^* M~,
\end{equation}
we have that, according to the principles of the variational calculus,
$T_\phi^{} \mathscr{C}$ is the space of smooth sections of $V_\phi^{}$,
or an appropriate subspace thereof,
\begin{equation} \label{eq:FTANS1} 
 T_\phi^{} \mathscr{C}~\subset~\Gamma^\infty(V_\phi^{})~,
\end{equation}
whose elements will, typically, be denoted by $\delta\phi$ and called
variations of~$\phi$, whereas  $T_\phi^* \mathscr{C}$ is the space of
distributional sections of~$V_\phi^{\circledast}$, or an appropriate sub%
space thereof,
\begin{equation} \label{eq:FCOTS1} 
 T_\phi^* \mathscr{C}~\subset~\Gamma^{-\infty}(V_\phi^{\circledast})~.
\end{equation}
The reader will note that in equations~(\ref{eq:FCONFS1}), (\ref{eq:FTANS1})
and (\ref{eq:FCOTS1}), we have required only inclusion, rather than equality.
One reason is that the system may be subject to constraints on the fields
which cannot be reduced to the simple statement that they should take
values in some appropriate subbundle of the original bundle (this case
could be handled by simply changing the choice of the bundle~$P$).
But even for unconstrained systems, which are the only ones that we
shall be dealing with in this paper, there is another reason, namely
that we have not yet specified the support properties to be employed.
One obvious possibility is to set
\begin{equation} \label{eq:FTCTS1} 
 T_\phi^{} \mathscr{C}~=~\Gamma_c^\infty(V_\phi^{})~~,~~
 T_\phi^* \mathscr{C}~=~\Gamma^{-\infty}(V_\phi^{\circledast})~,
\end{equation}
which amounts to allowing only variations with compact support.
At the other extreme, we may set
\begin{equation} \label{eq:FTCTS2} 
 T_\phi^{} \mathscr{C}~=~\Gamma^\infty(V_\phi^{})~~,~~
 T_\phi^* \mathscr{C}~=~\Gamma_c^{-\infty}(V_\phi^{\circledast})~.
\end{equation}
And finally, there is a third option, specifically adapted to the
situation where the base space is a globally hyperbolic lorentzian
manifold and adopted in~\cite{FSR}, which is to take
\begin{equation} \label{eq:FTCTS3} 
 T_\phi^{} \mathscr{C}~=~\Gamma_{\mathrm{sc}}^\infty(V_\phi^{})~~,~~
 T_\phi^* \mathscr{C}~=~\Gamma_{\mathrm{tc}}^{-\infty}(V_\phi^{\circledast})~,
\end{equation}
where the symbols ``sc'' and ``tc'' indicate that the sections
are required to have \emph{spatially compact support}\/ and
\emph{temporally compact support}\/, respectively.
These options correspond to different choices for the support
properties of the functionals that will be allowed.

Generally speaking, given a functional $\mathslf{F}\,$ on $\mathscr{C}$,
we define its \emph{base support}\/ or \emph{space-time support}\/,
denoted here simply by $\, \mathrm{supp}\, \mathslf{F} \,$, as follows
\cite{BFR}:
\begin{equation} \label{eq:FUNSUP1}
 x \notin \mathrm{supp}\, \mathslf{F}~~\Longleftrightarrow~~
  \begin{array}{c}
   \mbox{There exists an open neighborhood $U_x$ of~$x$ in~$M$} \\
   \mbox{such that for any two field configurations
         $\, \phi_1^{},\phi_2^{} \in \mathscr{C}$} \\
   \mbox{satisfying $\, \phi_1^{} = \phi_2^{} \,$ on $\, M \setminus U_x \,$,
         $\mathslf{F}\,[\phi_1^{}] = \mathslf{F}\,[\phi_2^{}]$.}
  \end{array}
\end{equation}
This definition implies that $\, \mathrm{supp}\, \mathslf{F} \,$ is
a closed subset of~$M$ since its complement is open: it is the largest
open subset of~$M$ such that, intuitively speaking, $\mathslf{F}\,$ is
\emph{insensitive to variations of its argument localized within that
open subset}.
It also implies that the functional derivative of~$\mathslf{F}$ \linebreak
(if it exists) satisfies
\begin{equation} \label{eq:FUNSUP2}
 \mathslf{F}^{\>\prime}[\phi] \cdot \delta\phi~=~0
 \qquad \mbox{if $\; \mathrm{supp} \, \mathslf{F} \,\cap\,
                     \mathrm{supp} \, \delta\phi~=~\emptyset$}~.
\end{equation}
For later use, we note that the functional derivative will often be
expressed in terms of a (formal) variational derivative:
\begin{equation} \label{eq:FUNDER}
 \mathslf{F}^{\>\prime}[\phi] \cdot \delta\phi~
 =~\int_M d^{\,n} x~\frac{\delta\mathslf{F}}{\delta\phi^i}[\phi](x) \;
   \delta\phi^i(x)~.
\end{equation}
Typically, as always in distribution theory, the functional derivative
will be well defined on variations $\delta\phi$ such that $\; \mathrm{supp}
\, \mathslf{F} \,\cap\, \mathrm{supp} \, \delta\phi \;$ is compact.
Thus if no restrictions on the space-time support of~$\mathslf{F}\,$
are imposed, we must adopt the choice made in equation~(\ref{eq:FTCTS1}).
At the other extreme, if the space-time support of~$\mathslf{F}\,$ is
supposed to be compact, we may adopt the choice made in equation~%
(\ref{eq:FTCTS2}).
And finally, the choice made in equation~(\ref{eq:FTCTS3}) is the
adequate one for dealing with functionals that have \emph{temporally
compact support}\/, i.e., space-time support contained in the inverse
image of a bounded interval in~$\mathbb{R}$ under some global time
function: the typical example is that of a local functional of the
form given by equation~(\ref{eq:LFUNCT1}) when $\Sigma$ is some
Cauchy surface.
More generally, note that for local functionals of
the form given by equation~(\ref{eq:LFUNCT1}), we have
\begin{equation} \label{eq:LFUNCT2}
 \mathrm{supp} \, \mathcal{F}_{\Sigma,f}^{}~
 =~\Sigma \,\cap\, \mathrm{supp} f~.
\end{equation}
However, it should not be left unnoticed that the equality in
equations~(\ref{eq:FTCTS1})-(\ref{eq:FTCTS3}) and, possibly, in
equation~(\ref{eq:LFUNCT2}), can only be guaranteed to hold for
non-degenerate systems, since in the case of degenerate systems,
there will be additional constraints implying that we must return
to the option of replacing equalities by inclusions, as before.

In what follows, we shall make extensive use of the fact that variations
of sections can always be witten as compositions with projectable vector
fields, or even with vertical vector fields, on the total space~$P$.
To explain this, recall that a vector field $X$ on the total space
of a fiber bundle is called \emph{projectable} if the tangent map
to the bundle projection takes the values of~$X$ at any two points
in the same fiber to the same tangent vector at the corresponding
base point, i.e.,
\begin{equation} \label{eq:PROJVF1}
 T_{p_1^{}}^{} \rho \cdot X(p_1^{})~=~T_{p_2^{}}^{} \rho \cdot X(p_2^{})
 \qquad \mbox{for $\, p_1^{},p_2^{} \in P \,$
              such that $\, \rho(p_1^{}) = \rho(p_2^{})$}~.
\end{equation}
This is equivalent to requiring that there exists a vector field $X_M^{}$
on the base space which is $\rho\,$-related to~$X$,
\begin{equation} \label{eq:PROJVF2}
 X_M^{}(m)~=~T_{p\,}^{} \rho \cdot X(p)
 \qquad \mbox{for $\, p \in P \,$ such that $\, \rho(p) = m$}~.
\end{equation}
In particular, $X$ is called \emph{vertical} if $X_M$ vanishes.
Now note that given any projectable vector field $X$ on~$P$, we
obtain a functional vector field $\mathslf{X}$ on $\mathscr{C} \,$
whose value at each point $\, \phi \in \mathscr{C} \,$ is the
functional tangent vector $\, \mathslf{X}[\phi] \in T_\phi
\mathscr{C}$, denoted in what follows by $\delta_X^{} \phi$,
defined as
\begin{equation} \label{eq:VFREPVAR1} 
 \delta_X^{} \phi~=~X(\phi) \, - \, T\phi \, (X_M^{})~,
\end{equation}
or more explicitly,
\begin{equation} \label{eq:VFREPVAR2} 
 \delta_X^{} \phi(m)~=~X(\phi(m)) \, - \, T_m^{} \phi \, (X_M^{}(m))
 \qquad \mbox{for $\, m \in M$}~.
\end{equation}
Conversely, it can be shown that every functional tangent vector can
be obtained in this way from a vertical vector field $X$ on~$P$, i.e.,
given a section $\delta\phi$ of $\phi^*(\mathrm{Ver} P)$, there exists
a vertical vector field $X$ on~$P$ representing it in the sense that
$\delta\phi$ is equal to $\delta_X^{} \phi$.
(To do so, we can apply the implicit function theorem to construct, for
any point $m$ in~$M$, a system of local coordinates $(x^\mu,y^\alpha)$
for~$P$ around~$\phi(m)$ in which $\rho$ corresponds to the projection
onto the first factor, $(x^\mu,y^\alpha) \mapsto x^\mu$, and $\phi$
corresponds to the embedding in the first factor, $x^\mu \mapsto
(x^\mu,0)$.
Moreover, in these coordinates, $\delta\phi$ is given by functions
$\delta\phi^\alpha(x^\mu)$ whereas $X$ is given by functions $X^\alpha%
(x^\mu,y^\beta)$, so we may simply define an extension of the former
to the latter by requiring the $X^\alpha$ to be independent of the
$y^\beta$, setting $\, X^\alpha(x^\mu,y^\beta) = \delta\phi^\alpha(x^\mu) \,$
in a neighborhood of the origin in $y$-space and then using a smooth
cutoff function in $y$-space.)

Of course, the reader may wonder why, in this context, we bother to
allow for projectable vector fields rather than just vertical ones.
The point is that although vertical vector fields are entirely
sufficient to represent variations of sections, we shall often
encounter the converse situation in which we consider variations
of sections induced by vector fields which are \emph{not}\/ vertical
but only projectable, such as the hamiltonian vector fields appearing
in equations \linebreak (\ref{eq:POISB1})-(\ref{eq:POISB2}) and
(\ref{eq:POISB3})-(\ref{eq:POISB4}).

Regarding notation, we shall often think of a projectable vector field
as a pair $\, X = (X_P,X_M)$ \linebreak consisting of a vector field $X_P$
on the total space~$P$ and a vector field $X_M$ on the base space~$M$,
related to one another by the bundle projection: then equations~%
(\ref{eq:VFREPVAR1}) and~(\ref{eq:VFREPVAR2}) should be written as
\begin{equation} \label{eq:VFREPVAR3} 
 \delta_X^{} \phi~=~X_P^{}(\phi) \, - \, T\phi \, (X_M^{})~,
\end{equation}
and
\begin{equation} \label{eq:VFREPVAR4} 
 \delta_X^{} \phi(m)~=~X_P^{}(\phi(m)) \, - \, T_m^{} \phi \, (X_M^{}(m))
 \qquad \mbox{for $\, m \in M$}~,
\end{equation}
respectively.
The same argument as in the previous paragraph can then be used
to prove the following
\begin{lem}~ \label{lem:VFREPVAR} 
 Let\/ $\phi$ be a section of a fiber bundle\/~$P$ over a base
 manifold\/~$M$.
 Given any vector field\/~$X_M^{}$ on\/~$M$, there exists
 a projectable vector field\/~$X_P^\phi$ on\/~$P$ which is\/
 $\phi$-related to\/~$X_M^{}$, i.e., satisfies $\; X_P^\phi(\phi)
 = T\phi(X_M)$, and then we have $\; \phi^*(i_{X_P^\phi} \alpha) =
 i_{X_M^{}}^{}(\phi^* \alpha)$, for any differential form\/ $\alpha$
 on\/~$P$.
\end{lem}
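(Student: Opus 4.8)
The plan is to prove the two assertions in turn: first the existence of a projectable vector field $X_P^\phi$ that is $\phi$-related to $X_M$, and then, taking such a field for granted, the pull-back identity, which is a purely pointwise matter.

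For the existence, I would argue in two stages, the second of which is literally the construction already carried out in the text. \emph{Stage 1 (a global $\rho$-lift).} Cover $M$ by open sets $U_i$ over which $P$ is trivial, $\rho^{-1}(U_i) \cong U_i \times F$, and on each $\rho^{-1}(U_i)$ take the vector field $(X_M|_{U_i},0)$ determined by this product structure; it is defined on all of $\rho^{-1}(U_i)$, it is projectable, and it is $\rho$-related to $X_M|_{U_i}$. Choosing a partition of unity $\{\chi_i\}$ on $M$ subordinate to $\{U_i\}$ and setting $\, Z := \sum_i (\chi_i \circ \rho)\,(X_M|_{U_i},0)\,$ produces a globally defined vector field on $P$ which is still projectable and, because $\sum_i \chi_i = 1$, is $\rho$-related to $X_M$. \emph{Stage 2 (correcting along $\phi$).} In general $Z$ fails to be $\phi$-related to $X_M$, but the discrepancy $\, m \mapsto T_m\phi(X_M(m)) - Z(\phi(m)) \,$ is annihilated by $T\rho$, hence is a smooth section of $\phi^*(\mathrm{Ver}\,P)$; applying to it the construction described earlier (whereby a section of $\phi^*(\mathrm{Ver}\,P)$ is extended to a vertical vector field on $P$) yields a vertical vector field $W$ on $P$ with $W(\phi(m)) = T_m\phi(X_M(m)) - Z(\phi(m))$ for every $m$. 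Then $X_P^\phi := Z + W$ is projectable (a projectable field plus a vertical one, still $\rho$-related to $X_M$) and, by construction, $X_P^\phi(\phi(m)) = T_m\phi(X_M(m))$, i.e. $X_P^\phi(\phi) = T\phi(X_M)$.

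For the pull-back identity, let $\alpha$ be a $p$-form on $P$. If $p = 0$ both sides are zero, so suppose $p \geq 1$ and evaluate $\phi^*(i_{X_P^\phi}\alpha)$ at a point $m \in M$ on vectors $v_1,\dots,v_{p-1} \in T_m M$: unwinding the definitions of pull-back and interior product gives $\alpha_{\phi(m)}\bigl(X_P^\phi(\phi(m)),\,T_m\phi\,v_1,\dots,T_m\phi\,v_{p-1}\bigr)$, and substituting $X_P^\phi(\phi(m)) = T_m\phi(X_M(m))$ turns this into $(\phi^*\alpha)_m\bigl(X_M(m),v_1,\dots,v_{p-1}\bigr) = \bigl(i_{X_M}(\phi^*\alpha)\bigr)_m(v_1,\dots,v_{p-1})$, which is the claim. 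Note this step uses only $\phi$-relatedness, not projectability, and would hold for any smooth map $\phi$.

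The substance of the lemma is therefore concentrated in Stage 1: one must lift $X_M$ to a vector field on the \emph{whole} of $P$ that is genuinely projectable, and this is precisely the point where the naive recipe ``use coordinates $(x^\mu,y^\alpha)$ adapted to $\phi$ and write $X_M^\mu\,\partial/\partial x^\mu$'' breaks down, since such adapted charts do not cover entire fibers and cutting off in the fibre directions destroys projectability; glueing \emph{bundle} trivializations (equivalently, invoking the existence of an Ehresmann connection on $P$) is what repairs this. Once a global projectable $\rho$-lift is available, the vertical correction in Stage 2 is exactly the argument already spelled out in the text, and the pull-back identity is immediate. (A slightly slicker alternative for the existence is to choose the connection so that its horizontal subspace along the closed submanifold $\phi(M) \subset P$ coincides with $T\phi(TM)$ — possible since a section of a vector bundle over a closed submanifold extends globally — and then take the horizontal lift of $X_M$ directly, with no separate correction needed.)
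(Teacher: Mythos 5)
Your proof is correct. The paper offers no written proof of this lemma beyond the remark that ``the same argument as in the previous paragraph can then be used'', i.e.\ the adapted-coordinate construction with a cutoff in the fibre directions; your Stage~2 and your pointwise verification of $\phi^*(i_{X_P^\phi}\alpha)=i_{X_M^{}}(\phi^*\alpha)$ are exactly that argument and its routine consequence. Where you genuinely go beyond the paper is Stage~1, and your diagnosis there is accurate: the adapted-chart-plus-fibre-cutoff recipe works for \emph{vertical} fields because verticality survives multiplication by an arbitrary function, but it cannot by itself produce a projectable lift of $X_M^{}$ on all of~$P$, since the adapted charts do not cover entire fibres and multiplying a horizontal component by a function that varies along the fibres destroys projectability. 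Gluing genuine bundle trivializations with a partition of unity pulled back from~$M$ (equivalently, choosing an Ehresmann connection) is the standard repair, and your decomposition into a global $\rho$-lift plus a vertical correction along~$\phi$ is the clean way to organize it; the parenthetical variant, choosing the connection so that its horizontal distribution restricts to $T\phi(TM)$ along the closed submanifold $\phi(M)$, also works and makes the correction step unnecessary. The only caveat worth recording is that for the uses of the lemma in Section~6 only the values of $X_P^\phi$ along the image of~$\phi$ (and the identity $\phi^*(i_{X_P^\phi}\alpha)=i_{X_M^{}}(\phi^*\alpha)$, which as you note needs only $\phi$-relatedness) actually matter, so the global projectability issue you repair is a point of rigour in the statement rather than of substance in its applications.
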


As an example of how useful the representation of variations of sections
of a fiber bundle by composition with vertical vector fields or even
projectable vector fields can be, we present explicit formulas for the
first and second functional derivative of a local functional of the type
considered above~-- beginning with a more detailed definition of this
class of functionals.
\begin{dfn} \label{def:LFUNCT}~ 
 Given a fiber bundle\/ $P$ over an\/ $n$-dimensional base manifold\/~$M$,
 let\/ $\Sigma$ be a\/ $p$-dimensional submanifold of\/~$M$, possibly with
 boundary\/~$\partial\Sigma$, and\/ $f$ be a\/ $p$-form on the total
 space\/~$P$ such that the intersection of\/ $\Sigma$ with the base
 support of\/~$f$ is compact.
 The \textbf{local functional} associated to\/ $\Sigma$ and\/~$f$ is
 the functional $\; \mathcal{F}_{\Sigma,f}^{}: \mathscr{C} \longrightarrow
 \mathbb{R} \,$ on the space $\, \mathscr{C} \subset \Gamma^\infty(P)$
 \linebreak of field configurations defined by
 \begin{equation} \label{eq:func01}
  \mathcal{F}_{\Sigma,f}^{}[\phi]~=~\int_\Sigma \phi^* f
  \qquad \mbox{for $\, \phi \in \mathscr{C}$}~.
 \end{equation}
\end{dfn}
These functionals are differentiable, and their derivative is given by
a completely explicit formula:
\begin{prp} \label{prop:FFUNDER}~ 
 Given a fiber bundle\/ $P$ over an\/ $n$-dimensional base manifold\/~$M$,
 let\/ $\Sigma$ be a\/ $p$-dimensional submanifold of\/~$M$, possibly with
 boundary\/~$\partial\Sigma$, and\/ $f$ be a\/ $p$-form on the total
 space\/~$P$ such that the intersection of\/ $\Sigma$ with the base
 support of\/~$f$ is compact.
 Then the local functional\/ $\mathcal{F}_{\Sigma,f}^{}$ associated to\/
 $\Sigma$ and\/~$f$ is differentiable, and representing variations of
 sections of\/~$P$ in the form\/ $\delta_X \phi$ where $\, X = (X_P^{},%
 X_M^{}) \,$ is a projectable vector field, its functional derivative
 is given by the formula
 \begin{equation} \label{eq:FFUNDER1} 
  \mathcal{F}_{\Sigma,f}^{\>\prime}[\phi] \cdot \delta_X^{} \phi~
  =~\int_\Sigma \Big( \phi^* \big( L_{X_P}^{} f \big) \, - \,
                     L_{X_M}^{} \big( \phi^* f \big) \Big)
  \qquad \mbox{for $\, \phi \in \mathscr{C}$,
                   $\delta_X^{} \phi \in T_\phi^{} \mathscr{C}$}~,
 \end{equation}
 where\/ $L_Z$ denotes the Lie derivative along the vector field\/~$Z$.
\end{prp}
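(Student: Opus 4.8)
The plan is to realize the variation $\delta_X^{}\phi$ as the velocity vector of an explicit one-parameter family of sections built from the flows of the projectable vector field $X = (X_P^{},X_M^{})$, and then to differentiate the defining integral under the integral sign. Concretely, let $\Phi_\lambda$ denote the (local) flow of $X_P^{}$ on $P$ and $\sigma_\lambda$ the (local) flow of $X_M^{}$ on $M$; since $X_P^{}$ is $\rho$-related to $X_M^{}$, the flow $\Phi_\lambda$ covers $\sigma_\lambda$, i.e.\ $\rho \circ \Phi_\lambda = \sigma_\lambda \circ \rho$. Putting
\[
 \phi_\lambda~:=~\Phi_\lambda \circ \phi \circ \sigma_\lambda^{-1}~,
\]
one checks at once that $\rho \circ \phi_\lambda = \mathrm{id}_M^{}$, so $\phi_\lambda$ is again a section of~$P$, and that $\, \frac{d}{d\lambda}\big|_{\lambda=0} \phi_\lambda = X_P^{}(\phi) - T\phi(X_M^{}) = \delta_X^{}\phi$. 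A short argument is needed here to ensure these objects exist: the flows are defined only locally in~$\lambda$ and only on open subsets of~$P$ resp.~$M$, but since the compactness hypothesis on $\Sigma \cap \mathrm{supp} f$ confines all the relevant geometry to a compact region, $\phi_\lambda$ is a well-defined element of~$\mathscr{C}$ for $\lambda$ in some interval around~$0$; alternatively, one may first multiply~$f$ by a space-time cutoff (as in the footnote to~(\ref{eq:BASSUPP})), so that $X_P^{}$ may be taken complete without affecting the answer.

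With the family $\phi_\lambda$ at hand, the functional derivative is computed as
\[
 \mathcal{F}_{\Sigma,f}^{\>\prime}[\phi] \cdot \delta_X^{}\phi~
 =~\frac{d}{d\lambda}\Big|_{\lambda=0} \mathcal{F}_{\Sigma,f}^{}[\phi_\lambda]~
 =~\frac{d}{d\lambda}\Big|_{\lambda=0} \int_\Sigma \phi_\lambda^* f~.
\]
Differentiation under the integral sign is legitimate because, for $\lambda$ ranging over a small compact interval, the pulled-back forms $\phi_\lambda^* f$ all have base support contained in one fixed compact subset of~$\Sigma$ (a neighbourhood of $\Sigma \cap \mathrm{supp} f$) and depend smoothly on~$\lambda$. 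Since pull-back is contravariant, $\phi_\lambda^* f = (\sigma_\lambda^{-1})^* \bigl( \phi^*(\Phi_\lambda^* f) \bigr)$, so by the Leibniz rule the $\lambda$-derivative at~$0$ splits into two contributions: one from $\, \frac{d}{d\lambda}\big|_0 \Phi_\lambda^* f = L_{X_P}^{} f$, giving $\phi^*(L_{X_P}^{} f)$ after applying $\phi^*$, and one from $\, \frac{d}{d\lambda}\big|_0 (\sigma_\lambda^{-1})^* = - L_{X_M}^{}$ acting on $\phi^* f$, giving $\, -\,L_{X_M}^{}(\phi^* f)$. Hence $\, \frac{d}{d\lambda}\big|_0 \phi_\lambda^* f = \phi^*(L_{X_P}^{} f) - L_{X_M}^{}(\phi^* f)$, and integrating over~$\Sigma$ yields precisely formula~(\ref{eq:FFUNDER1}).

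The only genuinely delicate point is the first step: ensuring that this ``move $\phi$ along the flow'' construction produces a bona fide curve in~$\mathscr{C}$ despite the non-compactness of~$P$ and~$M$, and that its velocity is exactly $\delta_X^{}\phi$ rather than something differing by a support or reparametrisation artefact; once that is secured, the remainder is routine Cartan calculus and differentiation under the integral. It is worth recording, as a by-product, that applying Cartan's formula $L_Z = i_Z^{} d + d\, i_Z^{}$ on both terms and using $\rho$-relatedness (Lemma~\ref{lem:VFREPVAR}) together with the pointwise decomposition $X_P^{}(\phi(m)) = \delta_X^{}\phi(m) + T_m^{}\phi(X_M^{}(m))$, one can rewrite the integrand of~(\ref{eq:FFUNDER1}) in a form that involves the variation $\delta_X^{}\phi$ alone, plus an exact term which, under the boundary condition~(\ref{eq:BOUNDC}), drops out upon integration by Stokes' theorem; this shows a posteriori that the right-hand side of~(\ref{eq:FFUNDER1}) is independent of the particular projectable field~$X$ chosen to represent the variation.
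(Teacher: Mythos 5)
Your proposal is correct and follows essentially the same route as the paper's own proof: realize the variation via the family $\phi_\lambda = \Phi_{P,\lambda} \circ \phi \circ \Phi_{M,\lambda}^{-1}$ built from the flows of $X_P^{}$ and $X_M^{}$, differentiate under the integral sign, and split the derivative of $\phi_\lambda^* f = (\Phi_{M,\lambda}^{-1})^* \bigl( \phi^* ( \Phi_{P,\lambda}^* f ) \bigr)$ into the two Lie-derivative contributions. Your additional remarks on the local-in-$\lambda$ existence of the flows and the compactness of $\Sigma \cap \mathrm{supp} f$ make explicit a point the paper leaves tacit, but they do not change the argument.
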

\begin{rmk}~
 Under the boundary condition that the intersection of\/~$\partial\Sigma$
 with the base support of\/~$f$ is empty, equation~(\ref{eq:FFUNDER1}) can
 be rewritten as follows:
 \begin{equation} \label{eq:FFUNDER2} 
   \mathcal{F}_{\Sigma,f}^{\>\prime}[\phi] \cdot \delta_X^{} \phi~
  =~\int_\Sigma \Big( \phi^* \big( i_{X_P}^{} df \big) \, - \,
                     i_{X_M}^{} \big( \phi^* df \big) \Big)
  \qquad \mbox{for $\, \phi \in \mathscr{C}$,
                   $\delta_X^{} \phi \in T_\phi^{} \mathscr{C}$}~.
 \end{equation}
 The same equation holds when this boundary condition is replaced by the
 requirement that\/~$\delta_X^{} \phi$ should vanish on\/~$\partial\Sigma$.
\end{rmk}
\begin{proof}
 Recall first that for any functional $\mathslf{F}$ on~$\mathscr{C}$,
 its functional derivative at $\, \phi \in \mathscr{C} \,$ along
 $\, \delta\phi \in T_\phi^{} \mathscr{C} \,$ is defined by
 \[
  \mathslf{F}^{\>\prime}[\phi] \cdot \delta\phi~
  =~\frac{d}{d \lambda} \, \mathslf{F} \, [\phi_\lambda^{}] \!\>
    \Big|_{\lambda=0}~,
 \]
 where the $\, \phi_\lambda^{} \in \mathscr{C} \,$ constitute a smooth
 one-parameter family of sections of~$P$ such that
 \[
  \phi~=~\phi_\lambda^{} \!\> \Big|_{\lambda=0}~~~,~~~
  \delta\phi~=~\frac{d}{d\lambda} \, \phi_\lambda^{} \!\> \Big|_{\lambda=0}~.
 \]
 Fixing $\phi$ and $\delta\phi$ and choosing a projectable vector field
 $\, X = (X_P^{},X_{M}^{}) \,$ that represents $\delta\phi$ as $\delta_X^{}
 \phi$, according to equation~(\ref{eq:VFREPVAR1}), consider its flow, which
 is a (local) one-parameter group of (local) automorphisms $\, \Phi_\lambda^{}
 = (\Phi_{P,\lambda}^{},\Phi_{M,\lambda}^{}) \,$ such that
 \[
  X_P^{}~=~\frac{d}{d\lambda} \, \Phi_{P,\lambda}^{} \!\> \Big|_{\lambda=0}
  ~~~\mbox{and}~~~
  X_M^{}~=~\frac{d}{d\lambda} \, \Phi_{M,\lambda}^{} \!\> \Big|_{\lambda=0}~.
 \]
 This allows us to take the one-parameter family of sections
 $\phi_\lambda^{}$ of~$P$ to be given by the one-parameter group
 of automorphisms $\Phi_\lambda^{}$, according to
 \[
  \phi_\lambda^{}~
  =~\Phi_{P,\lambda}^{} \vcirc \phi \vcirc \, \Phi_{M,\lambda}^{-1}~.
 \]
 Now we are ready to calculate:
 \begin{eqnarray*}
  \mathcal{F}_{\Sigma,f}^{\>\prime}[\phi] \cdot \delta_X^{} \phi \!
  &=&\!\! \frac{d}{d\lambda} \; \bigg( \int_{\Sigma} \phi_\lambda^* f \bigg)
          \bigg|_{\lambda=0} \\[2mm]
  &=&\!\! \int_\Sigma \; \frac{d}{d\lambda} \; \phi_\lambda^* f \,
          \bigg|_{\lambda=0}~
   =~     \int_\Sigma \; \frac{d}{d\lambda} \;
          \big( \Phi_{P,\lambda}^{} \vcirc \phi \vcirc \, \Phi_{M,\lambda}^{-1}
                \big)^* f \, \bigg|_{\lambda=0} \\[2mm]
  &=&\!\! \int_\Sigma \; \bigg( \phi_{}^*
          \bigg( \frac{d}{d\lambda} \;
                 \big( \Phi_{P,\lambda}^{\phantom{P,\lambda} *} f \big)
                 \bigg|_{\lambda=0} \bigg) \, + \,
          \frac{d}{d\lambda}
          \Big( \Phi_{M,\lambda}^{-1~\,*} \big( \phi^*f \big) \Big)
          \bigg|_{\lambda=0} \bigg) \\[2mm]
  &=&\!\! \int_\Sigma \Big( \phi^* \big( L_{X_P}^{} f \big) \, - \,
                           L_{X_M}^{} \big( \phi_{}^* f \big) \Big)~.
 \end{eqnarray*}
 To derive equation~(\ref{eq:FFUNDER2}) from equation~(\ref{eq:FFUNDER1}), it
 suffices to apply standard formulas such as $\, L_Z = d \, i_Z + i_Z \, d \,$
 and the fact that $d$ commutes with pull-backs, together with Stokes'
 theorem, and use the boundary condition~(\ref{eq:BOUNDC}) to kill the
 resulting two integrals over $\partial\Sigma$.
 The same argument works when $\delta_X^{} \phi$ is supposed to vanish
 on~$\partial\Sigma$, since we may then arrange $X_M^{}$ to vanish on~%
 $\partial\Sigma$ and $X_P^{}$ to vanish on~$P|_{\partial\Sigma}$.
\qed
\end{proof}

\section{Multiphase spaces and multisymplectic structure}

As has already been stated before, the bundle~$P$ appearing in the
previous two sections, representing the \emph{multiphase space} of
the system under consideration, will be required to carry additional
structure, namely a \emph{multisymplectic form}.
There has been much debate and even some confusion in the literature
on what should be the ``right'' definition of the concept of a multi%
symplectic structure, but all proposals made so far can be subsumed
under the following
\begin{dfn} \label{dfn:MSPF}~
 A \textbf{multisymplectic form} on a manifold~$P$ is a differential
 form $\,\omega$ on\/~$P$ of degree\/ $n\!+\!1$, say, which is closed,
 \begin{equation} \label{eq:MSFC} 
  d \>\! \omega = 0~,
 \end{equation}
 and satisfies certain additional algebraic constraints~-- among
 them that of being non-degenerate, in the sense that for any
 vector field\/~$X$ on~$P$, we have
 \begin{equation}
  i_X^{} \omega = 0~~\Longrightarrow~~X = 0~.
 \end{equation}
\end{dfn}
Of course, this definition is somewhat vague since it leaves open what
other algebraic constraints should be imposed besides non-degeneracy.
One rather natural criterion is that they should be sufficient to
guarantee the validity of a Darboux type theorem.
Clearly, when $n\!=\!1$, the above definition reduces to that of
a symplectic form, and no additional constraints are needed.
But when $n\!>\!1$, which is the case pertaining to field theory
rather than to mechanics, this is no longer so.
An important aspect here is that $P$ is not simply a manifold
but rather the total space of a fiber bundle over the space-time
manifold~$M$, which is supposed to be $n$-dimensional, so one
restriction is that the degree of the form~$\omega$ is linked
to the space-time dimension.
Another restriction is that $\omega$ should be $(n\!-\!1)$-horizontal,
which means that its contraction with three vertical vector fields
vanishes:
\begin{equation}
 i_X^{} i_Y^{} i_Z^{} \>\! \omega = 0 \qquad \mbox{for~$\, X,Y,Z$ vertical}~.
\end{equation}
And finally, there is a restriction that, roughly speaking, guarantees
existence of a ``sufficiently high-dimensional'' lagrangian subbundle
of the tangent bundle of~$P$, but since we shall not need it here,
we omit the details: they can be found in Ref.~\cite{FG}.

The main advantage of the definition of a multisymplectic form as
given above is that we can proceed to discuss a number of concepts 
which do not depend on the precise nature of the additional algebraic
constraints to be imposed.
For example, a vector field $X$ on~$P$ is said to be \emph{\textbf%
{locally hamiltonian}} if $i_X^{} \omega$ is closed, that is, if\,%
\footnote{Throughout this paper, we shall make use of Cartan's formula
$\, L_X^{} = d \, i_X^{} + i_X^{} d \,$ without further mention.}
\begin{equation} \label{eq:LHVF}
 L_X^{} \omega = 0~,
\end{equation}
and is said to be \emph{\textbf{globally hamiltonian}} or simply
\emph{\textbf{hamiltonian}} if $i_X^{} \omega$ is exact, that is,
if there exists an $(n\!-\!1)$-form $f$ on~$P$ such that
\begin{equation} \label{eq:GHVF}
 i_X^{} \omega = df~.
\end{equation}
Reciprocally, an $(n\!-\!1)$-form $f$ on~$P$ is said to be \emph%
{\textbf{hamiltonian}} if there exists a vector field~$X$ on~$P$
such that equation~(\ref{eq:GHVF}) holds: this condition is
trivially satisfied when $n\!=\!1$ but not when~$n\!>\!1$.
Note that due to non-degeneracy of~$\omega$, $X$ is uniquely determined
by~$f$, and will therefore often denoted by~$X_f$, whereas $f$ is
determined by~$X$ only up to addition of a closed form: despite this
(partial) ambiguity, we shall say that $X$ is associated with~$f$ and
$f$ is associated with~$X$.
In the special case when $\omega$ is exact, i.e., we have
\begin{equation} \label{eq:MSFE} 
 \omega = - \, d \>\! \theta~,
\end{equation}
where $\theta$ is an appropriate $n$-form on~$P$ called the \emph{%
\textbf{multicanonical form}}, a vector field $X$ on~$P$ is said to
be \emph{\textbf{exact hamiltonian}} if
\begin{equation} \label{eq:EHVF}
 L_X^{} \theta = 0~.
\end{equation}
In this case, of course, the associated hamiltonian form can be
simply chosen to be $i_X^{} \theta$, since $\; d \>\! i_X^{} \theta
= L_X^{} \theta \, - \, i_X^{} d \>\! \theta = i_X^{} \omega$.
In particular, this happens when $P$ is the total space of a
vector bundle (over some base space~$E$, say, which in turn
will be the total space of a fiber bundle over the space-time
manifold~$M$), provided that $\omega$ is homogeneous of degree
one with respect to the corresponding \emph{\textbf{Euler vector
field}} or  \emph{\textbf{scaling vector field}} $\Sigma$, i.e.,
\begin{equation} \label{eq:MSFH} 
 L_\Sigma^{} \omega = \omega~,
\end{equation}
since we may then define $\theta$ by
\begin{equation} \label{eq:HMSFE} 
 \theta = - \, i_\Sigma^{} \omega~.
\end{equation}
Moreover, we can then employ $\Sigma$ to decompose vector fields and
differential forms on~$P$ according to their scaling degree, and as
we shall see below, this turns out to be extremely useful for the
classification of hamiltonian vector fields (whether locally or
globally or exact) and of hamiltonian forms.

The standard example of this kind of structure is provided by any (first
order) hamiltonian system obtained from a (first order) lagrangian system
via a \emph{non-degenerate} covariant Legendre transformation.
In this approach, one starts out from another fiber bundle over~$M$,
denoted here by~$E$ and called the \emph{configuration bundle}: the
relation between the two bundles $E$ and~$P$ is then established
by taking appropriate duals of first order jet bundles.
Namely, consider the first order jet bundle of~$E$, denoted simply
by $JE$, which is both a fiber bundle over~$M$ (with respect to the
source projection) and an affine bundle over~$E$ (with respect to
the target projection), together with its difference vector bundle,
called the linearized first order jet bundle of~$E$ and denoted
here by $\vec{J} E$, which is both a fiber bundle over~$M$ (with
respect to the source projection) and a vector bundle over~$E$
(with respect to the target projection), and introduce the
corresponding duals: the affine dual $J^{\circledstar} E$ of~$JE$ and
the usual linear dual $\vec{J}^{\circledast} E$ of~$\vec{J} E$. \linebreak
In what follows, we shall refer to the latter as the \emph{ordinary
multiphase space}\/ and to the former as the \emph{extended multiphase
space}\/ of the theory.
As it turns out and has been emphasized since the beginning of the
``modern phase'' of the development of the subject in the early 1990s
(see, e.g., \cite{CCI}), both of these play an important role since
not only are both of them fiber bundles over~$M$ and vector bundles
over~$E$, but $J^{\circledstar} E$ is also an affine line bundle over~%
$\vec{J}^{\circledast} E$, and the dynamics of the theory is given by the
choice of a \emph{hamiltonian}\/, which is a section $\, \mathcal{H}:
\vec{J}^{\circledast} E \longrightarrow J^{\circledstar} E \,$ of this affine
line bundle.
Moreover, and this is of central importance, both of these
multiphase spaces carry a multisymplectic structure.
Namely, $J^{\circledstar} E$ comes with a naturally defined \emph%
{multisymplectic form}\/ of degree~$n\!\!\>+\!1$, denoted here
by $\omega$, which (up to sign) is the exterior derivative of
an equally naturally defined \emph{multicanonical form}\/ of
degree~$n$, denoted here by $\theta\,$: $\omega = - d \>\! \theta$,
and if we choose a hamiltonian $\, \mathcal{H}: \vec{J}^{\circledast} E
\longrightarrow J^{\circledstar} E$, we can pull them back to obtain
a corresponding multicanonical form
\begin{equation}
 \theta_{\mathcal{H}}^{\vphantom{j}}~=~\mathcal{H}^* \theta
\end{equation}
and a corresponding multisymplectic form
\begin{equation}
 \omega_{\mathcal{H}}^{\vphantom{j}}~=~\mathcal{H}^* \omega
\end{equation}
on~$\vec{J}^{\circledast} E$: again,
$\omega_{\mathcal{H}} = - d \>\! \theta_{\mathcal{H}}$.
Thus the main difference between the extended and the ordinary
multiphase space is that $\theta$ and~$\omega$ are universal
and ``cinematical'', whereas $\theta_{\mathcal{H}}$ and~%
$\omega_{\mathcal{H}}$ are ``dynamical''.
In terms of local Darboux coordinates $(x^\mu,q^i,p_i^\mu)$
for~$\vec{J}^{\circledast} E$ and $(x^\mu,q^i,p_i^\mu,p)$ for~%
$J^{\circledstar} E$ induced by the choice of local coordinates
$x^\mu$ for~$M$, $q^i$ for the typical fiber $Q$ of~$E$ and a
local trivialization of~$E$ over~$M$, we have
\begin{equation} \label{eq:MCANF1}
 \theta~=~p\>\!_i^\mu \; dq_{}^i \>\smwedge\, d^{\,n} x_\mu^{} \,
          + \, p \; d^{\,n} x~,
\end{equation}
and
\begin{equation} \label{eq:MSPF01}
 \omega~
 =~dq_{}^i \>\smwedge\, dp\>\!_i^\mu \>\smwedge\, d^{\,n} x_\mu^{} \, - \,
   dp \;\smwedge\, d^{\,n} x~,
\end{equation}
so that writing $\, \mathcal{H} = - \, H \, d^{\,n} x$,
\begin{equation} \label{eq:MCANF2}
 \theta_{\mathcal{H}}^{}~
 =~p\>\!^\mu_i \; dq^i \,\smwedge\, d^{\,n} x_\mu \, - \, H \, d^{\,n} x~,
\end{equation}
and
\begin{equation} \label{eq:MSPF2a}
 \omega_{\mathcal{H}}^{}~
 =~dq^i \,\smwedge\, dp\>\!^\mu_i \,\smwedge\, d^{\,n} x_\mu \, + \,
   dH \,\smwedge\, d^{\,n} x~,
\end{equation}
or more explicitly
\begin{equation} \label{eq:MSPF2b}
 \omega_{\mathcal{H}}^{}~
 =~dq^i \,\smwedge\, dp\>\!^\mu_i \,\smwedge\, d^{\,n} x_\mu \, + \,
   \frac{\partial H}{\partial q^i} \; dq^i \,\smwedge\, d^{\,n} x \, + \,
   \frac{\partial H}{\partial p\>\!_i^\mu} \; dp\>\!_i^\mu \,\smwedge\, d^{\,n} x~.
\end{equation}
where $d^{\,n} x_\mu^{}$ is the (local) $(n\!-\!1)$-form obtained by
contracting the (local) volume form $d^{\,n} x$ with the local vector
field $\, \partial_\mu^{} \equiv \partial/\partial x_{}^\mu$; for more
details, including a global definition of~$\theta$ that does not depend
on any of these choices, we refer to~\cite{CCI,FSR,GIM}.%
\footnote{It should be noted that whereas the form $\omega$ is always
non-degenerate, the form $\omega_{\mathcal{H}}^{}$ is degenerate in
mechanics ($n=1$) and non-degenerate in field theory ($n>1$).}

\section{Variational principle and equations of motion}

The fundamental link that merges the functional and multisymplectic
formalisms discussed in the previous two sections into one common
picture becomes apparent when the construction of functionals of fields
from forms on multiphase space outlined in the introduction is applied
to the multicanonical $n$-form $\theta_{\mathcal{H}}^{}$ on (ordinary)
multiphase space: this provides the \emph{action functional}\/
$\mathslf{S}$ of the theory, defining the variational principle
whose stationary points are the solutions of the equations of motion.
Indeed, the action functional is really an entire family of functionals
$\mathslf{S}_K^{}$ on the space $\mathscr{C}$ of field configurations
$\phi$ (see equation~(\ref{eq:FCONFS1})), given by
\begin{equation} \label{eq:ACTION1}
 \mathslf{S}_K^{}[\phi]~
 =~\int_K \phi^\ast \, \theta_{\mathcal{H}}^{}~,
\end{equation}
where $K$ runs through the compact submanifolds of~$M$ which are the
closure of their interior in~$M$ and have smooth boundary $\partial K$.

Within this setup, a section $\phi$ in~$\mathscr{C}$ is said to be a
\emph{stationary point of the action}\/ if, for any compact submanifold
$K$ of~$M$ which is the closure of its interior in~$M$ and has smooth
boundary $\partial K$, $\phi$ becomes a critical point of the functional
$\mathslf{S}_K^{}$ restricted to the (formal) submanifold
\begin{equation}
 \mathscr{C}_{K,\phi}^{}~
 =~\big\{ \tilde{\phi} \in \mathscr{C}~|~
          \tilde{\phi}|_{\partial K}^{} = \phi|_{\partial K}^{} \bigr\}
\end{equation}
of~$\mathscr{C}$, or equivalently, if the functional derivative
$\mathslf{S}_K^{\>\prime}[\phi]$ of $\mathslf{S}_K^{}$ at~$\phi$
vanishes on the subspace
\begin{equation} \label{eq:FTANS2}
 T_\phi^{} \mathscr{C}_{K,\phi}^{}~
 =~\big\{ \delta\phi \in T_\phi^{} \mathscr{C}~|~
          \delta\phi = 0~\mbox{on}~\partial K \big\}
\end{equation}
of~$T_\phi^{} \mathscr{C}$.
As is well known, this is the case if and only if $\phi$ satisfies
the corresponding equations of motion, which in the present case are
the De\,Donder\,--\,Weyl equations; see, e.g., \cite{FSR,Got,GIM}.
Globally, these can be cast in the form
\begin{equation} \label{eq:FEQMOT1} 
 \begin{array}{c}
  \phi^* (i_X^{} \, \omega_{\mathcal{H}}^{})~=~0 \\[1mm]
  \mbox{for any vertical vector field $X$ on~$P$}
 \end{array}~,
\end{equation}
or even
\begin{equation} \label{eq:FEQMOT2} 
 \begin{array}{c}
  \phi^* \big( i_{X_P}^{} \omega_{\mathcal{H}}^{\vphantom{j}} \big)~=~0 \\[1mm]
  \mbox{for any projectable vector field $\, X = (X_P^{},X_M^{})$}
 \end{array}~,
\end{equation}
whereas, when written in terms of local Darboux coordinates
$(x^\mu,q^i,p_i^\mu)$ as before, they read
\begin{equation} \label{eq:FDDW1}
 \partial_\mu \varphi^i~
 =~\frac{\partial H}{\partial p_i^\mu}(\varphi,\pi)~~~,~~~
 \partial_\mu \pi_i^\mu~
 = \; - \, \frac{\partial H}{\partial q^i}(\varphi,\pi)~,
\end{equation}
where $\, P = \vec{J}^{\circledast} E$, $\phi = (\varphi,\pi) \,$ and
$\, \mathcal{H} = - H \, d^{\,n} x$.
Similarly, given such a solution~$\phi$, we shall say that a section
$\delta\phi$ in $T_\phi^{} \mathscr{C}$ is an \emph{infinitesimal
stationary point of the action}\/ if it is formally tangent to
the (formal) submanifold of solutions, or equivalently, if it
satisfies the corresponding linearized equations of motion, which
in the present case are the linearized De\,Donder\,--\,Weyl equations.
Globally, representing $\delta\phi$ in the form $\delta_X \phi$ where
$\, X = (X_P^{},X_M^{}) \,$ is a projectable vector field, these can be
cast in the form
\begin{equation} \label{eq:LEQMOT1}
 \begin{array}{c}
  \phi^* (i_Y^{} L_{X_P}^{} \omega_{\mathcal{H}}^{\vphantom{j}})~=~0 \\[1mm]
  \mbox{for any vertical vector field $Y$ on~$P$}
 \end{array}~,
\end{equation}
or even
\begin{equation} \label{eq:LEQMOT2}
 \begin{array}{c}
  \phi^* (i_{Y_P}^{} L_{X_P}^{} \omega_{\mathcal{H}}^{\vphantom{j}})~=~0 \\[1mm]
  \mbox{for any projectable vector field $\, Y = (Y_P^{},Y_M^{})$}
 \end{array}~,
\end{equation}
whereas, when written in terms of local Darboux coordinates
$(x^\mu,q^i,p_i^\mu)$ as before, they read
\begin{equation} \label{eq:LDDW1}
 \begin{array}{c}
  \partial_\mu \, \delta \varphi^i~
  =~{\displaystyle \mbox{} + \,
     \frac{\partial^{\>\!2} H}{\partial q^j \, \partial p\>\!_i^\mu}
     (\varphi,\pi) \; \delta \varphi^j \, + \,
     \frac{\partial^{\>\!2} H}{\partial p\>\!_j^\nu \, \partial p\>\!_i^\mu} \,
     (\varphi,\pi) \; \delta \pi_j^\nu}~, \\[3ex]
  \partial_\mu \, \delta \pi_i^\mu~
  =~{\displaystyle \mbox{} - \,
     \frac{\partial^{\>\!2} H}{\partial q^j \, \partial q^i}
     (\varphi,\pi) \; \delta \varphi^j \, - \,
     \frac{\partial^{\>\!2} H}{\partial p\>\!_j^\nu \, \partial q^i}
     (\varphi,\pi) \; \delta \pi_j^\nu}~,
 \end{array}
\end{equation}
where $\, P = \vec{J}^{\circledast} E$, $\phi = (\varphi,\pi)$, $\delta\phi
= (\delta\varphi,\delta\pi) \,$ and $\, \mathcal{H} = - H \, d^{\,n} x$.

\begin{proof}
 For the first part (concerning the derivation of the full equations
 of motion from the variational principle), we begin by specializing
 equation~(\ref{eq:FFUNDER1}) to vertical vector fields $X$ on~$P$
 (i.e., setting $\, X_M^{} = 0 \,$ and replacing $X_P^{}$ by~$X$),
 with $\, f = \theta_{\mathcal{H}}$, and using standard facts such as
 the formula $\, L_Z = d \, i_Z + i_Z \, d \,$ or that $d$ commutes
 with Lie derivatives and pull-backs, together with Stokes' theorem,
 to obtain that, for any vertical vector field~$X$ on~$P$,
 \[
  \mathcal{S}_K^{\>\prime}[\phi] \cdot \delta_X^{} \phi~
  = \; - \, \int_K \phi^* ( i_X^{} \, \omega_{\mathcal{H}}^{} \big)
    \, + \, \int_{\partial K} \phi^*
            ( i_X^{} \, \theta_{\mathcal{H}}^{} \big)~.
 \]
 Obviously, condition~(\ref{eq:FEQMOT1}) implies that this expression will
 be equal to zero for all vertical vector fields~$X$ on~$P$ which vanish
 on~$P|_{\partial K}$. Conversely, it follows from Lemma~\ref{lem:DENS}
 below that if this is the case, then condition~(\ref{eq:FEQMOT1}) holds.
 Moreover, it is easily seen that this condition is really equivalent to
 the condition
 \[
 \begin{array}{c}
  \phi^* \big( i_{X_P}^{} \omega_{\mathcal{H}}^{\vphantom{j}} \big) \, - \,
  i_{X_M}^{} \big( \phi^* \omega_{\mathcal{H}}^{\vphantom{j}} \big)~=~0 \\[1mm]
  \mbox{for any projectable vector field $\, X = (X_P^{},X_M^{})$}
 \end{array}~,
 \]
 but it so happens that the form $\phi^* \omega_{\mathcal{H}}^{}$ is
 identically zero, for dimensional reasons.
 Finally, a simple calculation shows that equation~(\ref{eq:FEQMOT1}),
 when written out explicitly in local Darboux coordinates, can be
 reduced to the system~(\ref{eq:FDDW1}). 
 \\[1mm]
 For the second part (concerning the linearization of the full equations
 of motion around a given solution~$\phi$), we proceed as in the proof of
 Proposition~\ref{prop:FFUNDER}: fixing~$\phi$ and~$\delta\phi$, suppose
 we are given a smooth one-parameter family of sections $\, \phi_\lambda^{}
 \in \mathscr{C} \,$ of~$P$ such that
 \[
  \phi~=~\phi_\lambda^{} \!\> \Big|_{\lambda=0}~~~,~~~
  \delta\phi~=~\frac{d}{d\lambda} \, \phi_\lambda^{} \!\> \Big|_{\lambda=0}~,
 \]
 as well as a projectable vector field $\, X = (X_P^{},X_{M}^{}) \,$ that
 represents $\delta\phi$ as $\delta_X^{} \phi$, according to equation~%
 (\ref{eq:VFREPVAR1}), together with its flow, which is a (local)
 one-parameter group of (local) automorphisms $\, \Phi_\lambda^{}
 = (\Phi_{P,\lambda}^{},\Phi_{M,\lambda}^{}) \,$ such that
 \[
  X_P^{}~=~\frac{d}{d\lambda} \, \Phi_{P,\lambda}^{} \!\> \Big|_{\lambda=0}
  ~~~\mbox{and}~~~
  X_M^{}~=~\frac{d}{d\lambda} \, \Phi_{M,\lambda}^{} \!\> \Big|_{\lambda=0}~,
 \]
 allowing us to take the one-parameter family of sections $\phi_\lambda^{}$
 of~$P$ to be given by the one-parameter group of automorphisms
 $\Phi_\lambda^{}$, according to
 \[
  \phi_\lambda^{}~
  =~\Phi_{P,\lambda}^{} \vcirc \phi \vcirc \, \Phi_{M,\lambda}^{-1}~.
 \]
 Then for any vertical vector field $Y$ on~$P$, we have
 \begin{eqnarray*}
  \frac{d}{d\lambda} \, \phi_\lambda^*
  \bigl( i_Y^{} \, \omega_{\mathcal{H}} \bigr) \Big|_{\lambda=0} \!\!
  &=&\!\! \frac{d}{d\lambda} \, \Bigl( \Phi_{P,\lambda}^{} \vcirc \phi \vcirc
                                       \Phi_{M,\lambda}^{-1} \Bigr)^{\!\!*}
          \bigl( i_Y^{} \, \omega_{\mathcal{H}} \bigr) \Big|_{\lambda=0} \\[1ex]
  &=&\!\! \frac{d}{d\lambda} \, \phi^* \Bigl( \Phi_{P,\lambda}^*
          (i_Y^{} \, \omega_{\mathcal{H}}) \Bigr) \Big|_{\lambda=0} \, + \,
          \frac{d}{d\lambda} \, (\Phi_{M,\lambda}^{-1})^* \Bigl( \phi^*
          (i_Y^{} \, \omega_{\mathcal{H}}) \Bigr) \Big|_{\lambda=0} \\[2ex]
  &=&\!\! \phi^* \bigl( L_{X_P}^{} i_Y^{} \, \omega_{\mathcal{H}} \bigr) \, - \,
          L_{X_M}^{} \bigl( \phi^* (i_Y^{} \, \omega_{\mathcal{H}}) \bigr)
  \\[2ex]
  &=&\!\! \phi^* \bigl( i_Y^{} L_{X_P}^{} \omega_{\mathcal{H}} \bigr) \, + \,
          \phi^* \bigl( i_{[X_P,Y]}^{} \, \omega_{\mathcal{H}} \bigr) \, - \,
          L_{X_M}^{} \bigl( \phi^* (i_Y^{} \, \omega_{\mathcal{H}}) \bigr)~,
 \end{eqnarray*}
 where the last two terms vanish according to equation~(\ref{eq:FEQMOT1}),
 and the same argument holds with $Y$ replaced by~$Y_P^{}$ where
 $\, Y = (Y_P^{},Y_{M}^{}) \,$ is any projectable vector field.
 Finally, an elementary but somewhat lengthy calculation shows that
 equation~(\ref{eq:LEQMOT1}), when written out explicitly in local
 Darboux coordinates, can be reduced to the system~(\ref{eq:LDDW1}),
 provided $\phi$ satisfies the system~(\ref{eq:FDDW1}). \\
 \hspace*{\fill} \qed
\end{proof}

\noindent
The lemma we have used in the course of the argument is the following.
\begin{lem} \label{lem:DENS}~
 Given a fiber bundle\/ $P$ over an\/ $n$-dimensional base manifold\/~$M$,
 let\/ $\phi$ be a section of\/~$P$ and\/ $\alpha$ be an\/ $(n+1)$-form
 on\/~$P$ such that, for any compact submanifold\/ $K$ of~$M$ which is the
 closure of its interior in~$M$ and has smooth boundary\/~$\partial K$,
 and for any vertical vector field\/ $X$ on\/~$P$ that vanishes on\/~%
 $P|_{\partial K}$ together with all its derivatives, the integral
 \[
  \int_K \phi^\ast (i_X^{} \alpha)
 \]
 vanishes. Then the form $\, \phi^\ast (i_X^{} \alpha) \,$ vanishes
 identically, for any vertical vector field $X$ on\/~$P$ (not subject
 to any boundary conditions).
\end{lem}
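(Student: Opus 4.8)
The plan is to reduce the statement to the classical fundamental lemma of the calculus of variations. The first point to record is that the assignment $\, X \mapsto \phi^*(i_X^{}\alpha) \,$, which sends a vertical vector field on~$P$ to an $n$-form on~$M$, is \emph{tensorial}: for any $\, h \in C^\infty(P) \,$ one has $\, i_{hX}^{}\alpha = h\,i_X^{}\alpha \,$, hence $\, \phi^*(i_{hX}^{}\alpha) = (h\vcirc\phi)\,\phi^*(i_X^{}\alpha) \,$, so the value of $\phi^*(i_X^{}\alpha)$ at a point $\, m \in M \,$ depends on~$X$ only through $\, X(\phi(m)) \,$, i.e., through the variation $\, \delta_X^{}\phi = X\vcirc\phi \in \Gamma^\infty(V_\phi^{}) \,$. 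Consequently there is a well-defined vector bundle morphism $\, \mathcal{A}: V_\phi^{} \to \bwedge^n T^*M \,$ over~$M$ (equivalently, a smooth section $\mathcal{A}$ of~$V_\phi^{\circledast}$) such that $\, \phi^*(i_X^{}\alpha) = \mathcal{A}\vcirc\delta_X^{}\phi \,$ for every vertical vector field $X$ on~$P$. In adapted coordinates $(x^\mu,y^a)$ on~$P$ in which $\phi$ reads $\, x \mapsto (x,0) \,$, as constructed in the paragraph preceding Lemma~\ref{lem:VFREPVAR}, this morphism is transparent: $\, \phi^*(i_X^{}\alpha) = \pm\,\alpha_{a\,1\cdots n}(x,0)\,X^a(x,0)\,d^{\,n}x \,$, so $\mathcal{A}$ is literally the tuple of ``mixed'' components of~$\alpha$ along~$\phi$. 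The conclusion of the lemma is thus equivalent to the single statement $\, \mathcal{A} = 0$.

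The second, and main, step is to deduce $\, \mathcal{A} = 0 \,$ from the integral hypothesis. Suppose not; then there are a point $\, m_0 \in M \,$ and a vertical vector field~$X_0$ on~$P$ with $\, \bigl(\phi^*(i_{X_0}^{}\alpha)\bigr)(m_0) \neq 0 \,$. Choose an oriented coordinate chart~$U$ around~$m_0$; after shrinking $U$ we may write $\, \phi^*(i_{X_0}^{}\alpha) = g\,d^{\,n}x \,$ on~$U$ with $g$ continuous and, replacing $X_0$ by~$-X_0$ if necessary, $\, g(m_0) > 0 \,$, hence $\, g > 0 \,$ on a neighbourhood of~$m_0$. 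Let $\, K \subset U \,$ be a small closed coordinate ball centred at~$m_0$ on which $\, g > 0 \,$; it is compact, is the closure of its interior, and has smooth boundary. Pick a non-negative $\, \chi \in C^\infty(M) \,$ with compact support contained in the interior of~$K$ and $\, \chi(m_0) > 0 \,$, and let $X$ be the vertical vector field on~$P$ associated with the variation $\, \chi\,\delta_{X_0}^{}\phi \,$ by the construction recalled before Lemma~\ref{lem:VFREPVAR}, taken with base support inside~$U$. Because $\chi$ vanishes on a neighbourhood of~$\partial K$, this $X$ vanishes, together with all its derivatives, on a neighbourhood of~$P|_{\partial K}$, while tensoriality gives $\, \phi^*(i_X^{}\alpha) = \chi\,g\,d^{\,n}x \,$. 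Therefore $\, \int_K \phi^*(i_X^{}\alpha) = \int_K \chi\,g\,d^{\,n}x > 0 \,$, contradicting the hypothesis.

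Hence $\, \mathcal{A} = 0 \,$, and so $\, \phi^*(i_X^{}\alpha) = \mathcal{A}\vcirc\delta_X^{}\phi = 0 \,$ for \emph{every} vertical vector field $X$ on~$P$, with no boundary condition imposed, as claimed. I do not expect a genuine obstacle here: the substance is the tensoriality observation of the first paragraph together with the standard bump-function argument of the second. The only point that needs a little care is checking that the admissible test objects are rich enough~-- that from a variation with compact support in the interior of~$K$ one can actually build a vertical vector field on~$P$ vanishing to all orders on~$P|_{\partial K}$~-- but this is exactly what the explicit construction preceding Lemma~\ref{lem:VFREPVAR} supplies (extend the variation so as to be independent of the fibre coordinates near~$\phi(M)$ and then multiply by a cutoff function in the fibre directions), so no new ideas are required.
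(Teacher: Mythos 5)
Your proposal is correct and follows essentially the same route as the paper's own proof: a contradiction argument built on the tensoriality of $\, X \mapsto \phi^*(i_X^{}\alpha) \,$ (which the paper uses implicitly when it writes $\, \phi^*(i_{\chi X}^{}\alpha) = \chi\!\> a \; d^{\,n}x \,$) combined with a bump function supported in a small coordinate ball. The only cosmetic differences are that the paper cuts off by multiplying $X$ directly with the pull-back of~$\chi$ rather than re-extending the variation $\chi\,\delta_{X_0}^{}\phi$, and that it normalizes $\chi = 1$ on an inner ball to get an explicit lower bound $\epsilon\,\mathrm{vol}(\bar{B}_{\delta/2})$ instead of your qualitative positivity argument.
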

\begin{proof}
 Suppose $X$ is any vertical vector field on~$P$ and $m$ is a point
 in~$M$ where $\, \phi^\ast (i_X^{} \alpha) \,$ does not  vanish.
 Choosing an appropriately oriented system of local coordinates
 $x^\mu$ in~$M$ around~$m$, we may write $\, \phi^\ast \, (i_X^{} \alpha)
 = a \; d^{\,n} x \,$ where $a$ is a function that is strictly  positive
 at the coordinate origin (corresponding to the point~$m$) and hence, for
 an appropriate choice of sufficiently small positive numbers $\delta$
 and~$\epsilon$, will be $\geqslant \epsilon$ on~$\bar{B}_\delta$
 (here we denote by $B_r$ and by $\bar{B}_r$ the open and closed
 ball of radius $r$ around the coordinate origin, respectively).
 Choosing a test function $\chi$ on~$M$ such that $\, 0 \leqslant
 \chi \leqslant 1$, $\mathrm{supp} \, \chi \subset B_\delta \,$ and
 $\, \chi = 1$ on~$\bar{B}_{\delta/2}$, lifted to~$P$ by pull-back
 and multiplied by~$X$ to give a new vertical vector field $\chi X$
 on~$P$ that vanishes on~$P|_{\partial \bar{B}_\delta}$ together with all
 its derivatives, we get
 \[
  \int_{\bar{B}_\delta} \phi^\ast (i_{\chi X}^{} \alpha)~
  =~\int_{\bar{B}_\delta} d^{\,n} x~\chi(x) a(x)~
  \geqslant~\int_{\bar{B}_{\delta/2}} d^{\,n} x~a(x)~
  ~\geqslant~\epsilon~\mathrm{vol}(\bar{B}_{\delta/2})~>~0~,
 \]
 which is a contradiction.
 \qed
\end{proof}

To summarize, the space of stationary points of the action, denoted here
by~$\mathscr{S}$ and considered as a (formal) submanifold of the space
of field configurations~$\mathscr{C}$, can be described in several
equivalent ways: we have
\begin{equation} \label{eq:COVPHS1}
 \mathscr{S}~=~\big\{ \phi \in \mathscr{C}~|~\mbox{$\phi$ is
                      stationary point of the action} \big\}~,
\end{equation}
or
\begin{equation} \label{eq:COVPHS2}
 \mathscr{S}~
 =~\big\{ \phi \in \mathscr{C}~|~\mbox{$\phi$
          satisfies the equations of motion (\ref{eq:FDDW1})} \big\}~,
\end{equation}
or
\begin{equation} \label{eq:COVPHS3}
 \mathscr{S}~
 =~\big\{ \phi \in \mathscr{C}~|~
          \phi^* (i_{X_P}^{} \omega_{\mathcal{H}}^{}) = 0~~
          \mbox{for any projectable vector field $\, X = (X_P^{},X_M^{})$}
          \big\}~,
\end{equation}
or
\begin{equation} \label{eq:COVPHS4}
 \mathscr{S}~
 =~\big\{ \phi \in \mathscr{C}~|~
          \phi^* (i_X^{} \, \omega_{\mathcal{H}}^{}) = 0~~
          \mbox{for any vertical vector field $X$ on~$P$}
          \big\}~.
\end{equation}

\noindent
This space $\mathscr{S}$ plays a central role in the functional approach:
it is widely known under the name of \emph{covariant phase space}.
Moreover, given $\phi$ in~$\mathscr{S}$ and allowing $\, X = (X_P^{},%
X_M^{}) \,$ to run through the projectable vector fields, the (formal)
tangent space $T_\phi^{} \mathscr{S}$ to~$\mathscr{S}$ at~$\phi$ is
\begin{equation} \label{eq:TSCPHS1}
 T_\phi^{} \mathscr{S}~
 =~\big\{ \delta\phi \in T_\phi^{} \mathscr{C}~|~\mbox{$\delta\phi$ is
          infinitesimal stationary point of the action} \big\}~,
\end{equation}
or
\begin{equation} \label{eq:TSCPHS2}
 T_\phi^{} \mathscr{S}~
 =~\big\{ \delta\phi \in T_\phi^{} \mathscr{C}~|~\mbox{$\delta\phi$ satisfies
           the linearized equations of motion (\ref{eq:LDDW1})} \big\}~,
\end{equation}
or
\begin{equation} \label{eq:TSCPHS3}
 T_\phi^{} \mathscr{S}~
 =~\big\{ \delta_X^{} \phi \in T_\phi^{} \mathscr{C}~| \;
          \parbox{7.8cm}{\begin{center}
           $\phi^* (i_{Y_P}^{} L_{X_P}^{} \omega_{\mathcal{H}}^{}) = 0$ \\[1mm]
           \mbox{for any projectable vector field $\, Y = (Y_P^{},Y_M^{})$}
          \end{center}}
          \big\}~,
\end{equation}
or
\begin{equation} \label{eq:TSCPHS4}
 T_\phi^{} \mathscr{S}~
 =~\big\{ \delta_X^{} \phi \in T_\phi^{} \mathscr{C}~| \;
          \parbox{6.1cm}{\begin{center}
           $\phi^* (i_Y^{} L_{X_P}^{} \omega_{\mathcal{H}}^{}) = 0$ \\[1mm]
           \mbox{for any vertical vector field $Y$ on~$P$}
          \end{center}}
          \big\}~.
\end{equation}

\section{Locally hamiltonian vector fields}

The results of the previous section, in particular equations~%
(\ref{eq:TSCPHS3}) and~(\ref{eq:TSCPHS4}), provide strong moti\-vation
for studying projectable vector fields $\, X = (X_P^{},X_M^{}) \,$ on
(ordinary) multiphase space which are locally hamiltonian (that is,
such that $X_P^{}$ is locally hamiltonian), since they imply that each
of these provides a functional vector field $\mathslf{X}$ on covariant
phase space defined by a simple algebraic composition rule:
\begin{equation}
 \mathslf{X}\;\![\phi]~=~\delta_X^{} \phi
 \qquad \mbox{for $\, \phi \in \mathscr{S}$}~.
\end{equation}
As we shall see in the next section, this functional vector field is also
hamiltonian with respect to the natural symplectic form on covariant phase
space to be presented there.
But before doing so, we want to address the problem of classifying the
locally hamiltonian vector fields and, along the way, also the globally
hamiltonian and exact hamiltonian vector fields on multiphase space.
For the case of extended multiphase space, using the forms $\omega$
and~$\theta$, this classification has been available in the literature
for some time~\cite{FPR1,FPR2}, even for the general case of multivector
fields. \linebreak
But what is relevant here is the corresponding result for the case
of ordinary multiphase space, using the forms $\omega_{\mathcal{H}}^{}$
and $\theta_{\mathcal{H}}^{}$, for a given hamiltonian $\mathcal{H}$.
As we shall see, there is one basic \linebreak phenomenon common to
both cases, namely that when $n\!>\!1$, the condition of a vector
field to be locally hamiltonian imposes severe restrictions on the
momentum dependence of its components, forcing them to be at most
affine (linear + constant): this appears to be a characteristic
feature distinguishing mechanics ($n\!=\!1$) from field theory
($n\!>\!1$) and has been noticed early by various authors (see,
e.g., \cite{Gaw,Kij}) and repeatedly rediscovered later on
(see, e.g., \cite{FHR,FPR1,FPR2,Ka1,Ka2}).
However, despite all similarities, some of the details depend on which
of the two types of multiphase space we are working with, so it seems
worthwhile to give a full statement for both cases, for the sake of
comparison.

We begin with an explicit computation in local Darboux coordinates
$(x^\mu,q^i,p_i^\mu)$ for~$\vec{J}^{\circledast} E$ and $(x^\mu,q^i,p_i^\mu,p)$
for $J^{\circledstar} E$ induced by the choice of local coordinates $x^\mu$
for~$M$, $q^i$ for the typical fiber $Q$ of~$E$ and a local
trivialization of~$E$ over~$M$, supposing that we are given
a hamiltonian $\, \mathcal{H}: \vec{J}^{\circledast} E \longrightarrow
J^{\circledstar} E \,$ which we represent in the form $\, \mathcal{H}
= - \, H \, d^{\,n} x$, as usual.
Starting out from equations~(\ref{eq:MCANF1})-(\ref{eq:MSPF2b}),
we distinguish two cases:

\subsection*{Extended multiphase space}

Using equations~(\ref{eq:MCANF1})-(\ref{eq:MSPF01}) and
writing an arbitrary vector field $X$ on $J^{\circledstar} E$ as
\begin{equation} \label{eq:VFEMPS} 
 X~=~X_{\vphantom{i}}^\mu \, \frac{\partial}{\partial x^\mu} \, + \,
     X^i \, \frac{\partial}{\partial q^i} \, + \,
     X_i^\mu \, \frac{\partial}{\partial p\>\!_i^\mu} \, + \,
     X_0^{} \, \frac{\partial}{\partial p}
\end{equation}
we first note that $X$ will be projectable to~$E$ if and only if the
coefficients $X_{\vphantom{i}}^\mu$ and $X^i$ do not depend on the energy
variable~$p$ nor on the multimomentum variables~$p_k^\kappa$ and will be
projectable to~$M$ if and only if the coefficients $X_{\vphantom{i}}^\mu$
do not depend on the energy variable~$p$ nor on the multimomentum
variables~$p_k^\kappa$ nor on the position variables $q^k$.
Next, we compute
\begin{equation} \label{eq:CONTR1E} 
 \begin{array}{rcl}
  i_X^{} \omega \!\!
  &=&\!\! X_{}^\nu \; dq^i \,\smwedge\, dp\>\!_i^\mu \,\smwedge\,
          d^{\,n} x_{\mu\nu} \, - \,
          X_i^\mu \; dq^i \,\smwedge\, d^{\,n} x_\mu \, + \,
          X_{}^i \; dp\>\!_i^\mu \,\smwedge\, d^{\,n} x_\mu \\[2mm]
  & &\!\! \mbox{} + \,
          X_{}^\mu \; dp \;\smwedge\, d^{\,n} x_\mu \, - \,
          X_0^{} \; d^{\,n} x~,
 \end{array}
\end{equation}
and
\begin{equation} \label{eq:CONTR2E} 
 i_X^{} \theta~
 =~(p\>\!_i^\mu X_{}^i \, + \, p \, X_{}^\mu) \; d^{\,n} x_\mu \, - \,
   p\>\!_i^\mu X_{}^\nu \; dq^i \,\smwedge\, d^{\,n} x_{\mu\nu}~.
\end{equation}
Then we have
\begin{thm} \label{thm:HVFEMPS}~
 Any locally hamiltonian vector field on~$J^{\circledstar} E$ is
 projectable to~$E$.
 Moreover, given any vector field $X$ on~$J^{\circledstar} E$
 which is projectable to~$E$, $X$ is locally hamiltonian if and
 only if
 \begin{enumerate}
  \item If $N\!>\!1$, $X$ is also projectable to~$M$, i.e., the
        coefficients $X_{\vphantom{i}}^\mu$ do not depend on the position
        variables~$q^k$.
  \item The coefficients $X_i^\mu$ and $X_0^{}$ can be expressed in
        terms of the previous ones and of new coefficients $X_-^\mu$
        which, once again, do not depend on the energy variable~$p$
        nor on the multimomentum variables $p_k^\kappa$, according to
        \begin{equation} \label{eq:LHVFEMPS1}
         X_i^\mu~
         =~- \, p \; \frac{\partial X_{}^\mu}{\partial q^i} \, - \,
           p\>\!_j^\mu \; \frac{\partial X^j}{\partial q^i} \, + \,
           p\>\!_i^\nu \; \frac{\partial X_{}^\mu}{\partial x^\nu} \, - \,
           p\>\!_i^\mu \; \frac{\partial X_{}^\nu}{\partial x^\nu} \, + \,
           \frac{\partial X_-^\mu}{\partial q^i}~,
        \end{equation}
        \begin{equation} \label{eq:LHVFEMPS2}
         X_0^{}~
         =~- \, p \; \frac{\partial X_{}^\mu}{\partial x_{}^\mu} \, - \,
             p\>\!_i^\mu \; \frac{\partial X_{}^i}{\partial x^\mu} \, + \,
             \frac{\partial X_-^\mu}{\partial x^\mu}~.
        \end{equation}
 \end{enumerate}
 Finally, $X$ is exact hamiltonian if and only if, in addition, the
 coefficients $X_-^\mu$ vanish.
\end{thm}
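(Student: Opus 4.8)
Throughout, the statement is purely local, so I would work in the Darboux coordinates $(x^\mu,q^i,p_i^\mu,p)$ of the excerpt, using the explicit formulas (\ref{eq:MCANF1})--(\ref{eq:MSPF01}) for $\theta$ and $\omega$ and (\ref{eq:CONTR1E})--(\ref{eq:CONTR2E}) for $i_X\theta$ and $i_X\omega$; since $\omega$ is closed, ``$X$ locally hamiltonian'' means exactly $d(i_X\omega)=0$, so everything reduces to expanding this form in the obvious basis of monomials in $dx^\mu,dq^i,dp_i^\mu,dp$ and equating coefficients. The device that makes this manageable is the Euler (scaling) vector field $\Sigma=p_i^\mu\,\partial/\partial p_i^\mu+p\,\partial/\partial p$ of the vector bundle $J^{\circledstar}E\to E$: one checks directly that $L_\Sigma\omega=\omega$ (so $\omega$ has pure scaling degree $1$, eq.~(\ref{eq:MSFH})) and that $\theta=-i_\Sigma\omega$ (eq.~(\ref{eq:HMSFE})). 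Decomposing an arbitrary vector field $X$ into homogeneous parts $X=\sum_{k\geq-1}X_{(k)}$ for the grading induced by $\Sigma$ on the fibre variables $(p,p_i^\mu)$ --- the lower bound $k\geq-1$ coming from smoothness, since $\partial/\partial p_i^\mu$ and $\partial/\partial p$ carry degree $-1$ --- each $d(i_{X_{(k)}}\omega)$ is homogeneous of degree $k+1$, so the single equation $d(i_X\omega)=0$ decouples into $d(i_{X_{(k)}}\omega)=0$ for every $k$, which I would analyse one degree at a time. (I would take $n>1$; in the mechanical case $n=1$ the twice-contracted forms $d^nx_{\mu\nu}$ vanish identically, the argument below degenerates, and the first assertion as literally stated appears too strong, so I read that case as implicitly excluded.)

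For $k\geq1$ the conclusion is $X_{(k)}=0$. Indeed, the monomials of $d(i_{X_{(k)}}\omega)$ carrying two momentum/energy differentials can only be produced by differentiating the momentum dependence of the $\partial/\partial x^\mu$- and $\partial/\partial q^i$-components of $X_{(k)}$, and for $n>1$ a short matching argument shows they have no partner to cancel against; hence those components have no part of positive scaling degree, which is precisely the assertion that every locally hamiltonian $X$ is projectable to $E$ (its coefficients $X^\mu,X^i$ then being independent of $p$ and of the $p_k^\kappa$, by the criterion recalled just before (\ref{eq:CONTR1E})). With $X^\mu,X^i$ of degree $0$, the residual equation for $k\geq1$ involves only $i_{X_{(k)}}\omega=-X_i^\mu\,dq^i\wedge d^nx_\mu-X_0\,d^nx$ with $X_i^\mu,X_0$ of degree $k+1\geq2$, and matching the one-momentum-differential monomials forces $X_i^\mu,X_0$ to be momentum-independent, hence, being homogeneous of positive degree, zero. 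So $X=X_{(0)}+X_{(-1)}$. For $k=-1$ we have $i_{X_{(-1)}}\omega=-c_i^\mu\,dq^i\wedge d^nx_\mu-c_0\,d^nx$ with $c_i^\mu,c_0$ functions of $(x,q)$ only; $d(i_{X_{(-1)}}\omega)=0$ yields $\partial c_i^\mu/\partial q^j=\partial c_j^\mu/\partial q^i$, so by the Poincar\'e lemma applied in the $q$-variables there are local functions $X_-^\mu(x,q)$ with $c_i^\mu=\partial X_-^\mu/\partial q^i$, and the remaining equation then gives $c_0=\partial X_-^\mu/\partial x^\mu$ up to the freedom of adding to $X_-^\mu$ a divergence-free function of $x$ --- these are exactly the $X_-^\mu$-terms in (\ref{eq:LHVFEMPS1})--(\ref{eq:LHVFEMPS2}). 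For $k=0$, using projectability to $E$, I would read off $d(i_{X_{(0)}}\omega)=0$: the monomials built from $d^nx_{\mu\nu}$ force (when $n>1$) $\partial X^\mu/\partial q^k=0$, i.e.\ projectability to $M$ (item~1; the $N$ of the statement being our $n$), while matching the monomials of type $dq\wedge dp\wedge d^nx$, $dp\wedge d^nx$ and $dq\wedge d^nx$ forces the momentum-linear parts of $X_i^\mu$ and $X_0$ to equal the first four terms of (\ref{eq:LHVFEMPS1}) and the first two of (\ref{eq:LHVFEMPS2}). The converse --- that any $X$ built from free data $X^\mu,X^i,X_-^\mu$ of the stated form satisfies $d(i_X\omega)=0$ --- is a routine, if tedious, substitution.

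For the \emph{exact hamiltonian} claim I would not recompute anything. Since $\omega=-d\theta$ and $\theta=-i_\Sigma\omega$, for any locally hamiltonian $X$ (so $L_X\omega=0$) the identity $L_X i_\Sigma=i_\Sigma L_X+i_{[X,\Sigma]}$ gives $L_X\theta=-L_X(i_\Sigma\omega)=-i_\Sigma(L_X\omega)-i_{[X,\Sigma]}\omega=-i_{[X,\Sigma]}\omega$, which by non-degeneracy of $\omega$ vanishes if and only if $[X,\Sigma]=0$, i.e.\ if and only if $X$ is homogeneous of scaling degree $0$. By the decomposition above this means $X_{(-1)}=0$, i.e.\ $c_i^\mu=0$ and $c_0=0$, i.e.\ $\partial X_-^\mu/\partial q^i=0$ and $\partial X_-^\mu/\partial x^\mu=0$ --- that is, modulo the irrelevant ambiguity in their definition, the $X_-^\mu$ vanish. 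This is the final assertion.

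The main obstacle is the coefficient bookkeeping for $d(i_{X_{(0)}}\omega)$ in the second paragraph: keeping the signs straight while enumerating all Darboux monomials, and pinning down that the entire $n>1$ versus $n=1$ dichotomy lives in the twice-contracted volume forms $d^nx_{\mu\nu}$ --- whose vanishing for $n=1$ not only removes the constraint of item~1 but, more seriously, breaks the cancellation argument behind ``projectable to $E$'', which is why the hypothesis $n>1$ is needed. A secondary point requiring care is making the Poincar\'e step precise: it is applied in the $q$-fibres, over each point of the $x$-space, which is legitimate because everything is local, and it is this step that produces both the functions $X_-^\mu$ and the harmless ambiguity appearing in the exact-hamiltonian part of the statement.
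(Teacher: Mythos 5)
Your argument is correct in substance, but note that the paper does not actually prove this theorem: it declares the proof a ``brute force'' term-by-term analysis of $L_X^{}\omega$ and delegates it to \cite{FPR2}, carrying out such a computation in full only for the analogous Theorem~\ref{thm:HVFOMPS} on ordinary multiphase space. Your organization by scaling degree with respect to $\Sigma = p_i^\mu\,\partial/\partial p_i^\mu + p\,\partial/\partial p$ is exactly the device the paper advertises after equation~(\ref{eq:HMSFE}) but never deploys, and it is the method of the cited reference. It buys two things: a conceptual reason why the momentum dependence must be affine (only scaling degrees $0$ and $-1$ can survive), and a genuinely cleaner proof of the exact-hamiltonian clause via $L_X^{}\theta = -\,i_{[X,\Sigma]}^{}\omega$ plus non-degeneracy, where the paper (for Theorem~\ref{thm:HVFOMPS}) recomputes $L_X^{}\theta_{\mathcal H}^{}$ from scratch. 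Your reading that $n>1$ is implicit is also right: for $n=1$ the form $\omega$ is the ordinary symplectic form on extended phase space and the projectability assertion fails.

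Two points need care. First, the decomposition $X=\sum_k X_{(k)}$ of a merely smooth vector field into $\Sigma$-homogeneous pieces is only a formal Taylor expansion along the zero section, so ``the equation decouples degree by degree'' cannot be taken literally (a coefficient flat in the fibre variables has trivial homogeneous decomposition without vanishing). What actually carries the argument is what you do in practice anyway: equating to zero the coefficient of each basis monomial of $d(i_X^{}\omega)$ yields \emph{pointwise} first-order constraints such as $\partial X^\mu/\partial p_k^\kappa=0$ and $\partial^2 X_i^\mu/\partial p\,\partial p=0$, which then integrate to the affine formulas; the grading should be presented as bookkeeping for these monomials, not as an eigenspace decomposition of $X$. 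Second, the two-momentum-differential monomials do not all ``lack a partner'': the coefficient of $dp\wedge dp_i^\mu\wedge d^{\,n}x_\nu$ receives contributions from both $\partial X^i/\partial p$ and $\partial X^\nu/\partial p_i^\mu$, giving at first only the relation $\partial X^\nu/\partial p_i^\mu=\delta^\nu_\mu\,\partial X^i/\partial p$, and one must combine this with the remaining two-momentum types to close the argument. It is precisely this extra supply of equations coming from the energy variable $p$ that makes projectability to~$E$ hold here \emph{without} the $N\!=\!1$, $n\!=\!2$ exception of Theorem~\ref{thm:HVFOMPS} --- a point your sketch glosses over but which does work out.
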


\begin{proof}
 The proof is carried out by ``brute force'' computation and a term by
 term analysis of the coefficents that appear in the Lie derivative of~%
 $\omega$ along~$X$: it will be omitted since an explicit proof of a
 more general statement along these lines can be found in~\cite{FPR2}.
\qed
\end{proof}

\subsection*{Ordinary multiphase space}

Using equations~(\ref{eq:MCANF2})-(\ref{eq:MSPF2b}) and
writing an arbitrary vector field~$X$ on~$\vec{J}^{\circledast} E$ as
\begin{equation} \label{eq:VFOMPS}
 X ~=~ X_{\vphantom{i}}^\mu \, \frac{\partial}{\partial x^\mu} \, + \,
       X^i \, \frac{\partial}{\partial q^i} \, + \,
       X_i^\mu \, \frac{\partial}{\partial p\>\!_i^\mu}~,
\end{equation}
we first note that $X$ will be projectable to~$E$ if and only if
the coefficients $X_{\vphantom{i}}^\mu$ and $X^i$ do not depend on the
multimomentum variables~$p_k^\kappa$ and will be projectable to~$M$
if and only if the coefficients $X_{\vphantom{i}}^\mu$ do not depend
on the multimomentum variables~$p_k^\kappa$ nor on the position
variables~$q^k$.
Next, we compute
\begin{equation} \label{eq:CONTR1O} 
 \begin{array}{rcl}
  i_X^{} \omega_{\mathcal{H}}^{} \!\!
  &=&\!\! {\displaystyle X^\nu \;
          dq^i \,\smwedge\, dp\>\!_i^\mu \,\smwedge\, d^{\,n} x_{\mu\nu} \; - \,
          \left( X_i^\mu \, + \, X_{\vphantom{i}}^\mu \;
                 \frac{\partial H}{\partial q^i} \right) \,
          dq^i \,\smwedge\, d^{\,n} x_\mu} \\[5mm]
  & &\!\! {\displaystyle \mbox{} + \, \left(
          X^i \delta_\mu^\nu \; - \;
          X^\nu \; \frac{\partial H}{\partial p\>\!_i^\mu} \right) \,
          dp\>\!_i^\mu \,\smwedge\, d^{\,n} x_\nu \; + \,
          \left( X^i \; \frac{\partial H}{\partial q^i} \; + \;
                 X_i^\mu \; \frac{\partial H}{\partial p\>\!_i^\mu} \right) \,
          d^{\,n} x}~,
 \end{array}
\end{equation}
and
\begin{equation} \label{eq:CONTR2O} 
 i_X^{} \theta_{\mathcal{H}}^{}~
 =~(p\>\!_i^\mu X^i \, - \, H X_{\vphantom{i}}^\mu) \; d^{\,n} x_\mu \, - \,
   p\>\!_i^\mu X^\nu \; dq^i \,\smwedge\, d^{\,n} x_{\mu\nu}~.
\end{equation}
Then we have
\begin{thm} \label{thm:HVFOMPS}~
 Any locally hamiltonian vector field on~$\vec{J}^{\circledast} E$ is
 projectable to~$E$, except possibly when $N\!=\!1$ and $n\!=\!2$,
 and any exact hamiltonian vector field on~$\vec{J}^{\circledast} E$
 is projectable to~$E$.
 Moreover, given any vector field $X$ on~$\vec{J}^{\circledast} E$
 which is projectable to~$E$, $X$ is locally hamiltonian if and
 only if
 \begin{enumerate}
  \item If $N\!>\!1$, $X$ is also projectable to~$M$, i.e., the
        coefficients $X_{\vphantom{i}}^\mu$ do not depend on the position
        variables~$q^k$.
  \item The coefficients $X_i^\mu$ can be expressed in terms of
        the previous ones and of new coefficients $X_-^\mu$ which,
        once again, do not depend on the multimomentum variables
        $p_k^\kappa$, according to\,%
        \footnote{Note that the first term in equation~(\ref{eq:LHVFOMPS1})
        is absent as soon as $N\!>\!1$.}
        \begin{eqnarray} \label{eq:LHVFOMPS1}
         X_i^\mu \!\!
         &=&\!\! H \; \frac{\partial X^\mu}{\partial q^i} \, - \,
                 p\>\!_j^\mu \; \frac{\partial X^j}{\partial q^i} \, + \,
                 p\>\!_i^\nu \; \frac{\partial X^\mu}{\partial x^\nu} \, - \,
                 p\>\!_i^\mu \; \frac{\partial X^\nu}{\partial x^\nu} \, + \,
                 \frac{\partial X_-^\mu}{\partial q^i}~.
        \end{eqnarray}
  \item The coefficients $X_{\vphantom{i}}^\mu$, $X^i$ and $X_-^\mu$
        satisfy the compatibility condition
        \begin{equation} \label{eq:LHVFOMPS2}
         \frac{\partial H}{\partial x^\mu} \, X^\mu \, + \,
         \frac{\partial H}{\partial q^i} \, X^i \, + \,
         \frac{\partial H}{\partial p\>\!_i^\mu} \, X_i^\mu~
         =~- \, H \; \frac{\partial X^\mu}{\partial x^\mu} \, + \,
                p\>\!_i^\mu \; \frac{\partial X^i}{\partial x^\mu} \, - \,
                \frac{\partial X_-^\mu}{\partial x^\mu}~.
        \end{equation}
 \end{enumerate}
 Finally, $X$ is exact hamiltonian if and only if, in addition,
 the coefficients $X_-^\mu$ vanish.
\end{thm}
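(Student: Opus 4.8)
The plan is to reduce everything to Cartan's identity $L_X^{}\alpha = d\,i_X^{}\alpha + i_X^{}\,d\alpha$ and then read off the result from the coordinate expressions~(\ref{eq:CONTR1O}) and~(\ref{eq:CONTR2O}). Since $\omega_{\mathcal{H}}^{} = -\,d\theta_{\mathcal{H}}^{}$ is closed, $X$ is locally hamiltonian if and only if $d\,(i_X^{}\omega_{\mathcal{H}}^{}) = 0$, and $X$ is exact hamiltonian if and only if $L_X^{}\theta_{\mathcal{H}}^{} = d\,(i_X^{}\theta_{\mathcal{H}}^{}) - i_X^{}\omega_{\mathcal{H}}^{} = 0$, i.e.\ $d\,(i_X^{}\theta_{\mathcal{H}}^{}) = i_X^{}\omega_{\mathcal{H}}^{}$. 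So the theorem amounts to computing the exterior derivative of the $n$-form in~(\ref{eq:CONTR1O}) and setting it to zero, and then comparing with the exterior derivative of the $(n-1)$-form in~(\ref{eq:CONTR2O}).

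For the locally hamiltonian part I would differentiate each of the four groups of terms in~(\ref{eq:CONTR1O}) and write, for every coefficient $F$, $\,dF = \partial_\mu F\,dx^\mu + (\partial F/\partial q^k)\,dq^k + (\partial F/\partial p_k^\kappa)\,dp_k^\kappa$. The $dx^\mu$-pieces collapse against the contracted volume forms via $dx^\lambda\smwedge d^{\,n}x_\mu = \delta^\lambda_\mu\,d^{\,n}x$ and $dx^\lambda\smwedge d^{\,n}x_{\mu\nu} = \delta^\lambda_\mu\,d^{\,n}x_\nu - \delta^\lambda_\nu\,d^{\,n}x_\mu$, while the $dq^k$- and $dp_k^\kappa$-pieces survive; one then sets to zero the coefficient of each independent $(n+1)$-form. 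Sorting the resulting equations by form type: the terms carrying two factors $dp$ together with $d^{\,n}x_{\mu\nu}$ (or with $d^{\,n}x_\mu$) force $\partial X^\nu/\partial p_k^\kappa = 0$ and $\partial X^i/\partial p_k^\kappa = 0$, i.e.\ projectability to~$E$, except precisely when $N\!=\!1$ and $n\!=\!2$, where $dq^i\smwedge dq^k$ and the antisymmetrisation in $(\mu,\nu)$ both degenerate so that only one ``divergence-type'' combination of the $\partial X^\nu/\partial p_k^\kappa$ is constrained; assuming projectability to $E$, the terms carrying $dq^i\smwedge dq^k$ force, when $N\!>\!1$, the vanishing of $\partial X^\mu/\partial q^k$, i.e.\ projectability to~$M$; the terms carrying one $dq$, one $dp$ and $d^{\,n}x_\mu$ together with those carrying $dp\smwedge dp\smwedge d^{\,n}x_\mu$ fix the $p$-dependence of $X_i^\mu$, determining it up to a term whose $p$-derivatives vanish, which when solved gives exactly formula~(\ref{eq:LHVFOMPS1}) with the residual freedom absorbed into functions $X_-^\mu$ of $(x^\mu,q^i)$ alone; and finally the coefficients of $dq^k\smwedge d^{\,n}x$ and $dp_k^\kappa\smwedge d^{\,n}x$, once~(\ref{eq:LHVFOMPS1}) is inserted, reduce to the single scalar compatibility condition~(\ref{eq:LHVFOMPS2}). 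The converse is then a matter of substituting~(\ref{eq:LHVFOMPS1})--(\ref{eq:LHVFOMPS2}) back and verifying that $d\,(i_X^{}\omega_{\mathcal{H}}^{})$ vanishes.

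For the exact hamiltonian part I would compute $d\,(i_X^{}\theta_{\mathcal{H}}^{})$ from~(\ref{eq:CONTR2O}) and impose $d\,(i_X^{}\theta_{\mathcal{H}}^{}) = i_X^{}\omega_{\mathcal{H}}^{}$. Term by term this reproduces all the locally hamiltonian equations together with the extra requirement that the functions $X_-^\mu$ appearing in~(\ref{eq:LHVFOMPS1}) vanish identically; moreover, in the exceptional case $N\!=\!1$, $n\!=\!2$, the stronger equation $L_X^{}\theta_{\mathcal{H}}^{} = 0$ also kills the divergence-type freedom left open above, which is why exact hamiltonian vector fields are projectable to~$E$ without exception. (One could instead try to deduce the whole theorem from Theorem~\ref{thm:HVFEMPS} using $\theta_{\mathcal{H}}^{} = \mathcal{H}^*\theta$ and $\omega_{\mathcal{H}}^{} = \mathcal{H}^*\omega$, but a section does not push vector fields forward, so the correspondence between vector fields on $\vec{J}^{\circledast}E$ and on $J^{\circledstar}E$ is not clean enough; the direct computation is safer.)

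The only real obstacle is the bookkeeping: $d\,(i_X^{}\omega_{\mathcal{H}}^{})$ is an $(n+1)$-form with a sizeable number of independent components, and one has to be scrupulous about the signs produced when reordering $dx$-, $dq$- and $dp$-factors past the contracted volume forms, and about the low-dimensional degeneracies ($N\!=\!1$, or $n\!\le\!2$) that account for the exceptions in the statement. A secondary subtlety is isolating the precise residual freedom $X_-^\mu$ when solving for $X_i^\mu$, checking that it really is unconstrained by the locally hamiltonian equations but forced to zero by exactness, and being careful not to replace ``$X_-^\mu = 0$'' by the weaker statement that $X_-^\mu\,d^{\,n}x_\mu$ is closed.
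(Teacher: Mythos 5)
Your proposal follows essentially the same route as the paper's own proof: a direct coordinate computation of $L_X^{}\omega_{\mathcal{H}}^{} = d\,(i_X^{}\omega_{\mathcal{H}}^{})$ from~(\ref{eq:CONTR1O}), a term-by-term analysis sorted by independent $(n\!+\!1)$-form type (yielding projectability, the formula for $X_i^\mu$ up to the residual $X_-^\mu$, and the compatibility condition), and a separate computation of $L_X^{}\theta_{\mathcal{H}}^{}$ for the exact case. The subtleties you flag, in particular the careful isolation of the residual freedom $X_-^\mu$ (which in the paper requires an intermediate integration of the symmetry condition from the $dq\smwedge dq\smwedge d^{\,n}x_\mu$ term and a final redefinition absorbing a divergence), are exactly the ones the paper handles, so the plan is sound.
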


\noindent
Maybe somewhat surprisingly, the case where $N\!=\!1$ and $n\!=\!2$,
concerning the theory of a single real scalar field in two space-time
dimensions, is somewhat exceptional in that it allows for locally
hamiltonian vector fields which fail to be projectable and are not
covered by the above classification theorem; we shall address this
question in Remark~\ref{ob:EXC} below.

\begin{proof}
 As in the case of Theorem~1, the proof is carried out by ``brute force''
 computation.
 First, we apply the exterior derivative to equation~(\ref{eq:CONTR1O})
 and collect terms to get
 \vspace{2mm}
 \begin{eqnarray*}
  L_X^{} \, \omega_{\mathcal{H}}^{} \!\!
  &=&\!\!  \bigg( \frac{\partial}{\partial x^\mu}
           \left( X_i^\mu \, + \,
                  X^\mu \; \frac{\partial H}{\partial q^i} \right) \; + \;
           \frac{\partial}{\partial q^i} \,
           \bigg( X^j \; \frac{\partial H}{\partial q^j} \; + \;
                  X_j^\nu \; \frac{\partial H}{\partial p\>\!_j^\nu} \bigg)
           \bigg) \; dq^i \,\smwedge\, d^{\,n} x                         \\[2mm]
  & & - \; \bigg( \frac{\partial}{\partial x^\nu}
           \left( \delta_\mu^\nu \, X^i \; - \;
                  X^\nu \; \frac{\partial H}{\partial p\>\!_i^\mu}
                  \right) \; - \;
           \frac{\partial}{\partial p\>\!_i^\mu} \,
           \bigg( X^j \; \frac{\partial H}{\partial q^j} \; + \;
                  X_j^\nu \; \frac{\partial H}{\partial p\>\!_j^\nu} \bigg)
           \bigg) \; dp\>\!_i^\mu \,\smwedge\, d^{\,n} x                 \\[2mm]
  & & + \; \bigg( \delta_l^k \, \bigg( \delta_\kappa^\sigma \,
           \frac{\partial X^\tau}{\partial x^\tau} \, - \,
           \frac{\partial X^\sigma}{\partial x^\kappa} \bigg)            \\[2mm]
  & & \qquad + \; \frac{\partial}{\partial p_k^\kappa}
           \left( X_l^\sigma \, + \,
                  X^\sigma \; \frac{\partial H}{\partial q^l} \right) \; + \;
           \frac{\partial}{\partial q^l}
           \left( \delta_\kappa^\sigma \, X^k \; - \;
                  X^\sigma \; \frac{\partial H}{\partial p_k^\kappa} \right)
           \bigg) \; dq^l \,\smwedge\,  dp\>\!_k^\kappa \,\smwedge\,
           d^{\,n} x_{\sigma}^{}                                         \\[2mm]
  & & - \; \frac{\partial}{\partial q^j}
           \left( X_i^\mu \, + \,
                  X^\mu \; \frac{\partial H}{\partial q^i} \right) \,
           dq^j \,\smwedge\, dq^i \,\smwedge\, d^{\,n} x_\mu^{}          \\[2mm]
  & & + \; \frac{\partial}{\partial p_k^\kappa}
           \left( \delta_\lambda^\sigma \, X^l \; - \; X^\sigma \;
                  \frac{\partial H}{\partial p\>\!_l^\lambda} \right) \,
           dp_k^\kappa \,\smwedge\, dp\>\!_l^\lambda \,\smwedge\,
           d^{\,n} x_\sigma^{}                                           \\[2mm]
  & & + \; \frac{\partial X_{}^\sigma}{\partial q^k} \;
           dq^k \,\smwedge\,  dq^l \,\smwedge\, dp\>\!_l^\lambda \,\smwedge\,
           d^{\,n} x_{\lambda\sigma}^{}                     \hspace{1cm} \\[2mm]
  & & - \; \frac{\partial X^\sigma}{\partial p_k^\kappa} \;
           dq^l \,\smwedge\, dp_k^\kappa \,\smwedge\,
           dp\>\!_l^\lambda \,\smwedge\, d^{\,n} x_{\lambda\sigma}^{}~.
 \rule[-6mm]{0mm}{8mm}
 \end{eqnarray*}
 Numbering the terms in this equation from $1$ to $7$, we begin by analyzing
 terms no.\ 7, 5 and 6. \linebreak
 Obviously, when $X$ is projectable to~$E$ and the coefficients $X^\mu$
 satisfy the condition stated in item 1.\ of the theorem, these terms
 vanish identically (term no.\ 6 is absent when $N\!=\!1$), so what
 we need to analyze is the converse statement.
 \begin{itemize}
  \item Term No.\ 7: For any choice of indices $\, i,j,m,\mu,\nu \,$
        and mutually different indices $\, \rho_1,\ldots,\rho_{n-2}$,
        contracting $L_X^{} \omega_{\mathcal{H}}^{}$ with the multivector
        field $\, \partial_m^{} \,\smwedge\, \partial_\mu^{\,i}
        \,\smwedge\, \partial_{\nu\vphantom{j}}^{\,j} \,\smwedge\,
        \partial_{\rho_1}^{} \,\smwedge\, \ldots \,\smwedge\,
        \partial_{\rho_{n-2}}^{}$ \linebreak gives the relation
        \begin{equation} \label{eq:LHVFOMPS3}
         \delta_m^j \; \frac{\partial X_{}^\sigma}{\partial p\>\!_i^\mu} \,
         \epsilon_{\nu \sigma \rho_1 \ldots \rho_{n-2}}^{}~
         =~\delta_m^i \; \frac{\partial X_{}^\sigma}{\partial p\>\!_j^\nu} \,
           \epsilon_{\mu \sigma \rho_1 \ldots \rho_{n-2}}^{}~.
        \end{equation}
        Now if to begin with, we fix only the indices $i$ and $\mu$,
        together with some other index $\rho$, \linebreak we can always
        choose the remaining free indices in this equation to be such
        that $m\!=\!j$ and $\, \nu,\rho_1,\ldots,\rho_{n-2} \,$ are all
        mutually different and $\neq \rho \,$: this reduces the lhs to
        the expression $\, \pm \, \partial X^\rho / \partial p\>\!_i^\mu$,
        while the rhs vanishes if we take $m\!\neq\!i$, which is possible
        as soon as $N\!>\!1$.
        Thus we conclude that
        \vspace{2mm}
        \[
         \frac{\partial X^\rho}{\partial p\>\!_i^\mu}~=~0~,
        \]
        except perhaps when $N\!=\!1$.
        But even when $N\!=\!1$, where equation~(\ref{eq:LHVFOMPS3})
        reduces to\,%
        \footnote{In the case $N\!=\!1$, the ``internal'' index on the
        position and multimomentum variables can only assume a single
        fixed value, say $i$, and so we could in principle just omit
        it, or else repeat it as often as we like.}
        \begin{equation} \label{eq:LHVFOMPS4}
         \frac{\partial X_{}^\sigma}{\partial p\>\!_i^\mu} \,
         \epsilon_{\nu \sigma \rho_1 \ldots \rho_{n-2}}^{}~
         =~\frac{\partial X_{}^\sigma}{\partial p\>\!_i^\nu} \,
           \epsilon_{\mu \sigma \rho_1 \ldots \rho_{n-2}}^{}~,
        \end{equation}
        this conclusion remains valid as long as $n\!>\!2$.
        Indeed, if we choose $\, \nu,\rho_1,\ldots,\rho_{n-2} \,$
        as above (all mutually different and $\neq \rho$), then
        if $\rho\!\neq\!\mu$, the lhs reduces to the expression
        $\, \pm \, \partial X^\rho / \partial p\>\!_i^\mu$, as before,
        while the rhs vanishes since in that case, $\mu$ must appear
        among the indices $\, \rho_1,\ldots,\rho_{n-2}$, whereas if
        $\rho=\mu$, the equation assumes the form
        \[
         \frac{\partial X^\mu}{\partial p\>\!_i^\mu} \,
         \epsilon_{\nu \mu \rho_1 \ldots \rho_{n-2}}^{}~
         =~\frac{\partial X^\nu}{\partial p\>\!_i^\nu} \,
           \epsilon_{\mu \nu \rho_1 \ldots \rho_{n-2}}^{}
         \qquad \mbox{(no sum over $\mu$ and  $\nu$)}~,
        \]
        which implies
        \[
         \frac{\partial X^\mu}{\partial p\>\!_i^\mu}~
         = \, - \, \frac{\partial X^\nu}{\partial p\>\!_i^\nu}~
         =~\frac{\partial X^\kappa}{\partial p\>\!_i^\kappa}~
         = \, - \, \frac{\partial X^\mu}{\partial p\>\!_i^\mu}
         \qquad \mbox{(no sum over $\mu$, $\nu$ and $\kappa$)}~,
        \]
        for mutually different $\, \mu,\nu,\kappa$.
        On the other hand, when $N\!=\!1$ and $n\!=\!2$, these arguments
        fail, and the only conclusion that can be drawn from equation~%
        (\ref{eq:LHVFOMPS4}) is that the following divergence must vanish:%
        \footnotemark[8]
        \begin{equation} \label{eq:N1n2T71}
         \frac{\partial X^\mu}{\partial p\>\!_i^\mu}~=~0~.
        \end{equation}
  \item Term No.\ 5: For any choice of indices $\, i,j,\mu,\nu \,$
        and mutually different indices $\, \rho_1,\ldots,\rho_{n-1}$,
        contracting $L_X^{} \omega_{\mathcal{H}}^{}$ with the multivector field
        $\, \partial_\mu^{\,i} \,\smwedge\, \partial_{\nu\vphantom{j}}^{\,j}
        \,\smwedge\, \partial_{\rho_1}^{} \,\smwedge\, \ldots \,\smwedge\,
        \partial_{\rho_{n-1}}^{} \,$ gives the relation
        \[
         \frac{\partial}{\partial p\>\!_i^\mu}
         \left( \delta_\nu^\sigma \, X^j \; - \;
                X^\sigma \; \frac{\partial H}{\partial p\>\!_j^\nu} \right)
         \epsilon_{\sigma \rho_1 \ldots \rho_{n-1}}^{}~
         =~\frac{\partial}{\partial p\>\!_j^\nu}
           \left( \delta_\mu^\sigma \, X^i \; - \;
                  X^\sigma \; \frac{\partial H}{\partial p\>\!_i^\mu} \right)
           \epsilon_{\sigma \rho_1 \ldots \rho_{n-1}}^{}~,
        \]
        so that, for any choice of indices $\, i,j,\mu,\nu,\rho$, taking
        $\, \rho_1,\ldots,\rho_{n-1} \,$ to be $\neq \rho$ shows that
        \begin{equation} \label{eq:LHVFOMPS5}
         \delta_\nu^\rho \,
         \frac{\partial X^j}{\partial p\>\!_i^\mu} \; - \;
         \frac{\partial X^\rho}{\partial p\>\!_i^\mu} \;
         \frac{\partial H}{\partial p\>\!_j^\nu}~
         =~\delta_\mu^\rho \,
           \frac{\partial X^i}{\partial p\>\!_j^\nu} \; - \;
           \frac{\partial X^\rho}{\partial p\>\!_j^\nu} \;
           \frac{\partial H}{\partial p\>\!_i^\mu}~.
        \end{equation}
        When $N\!>\!1$ or $n\!>\!2$, we can use the result of the previous
        item to conclude that
        \[
         \delta_\nu^\rho \, \frac{\partial X^j}{\partial p\>\!_i^\mu}~
         =~\delta_\mu^\rho \, \frac{\partial X^i}{\partial p\>\!_j^\nu}~.
        \]
        Now if to begin with, we fix only the indices $i,j$ and $\mu$,
        we can always choose the other free indices $\nu$ and~$\rho$ in
        this equation to be equal and $\neq \mu$: this reduces the lhs
        to the expression $\, \pm \, \partial X^j / \partial p\>\!_i^\mu$,
        while the rhs vanishes.
        Thus we conclude that
        \[
         \frac{\partial X^j}{\partial p\>\!_i^\mu}~=~0~.
        \]
        On the other hand, when $N\!=\!1$ and $n\!=\!2$, equation~%
        (\ref{eq:LHVFOMPS5}) reduces to a simple statement of symmetry:%
        \footnotemark[8]
        \begin{equation} \label{eq:N1n2T51}
         \epsilon^{\mu\nu} \left( \delta_\nu^\rho \,
         \frac{\partial X^i}{\partial p\>\!_i^\mu} \; - \;
         \frac{\partial X^\rho}{\partial p\>\!_i^\mu} \;
         \frac{\partial H}{\partial p\>\!_i^\nu} \right) =~0~.
        \end{equation}
  \item Term No.\ 6: For any choice of indices $\, i,j,m,\nu \,$ and
        mutually different indices $\, \rho_1,\ldots,\rho_{n-2}$,
        contracting $L_X^{} \omega_{\mathcal{H}}^{}$ with the multivector
        field $\, \partial_i^{} \,\smwedge\, \partial_j^{} \,\smwedge\,
        \partial_{\nu\vphantom{j}}^m \,\smwedge\, \partial_{\rho_1}^{}
        \,\smwedge\, \ldots \,\smwedge\, \partial_{\rho_{n-2}}^{} \,$
        gives the \linebreak relation
        \[
         \delta_j^m \; \frac{\partial X^\sigma}{\partial q^i} \,
         \epsilon_{\nu \sigma \rho_1 \ldots \rho_{n-2}}^{}~
         =~\delta_i^m \; \frac{\partial X^\sigma}{\partial q^j} \,
         \epsilon_{\nu \sigma \rho_1 \ldots \rho_{n-2}}^{}~.
        \]
        As before, this implies
        \[
         \frac{\partial X^\rho}{\partial q^i}~=~0~,
        \]
        except when $N\!=\!1$: in this case, the whole term vanishes
        identically and no conclusion can be drawn.
 \end{itemize}
 To proceed further, we write down the equations obtained from the
 remaining terms:
 \begin{itemize}
  \item Term No.\ 1:
        \begin{equation} \label{eq:LHAMMVF11}
         \frac{\partial}{\partial q^i} \,
         \bigg( X^j \; \frac{\partial H}{\partial q^j} \; + \;
                X_j^\nu \; \frac{\partial H}{\partial p\>\!_j^\nu} \, \bigg)~
         = \, - \, \frac{\partial}{\partial x^\mu}
              \left( X_i^\mu \, + \,
                     X^\mu \; \frac{\partial H}{\partial q^i} \right) .
        \end{equation}
  \item Term No.\ 2:
        \begin{equation} \label{eq:LHAMMVF12}
         \frac{\partial}{\partial p\>\!_i^\mu} \,
         \bigg( X^j \; \frac{\partial H}{\partial q^j} \; + \;
                X_j^\nu \; \frac{\partial H}{\partial p\>\!_j^\nu} \bigg)~
         =~\frac{\partial}{\partial x^\nu}
           \left( \delta_\mu^\nu \, X^i \; - \;
                  X^\nu \; \frac{\partial H}{\partial p\>\!_i^\mu} \right) .
        \end{equation}
  \item Term No.\ 3:
        \begin{equation} \label{eq:LHAMMVF13}
         \begin{array}{rcl}
          {\displaystyle
           \frac{\partial}{\partial p\>\!_j^\nu}
           \left( X_i^\mu \, + \,
                  X^\mu \; \frac{\partial H}{\partial q^i} \right) \!\!}
          &=&\!\! {\displaystyle \mbox{} - \,
                   \frac{\partial}{\partial q^i}
                   \left( \delta_\nu^\mu \, X^j \; - \;
                          X^\mu \; \frac{\partial H}{\partial p\>\!_j^\nu}
                          \right)} \\[6mm]
          & &\!\! {\displaystyle \mbox{} + \,
                   \delta_i^j \;
                   \frac{\partial X^\mu}{\partial x^\nu} \, - \,
                   \delta_\nu^\mu \, \delta_i^j \;
                   \frac{\partial X^\kappa}{\partial x^\kappa}~.}
         \end{array}
        \end{equation}
  \item Term No.\ 4:
        \begin{equation} \label{eq:LHAMMVF14}
         \frac{\partial X_{i\vphantom{j}}^\mu}{\partial q^j}~
         =~\frac{\partial X_j^\mu}{\partial q^i}~.
        \end{equation}
 \end{itemize}
 Assuming that $X$ is projectable to~$E$, we observe that equation~%
 (\ref{eq:LHAMMVF13}) can be integrated directly to conclude that
 \[
  X_i^\mu~=~H \; \frac{\partial X^\mu}{\partial q^i} \, - \,
            p\>\!_j^\mu \; \frac{\partial X^j}{\partial q^i} \, + \,
            p\>\!_i^\nu \; \frac{\partial X^\mu}{\partial x^\nu} \, - \,
            p\>\!_i^\mu \; \frac{\partial X^\kappa}
                                {\partial x^\kappa} \, + \, Y_i^\mu~,
 \]
 where the $Y_i^\mu$ are local functions on~$E$ which, once again,
 are independent of the multimomentum variables $p_k^\kappa$.
 Substituting this relation into equation~(\ref{eq:LHAMMVF14}),
 we get
 \[
  \frac{\partial Y_{i\vphantom{j}}^\mu}{\partial q^j}~
  =~\frac{\partial Y_j^\mu}{\partial q^i}~,
 \]
 which can be solved by setting
 \[
  Y_i^\mu~=~\frac{\partial Y_-^\mu}{\partial q^i}~,
 \]
 where the $Y_-^\mu$ are local functions on~$E$ which, as before,
 are independent of the multimomentum variables $p_k^\kappa$.
 Finally, substituting this expression into equations~(\ref{eq:LHAMMVF11})
 and (\ref{eq:LHAMMVF12}), we get
 \[
  \frac{\partial}{\partial q^i}
  \left( H \; \frac{\partial X^\nu}{\partial x^\nu} \, - \,
         p\>\!_j^\nu \; \frac{\partial X^j}{\partial x^\nu} \, + \,
         \frac{\partial Y_-^\nu}{\partial x^\nu} \, + \,
         \frac{\partial H}{\partial x^\nu} \, X^\nu \, + \,
         \frac{\partial H}{\partial q^j} \, X^j \, + \,
         \frac{\partial H}{\partial p\>\!_j^\nu} \, X_j^\nu \right) \, =~0~,
 \]
 and
 \[
  \frac{\partial}{\partial p\>\!_i^\mu}
  \left( H \; \frac{\partial X^\nu}{\partial x^\nu} \, - \,
         p\>\!_j^\nu \; \frac{\partial X^j}{\partial x^\nu} \, + \,
         \frac{\partial Y_-^\nu}{\partial x^\nu} \, + \,
         \frac{\partial H}{\partial x^\nu} \, X^\nu \, + \,
         \frac{\partial H}{\partial q^j} \, X^j \, + \,
         \frac{\partial H}{\partial p\>\!_j^\nu} \, X_j^\nu \right) \, =~0~,
 \]
 showing that
 \[
  H \; \frac{\partial X^\nu}{\partial x^\nu} \, - \,
  p\>\!_j^\nu \; \frac{\partial X^j}{\partial x^\nu} \, + \,
  \frac{\partial Y_-^\nu}{\partial x^\nu} \, + \,
  \frac{\partial H}{\partial x^\nu} \, X^\nu \, + \,
  \frac{\partial H}{\partial q^j} \, X^j \, + \,
  \frac{\partial H}{\partial p\>\!_j^\nu} \, X_j^\nu~=~Y_-^{}~,
 \]
 where $Y_-^{}$ is a local function on~$M$ which is independent of
 the position variables $q^k$ and multimomentum variables $p_k^\kappa$.
 Writing $Y_-^{}$ as a divergence,
 \[
  Y_-^{}~=~\frac{\partial Y_-^{\prime\,\mu}}{\partial x^\mu}~,
 \]
 and putting $\, X^\mu = Y_-^\mu - Y_-^{\prime\,\mu}$, we arrive at
 equations~(\ref{eq:LHVFOMPS1}) and~(\ref{eq:LHVFOMPS2}).
 \\[2mm]
 All that remains to be shown now is the final statement concerning exact
 hamiltonian vector fields. To this end, we apply the exterior derivative
 to equation~(\ref{eq:CONTR2O}) and subtract the expression in equation~%
 (\ref{eq:CONTR1O}); then collecting terms, we get
 \vspace{2mm}
 \begin{eqnarray*}
  L_X^{} \, \theta_{\mathcal{H}}^{} \!\!
  &=&\!\!  d \big( i_X^{} \, \theta_{\mathcal{H}}^{} \big) \, - \,
           i_X \, \omega_{\mathcal{H}}^{} \\[4mm]
  &=&\!\!  \bigg( - \; \frac{\partial X^\mu}{\partial x^\mu} \, H \; + \,
                  \frac{\partial X^i}{\partial x^\mu} \, p\>\!_i^\mu \, -  \,
                  \left( \frac{\partial H}{\partial x^\mu} \, X^\mu \, + \,
                         \frac{\partial H}{\partial q^i} \, X^i \, + \,
                         \frac{\partial H}{\partial p\>\!_i^\mu} \, X_i^\mu
                         \right) \bigg) \; d^{\,n} x \\[2mm]
  & & + \; \bigg( \frac{\partial X^j}{\partial q^i} \, p\>\!_j^\mu \, - \,
                  \frac{\partial X^\mu}{\partial x^\nu} \, p\>\!_i^\nu \, + \,
                  \frac{\partial X^\nu}{\partial x^\nu} \, p\>\!_i^\mu \, - \,
                  \frac{\partial X^\mu}{\partial q^i} \, H \; + \,
                  X_i^\mu \bigg) \;
           dq^i \,\smwedge\, d^{\,n} x_{\mu} \\[2mm]
  & & + \; \bigg( \frac{\partial X^i}{\partial p\>\!_j^\nu} \,
                  p\>\!^\mu_i \, - \,
                  \frac{\partial X^\mu}{\partial p\>\!_j^\nu} \, H \bigg) \;
           dp\>\!_j^\nu \,\smwedge\, d^{\,n} x_{\mu} \\[2mm]
  & & + \; \frac{\partial X^\nu}{\partial q^j} \, p\>\!_i^\mu \;
           dq^i \,\smwedge\, dq^j \,\smwedge\, d^{\,n} x_{\mu\nu} \\[2mm]
  & & + \; \frac{\partial X^\nu}{\partial p\>\!_k^\kappa} \, p\>\!_i^\mu \;
           dq^i \,\smwedge\, dp\>\!^\kappa_k \,\smwedge\,
           d^{\,n} x_{\mu\nu}~.
 \end{eqnarray*}
 Numbering the terms in this equation from $1$ to $5$, we see that the
 conditions imposed by the fact that $X$ should be exact hamiltonian
 are the following:
 \begin{itemize}
  \item Term No.\ $5$: This term vanishes if and only if the coefficients
        $X^\mu$ do not depend on the variables $p\>\!_k^\kappa$.
  \item Term No.\ $3$: Due to the previous condition, this term vanishes
        if and only the coefficients $X^i$ do not depend on the variables
        $p\>\!_k^\kappa$.
  \item Term No.\ $4$: This term vanishes if and only if the coefficients
        $X^\mu$ do not depend on the variables $q^k$, except when
        $N\!=\!1$: in this case the whole term vanishes identically
        and no conclusion can be drawn.
  \item Term No.\ $2$: This term vanishes if and only if the coefficients
        $X_i^\mu$ are defined in terms of the coefficients $X^\mu$ and $X^i$
        according to equation~(\ref{eq:LHVFOMPS1}), with $\, X_-^\mu = 0$.
  \item Term No.\ $1$: This term vanishes if and only if equation~%
        (\ref{eq:LHVFOMPS2}) is required to hold, with $\, X_-^\mu = 0$.
 \end{itemize}
\qed
\end{proof}

\begin{rmk} \label{ob:EXC}~
 The classification of locally hamiltonian vector fields provided
 by Theorem~\ref{thm:HVFOMPS} is not quite complete since it does
 not cover non-projectable locally hamiltonian vector fields.
 \linebreak
 This may not be a reason for great concern since such vector
 fields are pathological in the sense that their flows do not
 respect any of the bundle structures involved and, perhaps
 more importantly, since such vector fields can only exist
 in one very special and exceptional case, namely when
 $N\!=\!1$ and $n\!=\!2$.
 Still, it is somewhat annoying that they do not seem to
 admit any reasonable classification.
 To give an idea of what is involved, consider first the
 more general case of one degree of freedom in any space-time
 dimension ($N\!=\!1$, $n\!>\!1$), where it is common practice
 to omit the ``internal'' index~$i$ on the variables $q^i$
 and~$p\>\!_i^\mu$; it is then appropriate to redefine the
 components of~$X$ in equation~(\ref{eq:VFOMPS}), say by
 writing
 \begin{equation}
  X~=~X^\mu \, \frac{\partial}{\partial x^\mu} \, + \,
      \tilde{X} \, \frac{\partial}{\partial q} \, + \,
      \tilde{X}^\mu \, \frac{\partial}{\partial p\>\!^\mu}~.
 \end{equation}
 Then when $n\!=\!2$, the arguments presented in the proof above
 do not allow to conclude that the coefficients $X^\mu$ and $\tilde{X}$
 are independent of the multimomentum variables $p^\kappa$, but only that
 \begin{equation}  \label{eq:N1n2T72}
  \frac{\partial X^\mu}{\partial p\>\!^\mu}~=~0~,
 \end{equation}
 as stated in equation~(\ref{eq:N1n2T71}), and
 \begin{equation} \label{eq:N1n2T52}
  \frac{\partial \tilde{X}}{\partial p\>\!^\mu} \, - \,
  \frac{\partial X^\nu}{\partial p\>\!^\mu} \;
  \frac{\partial H}{\partial p\>\!^\nu}~=~0~,
 \end{equation}
 which, taking into account equation~(\ref{eq:N1n2T71}) and using
 the relation $\, \pm \epsilon_{\mu\nu}^{} \epsilon^{\kappa\lambda} =
 \delta_\mu^\kappa \delta_\nu^\lambda - \delta_\nu^\kappa \delta_\mu^\lambda \,$
 (with any fixed sign convention for $\pm$), is easily shown to be
 equivalent to equation~(\ref{eq:N1n2T51}).
 Introducing a new function~$F$ defined as
 \begin{equation} \label{eq:N1n2FUNC}
  F~=~\tilde{X} \, - \, X^\mu \, \frac{\partial H}{\partial p\>\!^\mu}~,
 \end{equation}
 together with the Hessian matrix $H_{\mu\nu}^{}$ of the hamiltonian
 function $H$,
 \begin{equation} \label{eq:N1n2HESS}
  H_{\mu\nu}^{}~
  =~\frac{\partial^2 H}{\partial p\>\!^\mu \, \partial p\>\!^\nu}~,
 \end{equation}
 and its inverse $H^{\mu\nu}$, it is possible to express all coefficients
 of~$X$ in terms of $F$ and $H$ and their partial derivatives up to first
 order (for~$F$) or second order (for~$H$):
 \begin{equation} \label{eq:N1n2LH1}
  X^\mu~= \; - \, H^{\mu\nu} \, \frac{\partial F}{\partial p\>\!^\nu}~,
 \end{equation}
 \begin{equation} \label{eq:N1n2LH2}
  \tilde{X}~=~F \, - \, H^{\mu\nu} \, \frac{\partial H}{\partial p\>\!^\mu} \,
                                     \frac{\partial F}{\partial p\>\!^\nu}~,
 \end{equation}
 \vspace{1ex}
 \begin{equation} \label{eq:N1n2LH3}
  \begin{array}{rcl}
   \tilde{X}^\mu \!\!
   &=&\!\! {\displaystyle H^{\mu\nu}
            \left( \frac{\partial F}{\partial x^\nu} \, + \,
                   \frac{\partial H}{\partial p^\nu} \,
                   \frac{\partial F}{\partial q} \, - \,
                   \frac{\partial^{\,2} H}{\partial q \, \partial p^\nu} \, F
           \right)} \\[2ex]
   & &\!\! {\displaystyle \mbox{} - \, H^{\mu\kappa}
           \left(  \frac{\partial^{\,2} H}{\partial x^\kappa \, \partial p^\lambda}
           \, - \, \frac{\partial^{\,2} H}{\partial x^\lambda \, \partial p^\kappa}
           \, + \, \frac{\partial H}{\partial p^\kappa} \,
                   \frac{\partial^{\,2} H}{\partial q \, \partial p^\lambda}
           \, - \, \frac{\partial H}{\partial p^\lambda} \,
                   \frac{\partial^{\,2} H}{\partial q \, \partial p^\kappa}
           \right) H^{\lambda\nu} \, \frac{\partial F}{\partial p^\nu}~.}
  \end{array}
 \end{equation}
 However, finding the general solution of the entire system seems to be an
 exceedingly difficult task, except if one makes some simplifying assumptions
 on the hamiltonian~$H$. One obvious choice would be to take
 \begin{equation} \label{eq:N1n2HAM}
  H~=~{\textstyle \frac{1}{2}} \, g_{\mu\nu}^{}(x) \, p^\mu p^\nu \, + \,
      A_\mu^{}(x) \, p^\mu \, + \, V(x,q)~,
 \end{equation}
 where $g$ represents a Lorentz metric, $A$ is a gauge potential and
 $V$ is some scalar potential, but even in this situation we have not
 come to a definite conclusion.
 The only case in which a complete solution has been found is in the
 absence of external fields, i.e., when the metric tensor~$g$ and the
 scalar potential~$V$ are both independent of~$x$ whereas the gauge
 potential~$A$  vanishes, so $M$ is two-dimensional Minkowski space
 $\mathbb{R}^2$ and $g$ is the standard Minkowski metric~$\eta$;
 see~\cite[Appendix]{Kij}.
\end{rmk}

\section{Covariant phase space}

\subsection{Symplectic structure on covariant phase space}

One of the most important properties of the covariant phase space
$\mathscr{S}$ introduced above (see equations~(\ref{eq:COVPHS1})-%
(\ref{eq:COVPHS4})) is that it carries a naturally defined symplectic
structure~\cite{CW,Cr,Zu} which can in fact be derived immediately
from the multisymplectic structure on multiphase space~\cite{FSR}.
Namely, generalizing the prescription of equation~(\ref{eq:LFUNCT1})
in the sense of using ordinary differential forms on multiphase space
to produce functional differential forms, rather than just functionals,
we can define functional canonical $1$-forms $\Theta_{K_\Sigma}^{}$ and
$2$-forms $\Omega_{K_\Sigma}^{}$ on~$\mathscr{C}$, where $\Sigma$ is
a hypersurface in~$M$ (typically, when a Lorentz metric is given, a
Cauchy surface) and $K_\Sigma$ runs through the compact submanifolds
of~$\Sigma$ which are the closure of their interior in~$\Sigma$ and
have smooth boundary $\partial K_\Sigma$, by setting
\begin{equation} \label{eq:FUNCTCF} 
 (\Theta_{K_\Sigma}^{})_\phi^{}(\delta_X^{} \phi)~
 =~\int_{K_\Sigma} \phi^* (i_X^{} \theta_{\mathcal{H}}^{\vphantom{j}})
\end{equation}
for~$\, \phi \in \mathscr{C} \,$ and $\, \delta_X^{} \phi \in
T_\phi^{} \mathscr{C} \,$ with $X$ vertical, and
\begin{equation} \label{eq:FUNCTSF} 
 (\Omega_{K_\Sigma}^{})_\phi^{}(\delta_{X_1}^{} \phi,\delta_{X_2}^{} \phi)~
 =~\int_{K_\Sigma} \phi^* (i_{X_2}^{} i_{X_1}^{}
                           \omega_{\mathcal{H}}^{\vphantom{j}})
\end{equation}
for $\, \phi \smin \mathscr{C} \,$ and $\, \delta_{X_1}^{} \phi,
\delta_{X_2}^{} \phi \in T_\phi \mathscr{C} \,$ with $X_1,X_2$ vertical.
(The same formulas continue to hold if we require the vector fields
$X,\,X_1,\,X_2$ to be only vertical on the image of~$\phi$.)
As observed, e.g., in~\cite{CW,Cr,Zu} and, in the present context,
in~\cite{FSR}, the restriction of the form $\Omega_{K_\Sigma}^{}$
to~$\mathscr{S}$ does not depend on the submanifold~$K_\Sigma$,
provided that appropriate boundary conditions are imposed:\,%
\footnote{For example, when we compare the integral in equation~%
(\ref{eq:FUNCTSF}) over two compact submanifolds $K_{1,\Sigma_1}$
and $K_{2,\Sigma_2}$ of hypersurfaces $\Sigma_1$ and~$\Sigma_2$
in~$M$ whose union $K_{1,\Sigma_1} \cup K_{2,\Sigma_2}$ is the
boundary $\partial K$ of a compact submanifold $K$ of~$M$, we
will get the same result. This also happens when $K_{1,\Sigma_1}
\cup K_{2,\Sigma_2}$ is just part of the boundary of a compact
submanifold $K$ of~$M$ but the remainder, $\partial K \setminus
(K_{1,\Sigma_1} \cup K_{2,\Sigma_2})$, has empty intersection
with the intersection of the base supports of $\delta_{X_1}^{}
\phi$ and~$\delta_{X_2}^{} \phi$.}
this happens because when $\, \phi \smin \mathscr{S} \,$ and
$\, \delta_{X_1}^{} \phi , \delta_{X_2}^{} \phi \in T_\phi \mathscr{S}$,
the expression under the integral in equation~(\ref{eq:FUNCTSF}) is
a closed form (called the ``symplectic current'' in~\cite{CW}), so
that according to Stokes' theorem, its integral over any compact
submanifold without boundary vanishes.
Thus, at least formally, covariant phase space becomes a symplectic
manifold~-- albeit an infinite-dimensional one; its symplectic form
will in what follows be simply denoted by~$\Omega$ and is explicitly
given by the formula
\begin{equation} \label{eq:SFCPHS} 
 \Omega_\phi^{}(\delta_{X_1}^{} \phi,\delta_{X_2}^{} \phi)~
 =~\int_\Sigma \phi^* (i_{X_2}^{} i_{X_1}^{}
                       \omega_{\mathcal{H}}^{\vphantom{j}})
\end{equation}
where $\Sigma$ is any Cauchy surface in~$M$ and where $\, \phi \smin
\mathscr{S} \,$ and $\, \delta_{X_1}^{} \phi,\delta_{X_2}^{} \phi \in
T_\phi \mathscr{S}$, with $X_1,X_2$ vertical (or possibly just vertical on
the image of~$\phi$) and such that $\; \mathrm{supp} \; \delta_{X_1}^{} \phi
\,\cap\, \mathrm{supp} \; \delta_{X_2}^{} \phi \,\cap\, \Sigma$ \linebreak
is compact.

\subsection{Functional hamiltonian vector fields and Poisson brackets}

The central result obtained in Ref.~\cite{FSR} can be summarized in the form
of two theorems which we state explicitly because they form the background
for the work reported here.
The basic object that appears there is the \emph{Jacobi operator}\/
$\mathscr{J}[\phi]$, obtained by linearizing the De\,Donder\,--\,Weyl
operator around a solution $\, \phi \in \mathscr{S} \,$ and whose
kernel is precisely the space $T_\phi \mathscr{S}$ of solutions of
the linearized equations of motion, and its causal Green function~%
$G_\phi$.
\begin{thm}~ \label{thm:FHVF1} 
 Given a functional $\mathslf{F}$ with temporally compact support
 on covariant phase space~$\mathscr{S}$, the functional hamiltonian
 vector field $\mathslf{X}_{\mathsmf{F}}^{}$ on\/~$\mathscr{S}$
 associated to~$\mathslf{F}$, as defined by the formula
 \begin{equation} \label{eq:FHVF1} 
  \Omega_\phi^{} \bigl( \mathslf{X}_{\mathsmf{F}}^{}\,[\phi],\delta\phi \bigr)~
  =~\mathslf{F}^{\>\prime}[\phi] \cdot \delta\phi
  \qquad \mbox{for $\, \phi \in \mathscr{S}$,
                   $\delta\phi \in T_\phi^{} \mathscr{S}$}~,
 \end{equation}
 is given by ``convolution'' of the variational derivative of~$\mathslf{F}$
 (see equation~(\ref{eq:FUNDER})) with the causal Green function of the
 corresponding Jacobi operator:
 \begin{equation} \label{eq:FHVF2} 
  \mathslf{X}_{\mathsmf{F}}^{\>\vphantom{j}i}\,[\phi](x)~
  =~\int_M d^{\,n} y~G_\phi^{ij}(x,y) \,
    \frac{\delta\mathslf{F}}{\delta\phi^j}[\phi] \, (y)
  \qquad \mbox{for $\, \phi \in \mathscr{S}$}~.
 \end{equation}
\end{thm}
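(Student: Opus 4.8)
The plan is to verify that the section $u_{\mathslf{F}}^{}[\phi]$ of $V_\phi^{}$ whose components are given by the right-hand side of~(\ref{eq:FHVF2}) belongs to $T_\phi^{} \mathscr{S}$ and satisfies the defining relation~(\ref{eq:FHVF1}) for all $\, \delta\phi \in T_\phi^{} \mathscr{S}$; since the symplectic form~$\Omega$ on covariant phase space is (formally) non-degenerate, the solution of~(\ref{eq:FHVF1}) is then unique, so that necessarily $\, \mathslf{X}_{\mathsmf{F}}^{}[\phi] = u_{\mathslf{F}}^{}[\phi]$, which is the assertion. Write $\, j = (\delta\mathslf{F}/\delta\phi)[\phi] \,$ for the variational derivative of $\mathslf{F}$ at~$\phi$; by~(\ref{eq:FUNSUP2}) it is a (distributional) section of~$V_\phi^{\circledast}$ supported in $\mathrm{supp}\,\mathslf{F}$, hence has temporally compact support, so that $\, \mathslf{F}^{\>\prime}[\phi] \cdot \delta\phi \,$ is well defined whenever $\delta\phi$ has spatially compact support, in particular for $\, \delta\phi \in T_\phi^{} \mathscr{S}$. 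Recalling from~\cite{FSR} the properties of the Jacobi operator $\mathscr{J}[\phi]$ and its causal Green function $G_\phi$ --- the advanced and retarded Green operators $G_\phi^{\mathrm{adv}}$, $G_\phi^{\mathrm{ret}}$ yield sections $\, u^{\mathrm{adv}} = G_\phi^{\mathrm{adv}} j \,$, $\, u^{\mathrm{ret}} = G_\phi^{\mathrm{ret}} j \,$ with $\, \mathscr{J}[\phi]\, u^{\mathrm{adv}} = j = \mathscr{J}[\phi]\, u^{\mathrm{ret}} \,$ and $\, \mathrm{supp}\, u^{\mathrm{adv}} \subset J^-(\mathrm{supp}\, j)$, $\, \mathrm{supp}\, u^{\mathrm{ret}} \subset J^+(\mathrm{supp}\, j)$, whereas $\, u_{\mathslf{F}}^{}[\phi] = G_\phi j = u^{\mathrm{adv}} - u^{\mathrm{ret}} \,$ lies in $\, \ker \mathscr{J}[\phi] = T_\phi^{} \mathscr{S} \,$ and has spatially compact support --- settles the first claim.

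The key ingredient is a ``Green's second identity'' for the symplectic current. Representing an arbitrary variation $w$ and a variation $\, v \in T_\phi^{} \mathscr{S} \,$ by vertical vector fields $X_w$, $X_v$ on~$P$ (possible by the construction given just before Lemma~\ref{lem:VFREPVAR}), so that $\, w = \delta_{X_w}^{} \phi \,$ and $\, v = \delta_{X_v}^{} \phi$, Cartan's formula together with $\, d\omega_{\mathcal{H}}^{} = 0 \,$ gives
\[
 d\bigl( i_{X_v}^{} i_{X_w}^{} \omega_{\mathcal{H}}^{} \bigr)~
 =~i_{X_w}^{} L_{X_v}^{} \omega_{\mathcal{H}}^{} \, + \,
   i_{[X_v,X_w]}^{} \omega_{\mathcal{H}}^{} \, - \,
   i_{X_v}^{} L_{X_w}^{} \omega_{\mathcal{H}}^{}~.
\]
Upon pulling back by~$\phi$, the first term vanishes because $\, v \in T_\phi^{} \mathscr{S} \,$ means $\, \phi^*(i_Y^{} L_{X_v}^{} \omega_{\mathcal{H}}^{}) = 0 \,$ for every vertical $Y$ on~$P$ (equation~(\ref{eq:LEQMOT1})), in particular for $\, Y = X_w$; the second term vanishes because $\, \phi \in \mathscr{S} \,$ and $\, [X_v,X_w] \,$ is vertical (equation~(\ref{eq:FEQMOT1})); and the third term, by the very definition of $\mathscr{J}[\phi]$ as the linearization around~$\phi$ of the De\,Donder\,--\,Weyl operator (compare equations~(\ref{eq:LEQMOT1}) and~(\ref{eq:LDDW1})), equals $\, - \langle \mathscr{J}[\phi]\, w \,,\, v \rangle$, where $\, \langle\,\cdot\,,\cdot\,\rangle \,$ is the canonical pointwise pairing of a section of~$V_\phi^{\circledast}$ with a section of~$V_\phi^{}$, giving an $n$-form on~$M$ (signs being fixed by the conventions of~\cite{FSR}). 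Hence, for $\, \phi \in \mathscr{S}$, $\, v \in T_\phi^{} \mathscr{S} \,$ and $w$ an arbitrary variation,
\[
 d\bigl( \phi^*(i_{X_v}^{} i_{X_w}^{} \omega_{\mathcal{H}}^{}) \bigr)~
 =~- \, \langle \mathscr{J}[\phi]\, w \,,\, v \rangle~.
\]
The delicate point is exactly this identification --- that $\, - \phi^*(i_{X_v}^{} L_{X_w}^{} \omega_{\mathcal{H}}^{}) \,$ depends on $w$, $v$ only through $\, \delta_{X_w}^{} \phi$ and $\delta_{X_v}^{} \phi \,$ and reproduces the Jacobi operator with the right sign; it can be checked invariantly, or by inserting the local Darboux expression~(\ref{eq:MSPF2b}) for~$\omega_{\mathcal{H}}^{}$, which turns the symplectic current into $\, \bigl( \delta_{X_w}^{} \varphi^i \, \delta_{X_v}^{} \pi_i^\mu - \delta_{X_v}^{} \varphi^i \, \delta_{X_w}^{} \pi_i^\mu \bigr) \; d^{\,n} x_\mu^{} \,$ and, after inserting the linearized De\,Donder\,--\,Weyl equations~(\ref{eq:LDDW1}) for $\delta_{X_v}^{}\phi$, its exterior derivative into the pairing of~(\ref{eq:LDDW1}) for $\delta_{X_w}^{}\phi$ with $\delta_{X_v}^{}\phi$.

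It remains to combine this identity with Stokes' theorem. Since $\mathrm{supp}\, j$ is temporally compact and $M$ is globally hyperbolic, choose a Cauchy surface $\Sigma_+$ to the future of $\mathrm{supp}\, j$ and a Cauchy surface $\Sigma_-$ to its past, and let $K$ be the temporally compact region they bound, so $\, \mathrm{supp}\, j \subset K$. Fix $\, v \in T_\phi^{} \mathscr{S} \,$ and put $\, u = u_{\mathslf{F}}^{}[\phi] = u^{\mathrm{adv}} - u^{\mathrm{ret}}$. Because $\Omega$ may be evaluated on any Cauchy surface (equation~(\ref{eq:SFCPHS})) and $u^{\mathrm{adv}}$ vanishes on a neighbourhood of~$\Sigma_+$ (its support lies in $J^-(\mathrm{supp}\, j)$, which does not meet $\Sigma_+$), we get $\, \Omega_\phi^{}(u,v) = - \int_{\Sigma_+} \phi^*(i_{X_v}^{} i_{X_{u^{\mathrm{ret}}}}^{} \omega_{\mathcal{H}}^{})$. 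As $u^{\mathrm{ret}}$ vanishes on a neighbourhood of~$\Sigma_-$ while $v$ has spatially compact support, Stokes' theorem applies on~$K$ (the boundary term over $\Sigma_-$ dropping out), and using the identity above with $\, w = u^{\mathrm{ret}}$ (so $\, \mathscr{J}[\phi]\, u^{\mathrm{ret}} = j$), together with $\, \mathrm{supp}\, j \subset K \,$ and~(\ref{eq:FUNDER}), we obtain
\[
 \int_{\Sigma_+} \phi^*(i_{X_v}^{} i_{X_{u^{\mathrm{ret}}}}^{} \omega_{\mathcal{H}}^{})~
 =~\int_K d\bigl( \phi^*(i_{X_v}^{} i_{X_{u^{\mathrm{ret}}}}^{} \omega_{\mathcal{H}}^{}) \bigr)~
 = \; - \int_K \langle j \,,\, v \rangle~
 = \; - \int_M d^{\,n} x~\frac{\delta\mathslf{F}}{\delta\phi^i}[\phi](x)\, v^i(x)~
 = \; - \, \mathslf{F}^{\>\prime}[\phi] \cdot v~.
\]
Therefore $\, \Omega_\phi^{}(u,v) = \mathslf{F}^{\>\prime}[\phi] \cdot v \,$ for every $\, v \in T_\phi^{} \mathscr{S}$, which is~(\ref{eq:FHVF1}); by the uniqueness noted above, $\, u = \mathslf{X}_{\mathsmf{F}}^{}[\phi]$, proving~(\ref{eq:FHVF2}). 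The hard part of the argument is the Green's identity of the middle paragraph, and within it the sign-sensitive identification of the divergence of the symplectic current with the Jacobi operator applied to the (off-shell) variation $w$ and paired with the (on-shell) $v$; everything else is bookkeeping with causal supports and Stokes' theorem.
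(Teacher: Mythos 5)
Your argument is essentially sound, but there is nothing in the paper to compare it against: Theorem~\ref{thm:FHVF1} (together with Theorem~\ref{thm:FPBR1}) is stated explicitly as background imported from Ref.~\cite{FSR}, and the paper gives no proof of it. What you have written is, in substance, a reconstruction of the argument of~\cite{FSR}: characterize $\mathslf{X}_{\mathsmf{F}}^{}[\phi]$ by the defining relation~(\ref{eq:FHVF1}) plus (formal) non-degeneracy of~$\Omega$, check that $G_\phi^{}\,(\delta\mathslf{F}/\delta\phi)[\phi]$ lies in $\ker\mathscr{J}[\phi]=T_\phi^{}\mathscr{S}$, and then establish~(\ref{eq:FHVF1}) by combining the causal support properties of $G_\phi^{\mathrm{adv}}$ and $G_\phi^{\mathrm{ret}}$ with a Green-type identity for the symplectic current and Stokes' theorem on a slab between two Cauchy surfaces. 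The step you rightly single out as delicate --- that $\,-\,\phi^*(i_{X_v}^{}L_{X_w}^{}\omega_{\mathcal{H}}^{})$ depends on $X_w$, $X_v$ only through $\delta_{X_w}^{}\phi$, $\delta_{X_v}^{}\phi$ and reproduces the Jacobi pairing --- is consistent with the paper's own Section~4: the linearization computation there identifies $\phi^*(i_Y^{}L_{X_P}^{}\omega_{\mathcal{H}}^{})$, for $\phi$ on shell and $Y$ vertical, with the linearized De\,Donder\,--\,Weyl operator tested against~$Y$ (compare equations~(\ref{eq:LEQMOT1}) and~(\ref{eq:LDDW1})), and since the other two terms of your Cartan identity vanish on shell, the remaining term equals the exact form $d\bigl(\phi^*(i_{X_v}^{}i_{X_w}^{}\omega_{\mathcal{H}}^{})\bigr)$, which manifestly depends only on the values of the vector fields along the image of~$\phi$. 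Two caveats. First, all the signs (the convention $G_\phi^{}=G_\phi^{\mathrm{adv}}-G_\phi^{\mathrm{ret}}$ versus its negative, the sign built into $\mathscr{J}[\phi]$, and the orientation of~$\partial K$) are interlocked, and you defer them wholesale to~\cite{FSR}; that is acceptable at the paper's level of rigor, but it is exactly where an error would hide. Second, your assertion that $G_\phi^{}\,j$ has \emph{spatially} compact support does not follow from $j$ having merely temporally compact support (one only gets $\mathrm{supp}\,G_\phi^{}j\subset J(\mathrm{supp}\,j)$); this is a looseness inherited from the theorem's formulation rather than a flaw in your argument, and it is harmless in the intended applications, where the base support of the generating form is spatially compact.
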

Note that the condition that $\mathslf{F}\,$ should have temporally compact
support will guarantee that both sides of equation~(\ref{eq:FHVF1}) make
sense provided we interpret $T_\phi^{} \mathscr{S}$ as being the space of
solutions of the linearized equations of motion of spatially compact
support, i.e., we regard it as the subspace of the space $T_\phi^{}
\mathscr{C}$ given by equations~(\ref{eq:TSCPHS1})-(\ref{eq:TSCPHS4})
where the latter is defined according to equation~(\ref{eq:FTCTS3}).

With this statement at hand, it is easy to write down the Poisson bracket
of two functionals $\mathslf{F}\,$ and $\mathslf{G}\,$\/ on~$\mathscr{S}\,$:
in complete analogy with the formula $\; \{f,g\} = i_{X_g}^{} i_{X_f}^{}
\omega = - dg(X_f^{}) \,$ from mechanics, it can be defined by
\begin{equation} \label{eq:POISB5}
 \{ \mathslf{F}\, , \mathslf{G}\, \}[\phi]~
 = \; - \, \mathslf{G}^{\>\prime}[\phi]
   \bigl( \mathslf{X}_{\mathsmf{F}}^{}\,[\phi] \bigr)
 \qquad \mbox{for $\, \phi \in \mathscr{S}$}~,
\end{equation}
or more explicitly,
\begin{equation} \label{eq:POISB6}
 \{ \mathslf{F}\, , \mathslf{G}\, \}[\phi]~
 = \; - \, \int_M d^{\,n} x~
   \frac{\delta \mathslf{G}}{\delta\phi^{\,k}}[\phi](x) \;
   \mathslf{X}_{\mathsmf{F}}^{\>\vphantom{j}k}\,[\phi](x)
  \qquad \mbox{for $\, \phi \in \mathscr{S}$}~.
\end{equation}
Combining this expression with that given in Theorem~\ref{thm:FHVF1},
we arrive at the second main conclusion:
\begin{thm}~ \label{thm:FPBR1} 
 Given two functionals $\mathslf{F}$ and $\mathslf{G}$ with temporally
 compact support on covariant phase space\/~$\mathscr{S}$, their Poisson
 bracket\/ $\{\mathslf{F}\,,\mathslf{G}\,\}$, with respect to the symplectic
 form\/~$\Omega$ introduced above, is precisely their Peierls\,--\,De\,Witt
 bracket, given by
 \begin{equation} \label{eq:POISB7}
  \{\mathslf{F}\,,\mathslf{G}\,\}[\phi]~
  =~\int_M d^{\,n} x \int_M d^{\,n} y~
    \frac{\delta \mathslf{F}}{\delta\phi^{\,k}}[\phi](x) \;
    G_\phi^{kl}(x,y) \;
    \frac{\delta \mathslf{G}}{\delta\phi^{\,l}}[\phi](y)
  \qquad \mbox{for $\, \phi \in \mathscr{S}$}~.
 \end{equation}
\end{thm}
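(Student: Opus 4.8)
The plan is to derive~(\ref{eq:POISB7}) directly by feeding the explicit formula of Theorem~\ref{thm:FHVF1} into the definition of the Poisson bracket. Starting from the explicit form~(\ref{eq:POISB6}) of the bracket and substituting the ``convolution'' expression~(\ref{eq:FHVF2}) for the components $\mathslf{X}_{\mathsmf{F}}^{\,k}[\phi]$ of the functional hamiltonian vector field, one obtains
\[
 \{\mathslf{F}\,,\mathslf{G}\,\}[\phi]~
 = \; - \int_M d^{\,n}x \int_M d^{\,n}y~
   \frac{\delta\mathslf{G}}{\delta\phi^{\,k}}[\phi](x) \;
   G_\phi^{kl}(x,y) \;
   \frac{\delta\mathslf{F}}{\delta\phi^{\,l}}[\phi](y)~,
\]
so that the statement reduces to reconciling the right-hand side here with that of~(\ref{eq:POISB7}).

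The key ingredient is then the antisymmetry of the causal Green function,
\[
 G_\phi^{kl}(x,y)~=~-\,G_\phi^{lk}(y,x)~,
\]
which follows from the formal self-adjointness of the Jacobi operator $\mathscr{J}[\phi]$ --- equivalently, from the fact that $G_\phi$ is the difference of the retarded and advanced Green functions of a formally self-adjoint operator. Granting this, I would simply relabel the integration variables $x \leftrightarrow y$ and the field indices $k \leftrightarrow l$ in the displayed identity above and use the antisymmetry relation to absorb the overall minus sign, arriving at~(\ref{eq:POISB7}) exactly. As part of the bookkeeping I would also note that the temporally compact support of $\mathslf{F}$ and $\mathslf{G}$, and hence of their variational derivatives, together with the support-propagation properties of $G_\phi$ (which sends temporally compact support to spatially compact support), ensures that the double integral converges, so that these manipulations are legitimate; this is the same support argument already used in connection with Theorem~\ref{thm:FHVF1}.

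The main obstacle is not algebraic but structural: one must know that $\mathscr{J}[\phi]$ is formally self-adjoint (so that the antisymmetry of $G_\phi$ holds) and that its causal Green function exists. Formal self-adjointness is ultimately the symmetry of the second variation of the action, which relies on the careful treatment of boundary terms carried out in Section~4, while the existence and support properties of $G_\phi$ rest on global hyperbolicity of~$M$. Both of these are precisely the inputs behind Theorems~\ref{thm:FHVF1} and~\ref{thm:FPBR1} as imported from~\cite{FSR}, so once they are in place the present theorem follows immediately from the substitution and relabeling described above.
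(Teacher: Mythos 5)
Your proposal is correct and follows essentially the same route as the paper, which obtains Theorem~\ref{thm:FPBR1} precisely by substituting the convolution formula~(\ref{eq:FHVF2}) of Theorem~\ref{thm:FHVF1} into the explicit bracket~(\ref{eq:POISB6}). The only step the paper leaves implicit (deferring to~\cite{FSR}) is the antisymmetry $G_\phi^{kl}(x,y)=-\,G_\phi^{lk}(y,x)$ of the causal Green function of the formally self-adjoint Jacobi operator, which you correctly identify and use to remove the sign after relabeling.
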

Note that in view of the regularity conditions imposed to arrive at these
results, the previous constructions do not apply directly to degenerate
systems such as gauge theories: these require a separate treatment.

\subsection{The main theorems}

In this subsection, we present the two main theorems of the present paper
which, for local functionals of the form given by equation~(\ref{eq:LFUNCT1}),
provide a simple \emph{algebraic}\/ construction of the functional hamiltonian
vector field associated with such a functional and, as a corollary, a simple
\emph{algebraic} formula for the Poisson bracket of two such functionals.
That such formulas should exist is not at all obvious, taking into
account that the corresponding formulas for general functionals, as
given in Theorem~\ref{thm:FHVF1} and Theorem~\ref{thm:FPBR1} above,
are essentially analytic: to apply them in concrete examples, one
needs to solve a system of (linear) partial differential equations
in order to calculate the corresponding causal Green function.
Surprisingly, for local functionals of the form given by
equation~(\ref{eq:LFUNCT1}), the ``convolution type'' integral
operator which has this Green function as its kernel collapses.
\begin{thm}~ \label{thm:FHVF2}
 Suppose we are given a fiber bundle\/~$E$ (the field configuration
 bundle) over an \linebreak $n$-dimensional globally hyperbolic
 space-time manifold\/~$M$ and a hamiltonian  $\, \mathcal{H}:
 \vec{J}^{\circledast} E \longrightarrow J^{\circledstar} E$, which is
 a section of extended multiphase space\/~$J^{\circledstar} E$ over
 ordinary multiphase space\/~$\vec{J}^{\circledast} E$, \linebreak
 together with a hamiltonian\/ $(n\!-\!1)$-form $f$ on\/~%
 $\vec{J}^{\circledast} E$ of spatially compact support such
 that the corresponding hamiltonian vector field $X_f^{}$
 on\/~$\vec{J}^{\circledast} E$, defined by the formula
 $\, i_{X_f}^{} \omega_{\mathcal{H}}^{} = df$, is projectable
 to~$M$.%
 \footnote{Recall that according to Theorem~\ref{thm:HVFOMPS},
 this is automatic if~$N>1$.}
 Then given any Cauchy surface\/~$\Sigma$ in\/~$M$ and writing
 $\mathcal{F}_{\Sigma,f}^{}$ for the local functional on\/~%
 $\mathscr{S}$ associated to~$\Sigma$ and~$f$, as in Definition~%
\ref{def:LFUNCT}, and $\mathslf{X}_{\mathcal{F}_{\Sigma,f}}^{}$
 for the functional hamiltonian vector field on\/~$\mathscr{S}$
 associated to~$\mathcal{F}_{\Sigma,f}^{}$, as in Theorem~%
 \ref{thm:FHVF1}, we have
 \begin{equation} \label{eq:FHVF3} 
  {\mathslf{X}}_{\mathcal{F}_{\Sigma,f}}^{}[\phi]~=~\delta_{X_f}^{} \phi
  \qquad \mbox{for $\, \phi \in \mathscr{S}$}~.
 \end{equation}
\end{thm}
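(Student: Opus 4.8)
The plan is to bypass the Green-function formula of Theorem~\ref{thm:FHVF1} entirely and instead verify directly that the candidate value $\delta_{X_f}^{}\phi$ satisfies the equation~(\ref{eq:FHVF1}) that \emph{defines} the functional hamiltonian vector field $\mathslf{X}_{\mathcal{F}_{\Sigma,f}}^{}$ at $\phi$; uniqueness (already contained in Theorem~\ref{thm:FHVF1}) then forces $\mathslf{X}_{\mathcal{F}_{\Sigma,f}}^{}[\phi]=\delta_{X_f}^{}\phi$. So the first step is a check that $\delta_{X_f}^{}\phi$ is an admissible value, i.e. lies in $T_\phi^{}\mathscr{S}$ with spatially compact support: since $X_f$ is hamiltonian it is in particular locally hamiltonian, so $L_{X_f}^{}\omega_{\mathcal{H}}^{}=0$, whence $\delta_{X_f}^{}\phi$ solves the linearized De\,Donder--Weyl equations by the characterization~(\ref{eq:TSCPHS4}); and $f$ having spatially compact support ensures, via the explicit description of $X_f$ in Theorem~\ref{thm:HVFOMPS}, that $X_f$ and hence $\delta_{X_f}^{}\phi$ have spatially compact base support, so that the pairing in~(\ref{eq:FHVF1}) is well defined on $T_\phi^{}\mathscr{S}$ interpreted as in equation~(\ref{eq:FTCTS3}) (note also that $\mathcal{F}_{\Sigma,f}^{}$ has temporally compact support by~(\ref{eq:LFUNCT2})). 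It then remains to prove that $\Omega_\phi^{}(\delta_{X_f}^{}\phi,\delta\phi)=\mathcal{F}_{\Sigma,f}^{\>\prime}[\phi]\cdot\delta\phi$ for every $\phi\in\mathscr{S}$ and every $\delta\phi\in T_\phi^{}\mathscr{S}$.

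For the right-hand side I would represent $\delta\phi$ by a \emph{vertical} vector field $\bar Y$ on $P$ (possible by the construction following equation~(\ref{eq:VFREPVAR2})). Then Proposition~\ref{prop:FFUNDER}, in the form~(\ref{eq:FFUNDER2}) — which applies because a Cauchy surface has empty boundary, so the Stokes terms drop out — together with $df=i_{X_f}^{}\omega_{\mathcal{H}}^{}$, gives
\[
 \mathcal{F}_{\Sigma,f}^{\>\prime}[\phi]\cdot\delta\phi~=~\int_\Sigma \phi^*\bigl(i_{\bar Y}^{}\,df\bigr)~=~\int_\Sigma \phi^*\bigl(i_{\bar Y}^{}\,i_{X_f}^{}\omega_{\mathcal{H}}^{}\bigr)~.
\]
For the left-hand side I would likewise pick a vertical vector field $Z$ on $P$ representing $\delta_{X_f}^{}\phi$, so that the defining formula~(\ref{eq:SFCPHS}) for $\Omega$ yields $\Omega_\phi^{}(\delta_{X_f}^{}\phi,\delta\phi)=\int_\Sigma \phi^*(i_{\bar Y}^{}\,i_{Z}^{}\omega_{\mathcal{H}}^{})$. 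Hence the whole claim reduces to showing that the difference $\int_\Sigma \phi^*\bigl(i_{\bar Y}^{}\,i_{W}^{}\omega_{\mathcal{H}}^{}\bigr)$ vanishes, where $W:=Z-X_f$, regarded as a vector field on $P$.

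The crux is the vertical-versus-projectable discrepancy: $W$ is not vertical, precisely because $X_f$ is only projectable to $M$. But along $\phi$ we have $W\circ\phi = Z\circ\phi - X_f\circ\phi = \delta_{X_f}^{}\phi - X_f\circ\phi = -\,T\phi(X_{f,M}^{})$, so $W$ agrees along $\phi$ with $-X_{P}^\phi$, the lift of $X_{f,M}^{}$ furnished by Lemma~\ref{lem:VFREPVAR}. Since $\phi^*(i_W^{}\beta)$ depends on $W$ only through $W\circ\phi$, anticommuting the two interior products and applying Lemma~\ref{lem:VFREPVAR} with base field $X_{f,M}^{}$ and form $\beta=i_{\bar Y}^{}\omega_{\mathcal{H}}^{}$ gives
\[
 \phi^*\bigl(i_{\bar Y}^{}\,i_W^{}\omega_{\mathcal{H}}^{}\bigr)~=~-\,\phi^*\bigl(i_W^{}\,i_{\bar Y}^{}\omega_{\mathcal{H}}^{}\bigr)~=~i_{X_{f,M}^{}}^{}\bigl(\phi^*(i_{\bar Y}^{}\omega_{\mathcal{H}}^{})\bigr)~.
\]
Now the single piece of genuine input enters: $\phi\in\mathscr{S}$ solves the equations of motion, so by~(\ref{eq:COVPHS4}) the pull-back $\phi^*(i_{\bar Y}^{}\omega_{\mathcal{H}}^{})$ is identically zero for the vertical field $\bar Y$, and the correction term disappears. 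This on-shell cancellation of the term produced by the non-verticality of $X_f$ is the heart of the argument and the step I expect to be the main obstacle to get exactly right; everything else is Cartan calculus, the (boundary-free) Stokes theorem, and the representation lemmas of Section~2. With the two sides matched, $\delta_{X_f}^{}\phi$ satisfies the defining equation~(\ref{eq:FHVF1}), and the uniqueness built into Theorem~\ref{thm:FHVF1} gives~(\ref{eq:FHVF3}); in passing, this shows that the ``convolution type'' operator of Theorem~\ref{thm:FHVF1} collapses on local functionals precisely because the purely local expression $\delta_{X_f}^{}\phi$ already solves the variational characterization.
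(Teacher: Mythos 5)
Your proposal is correct and follows essentially the same route as the paper's own proof: both verify the defining relation~(\ref{eq:FHVF1}) directly, reduce the discrepancy between $X_f$ and a vertical representative of $\delta_{X_f}^{}\phi$ to the lift of $X_{f,M}^{}$ via Lemma~\ref{lem:VFREPVAR}, and kill the resulting term $i_{X_{f,M}}^{}\bigl(\phi^*(i_Y^{}\omega_{\mathcal{H}}^{})\bigr)$ using the on-shell identity~(\ref{eq:COVPHS4}). The only cosmetic difference is that the paper keeps $X_f-\tilde X_f^\phi$ (vertical only on the image of~$\phi$) as the representative inside $\Omega$, whereas you pass to a genuinely vertical $Z$ and shift the same cancellation to the difference $Z-X_f$.
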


\noindent
Note that this is a version of Noether's theorem, i.e., we are dealing
with a conservation law, for the lhs of equation~(\ref{eq:FHVF3}) appears
to depend on~$\Sigma$ and the theorem states that it is equal to the rhs,
which does not!

\begin{proof}
 Rather than analyzing the integral formula~(\ref{eq:FHVF2}), we shall
 show directly that the algebraic formula~(\ref{eq:FHVF3}) satisfies
 all required conditions.
 First of all, we note that $\delta_{X_f}^{} \phi$ has spatially compact
 support because $f$ does.
 Moroever, it is clear from the results of Section~4 that equation~%
 (\ref{eq:FHVF3}) does provide a functional vector field not only on~%
 $\mathscr{C}$ but also on~$\mathscr{S}$ since, according to equation~%
 (\ref{eq:TSCPHS3}), $X_f^{}$ being locally hamiltonian with respect to~%
 $\omega_{\mathcal{H}}^{}$ and projectable to~$M$ implies that $\, \delta_{X_f}^{}
 \phi \in T_\phi^{} \mathscr{S} \,$ when $\, \phi \in \mathscr{S} \,$.
 Therefore, all that needs to be verified is that the expression given
 in equation~(\ref{eq:FHVF3}) satisfies the condition~(\ref{eq:FHVF1}).
 To this end, let $X_{f,M}^{}$ denote the projection of~$X_f^{}$ to~$M$
 and, for any given $\, \phi \in \mathscr{S}$, apply Lemma~%
 \ref{lem:VFREPVAR} to construct some projectable vector field
 $\tilde{X}_f^\phi$ on~$\vec{J}^{\circledast} E$ which is $\phi$-related
 to $X_{f,M}^{}$; then the difference $\, X_f^{} - \tilde{X}_f^\phi \,$
 will be vertical on the image of~$\phi$, and according to equation~%
 (\ref{eq:VFREPVAR1}), equation~(\ref{eq:FHVF3}) becomes
 \[
  \mathslf{X}_{\mathcal{F}_{\Sigma,f}}^{}[\phi]~
  =~(X_f^{} - \tilde{X}_f^\phi)(\phi)~
  =~\delta_{(X_f^{} - \tilde{X}_f^\phi)}^{} \phi~.
 \]
 Inserting this expression into equation~(\ref{eq:SFCPHS}), we get,
 for any $\, \delta_X^{} \phi \in T_\phi^{} \mathscr{S} \,$ where~$X$
 is some vertical vector field on~$\vec{J}^{\circledast} E$ of spatially
 compact support
 \[
  \Omega_\phi \bigl( \mathslf{X}_{\mathcal{F}_{\Sigma,f}}^{}[\phi],
                    \delta_X^{} \phi \bigr)~
  =~\Omega_\phi \bigl( \delta_{(X_f^{} - \tilde{X}_f^\phi)}^{} \phi,
                      \delta_X^{} \phi \bigr)~
  =~\int_\Sigma \phi^* \bigl( i_{X\vphantom{\tilde{X}_f^\phi}}^{}
                              i_{(X_f^{} - \tilde{X}_f^\phi)}^{}
                              \omega_{\mathcal{H}}^{} \bigr)
 \]
 Now observe that the expression
 \[
  \phi^* \bigl( i_{X\vphantom{\tilde{X}_f^\phi}}^{}
                i_{\tilde{X}_f^\phi}^{} \omega_{\mathcal{H}}^{} \bigr)~
  = \, - \, \phi^* \bigl( i_{\tilde{X}_f^\phi}^{}
                          i_{X\vphantom{\tilde{X}_f^\phi}}^{}
                          \omega_{\mathcal{H}}^{} \bigr)~
  = \, - \, i_{X_{f,M}^{}}^{}
            \bigl( \phi^* (i_X^{} \omega_{\mathcal{H}}^{}) \bigr)
 \]
 vanishes due to the assumption that $\phi$ is a solution of the equations
 of motion.
 Therefore,
 \[
  \Omega_\phi(\mathslf{X}_{\mathcal{F}_{\Sigma,f}}^{}[\phi],\delta_X^{} \phi)~
  =~\int_\Sigma \phi^* \bigl( i_X^{} i_{X_f}^{} \omega_{\mathcal{H}}^{} \bigr)~,
 \]
 which, according to Proposition~\ref{prop:FFUNDER}, is equal to
 \[
  \mathcal{F}_{\Sigma,f}^{\>\prime}[\phi] \cdot \delta_X^{} \phi~
  =~\int_\Sigma \phi^* \bigl( i_X^{} df \bigr)~
  =~\int_\Sigma \phi^* \bigl( i_X^{} i_{X_f}^{} \omega_{\mathcal{H}}^{} \bigr)~.
 \]
\qed
\end{proof}

An immediate corollary of this theorem is that we can express the
Peierls\,--\,De\,Witt bracket between two local functionals associated
to hamiltonian $(n\!-\!1)$-forms directly in terms of their ``multi%
symplectic Poisson bracket'':
\begin{thm}~
 Suppose we are given a fiber bundle\/~$E$ (the field configuration
 bundle) over an \linebreak $n$-dimensional globally hyperbolic
 space-time manifold\/~$M$ and a hamiltonian $\, \mathcal{H}:
 \vec{J}^{\circledast} E \longrightarrow J^{\circledstar} E$, which
 is a section of extended multiphase space\/~$J^{\circledstar} E$
 over ordinary multiphase space\/~$\vec{J}^{\circledast} E$,
 together with two hamiltonian\/ $(n\!-\!1)$-forms\/ $f$ and\/~$g$
 on\/~$\vec{J}^{\circledast} E$ of spatially compact support such
 that the corresponding hamiltonian vector fields\/ $X_f^{}$
 and\/~$X_g^{}$ on\/~$\vec{J}^{\circledast} E$ are projectable
 to~$M$.\footnotemark[10] \linebreak
 Then given any Cauchy surface\/~$\Sigma$ in\/~$M$, the
 Peierls\,--\,De\,Witt bracket between the local functionals
 $\mathcal{F}_{\Sigma,f}^{}$ and $\mathcal{F}_{\Sigma,g}^{}$ on\/~%
 $\mathscr{S}$  associated to\/~$\Sigma$ and\/~$f$ and to\/~$\Sigma$
 and\/~$g$, as in Definition~\ref{def:LFUNCT}, is the local functional
 associated to~$\Sigma$ and any of the multisymplectic Poisson brackets
 $\{f,g\}$ on\/~$\vec{J}^{\circledast} E$ that can be found in the
 literature, among them the simple ``pseudo-bracket'' defined by
 equation~(\ref{eq:POISB3}) as well as the modified bracket defined
 by equation~(\ref{eq:POISB4}).
 In other words, with any one of these choices, we have
 \begin{equation} \label{eq:POISB8}
  \bigl\{ \mathcal{F}_{\Sigma,f}^{} \,,\, \mathcal{F}_{\Sigma,g}^{} \bigr\}~
  =~\mathcal{F}_{\Sigma,\{f,g\}}^{}~.
 \end{equation}
\end{thm}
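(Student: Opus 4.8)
The plan is to obtain~(\ref{eq:POISB8}) directly from the definition of the Peierls\,--\,De\,Witt bracket together with the two results already established for it, thereby bypassing the ``analytic'' content of Theorem~\ref{thm:FHVF1} altogether~-- no causal Green function will be needed. First I would note that the hypotheses needed to invoke Theorems~\ref{thm:FHVF1} and~\ref{thm:FHVF2} are in force: $\mathcal{F}_{\Sigma,f}^{}$ and $\mathcal{F}_{\Sigma,g}^{}$ have temporally compact support because $\Sigma$ is a Cauchy surface, and $X_f^{},X_g^{}$ are projectable to~$M$ by assumption. Starting from the defining formula~(\ref{eq:POISB5}) for the bracket and inserting $\mathslf{X}_{\mathcal{F}_{\Sigma,f}}^{}[\phi] = \delta_{X_f}^{}\phi$ from Theorem~\ref{thm:FHVF2}, one gets, for $\phi \smin \mathscr{S}$,
\[
 \bigl\{ \mathcal{F}_{\Sigma,f}^{} , \mathcal{F}_{\Sigma,g}^{} \bigr\}[\phi]~
 = \; - \, \mathcal{F}_{\Sigma,g}^{\>\prime}[\phi] \cdot \delta_{X_f}^{}\phi~,
\]
so that everything comes down to evaluating the functional derivative on the right.

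To do so I would apply Proposition~\ref{prop:FFUNDER} with the form~$g$ and the projectable vector field $X_f = (X_{f,P}^{},X_{f,M}^{})$. Because $\Sigma$ is a Cauchy surface we have $\partial\Sigma = \emptyset$, so the boundary hypothesis of the Remark following that proposition is vacuous and equation~(\ref{eq:FFUNDER2}) applies, giving
\[
 \mathcal{F}_{\Sigma,g}^{\>\prime}[\phi] \cdot \delta_{X_f}^{}\phi~
 =~\int_\Sigma \Bigl( \phi^* ( i_{X_{f,P}}^{} dg ) \, - \,
                      i_{X_{f,M}}^{} ( \phi^* dg ) \Bigr)~.
\]
The crucial point is that the second term vanishes on shell, i.e.\ $\phi^* dg = 0$ for every $\phi \smin \mathscr{S}$. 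To prove this I would write $dg = i_{X_g}^{}\omega_{\mathcal{H}}^{}$ (using that $g$ is hamiltonian), pick via Lemma~\ref{lem:VFREPVAR} a projectable vector field $\tilde{X}_g^\phi$ on~$\vec{J}^{\circledast} E$ that is $\phi$-related to the projection $X_{g,M}^{}$ of~$X_g^{}$, and split $X_g^{} = \tilde{X}_g^\phi + ( X_g^{} - \tilde{X}_g^\phi )$: the first summand gives $\phi^*( i_{\tilde{X}_g^\phi}^{}\omega_{\mathcal{H}}^{} ) = i_{X_{g,M}}^{}( \phi^*\omega_{\mathcal{H}}^{} ) = 0$, since $\omega_{\mathcal{H}}^{}$ has degree $n\!+\!1$ while~$M$ has dimension~$n$, and the second summand is vertical along the image of~$\phi$ and so contributes $0$ by the equations of motion~(\ref{eq:FEQMOT1})~-- this is the same mechanism used in the proof of Theorem~\ref{thm:FHVF2}. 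Consequently
\[
 \mathcal{F}_{\Sigma,g}^{\>\prime}[\phi] \cdot \delta_{X_f}^{}\phi~
 =~\int_\Sigma \phi^* ( i_{X_f}^{} i_{X_g}^{} \omega_{\mathcal{H}}^{} )~,
\]
and plugging this back in,
\[
 \bigl\{ \mathcal{F}_{\Sigma,f}^{} , \mathcal{F}_{\Sigma,g}^{} \bigr\}[\phi]~
 = \; - \int_\Sigma \phi^* ( i_{X_f}^{} i_{X_g}^{} \omega_{\mathcal{H}}^{} )~
 =~\int_\Sigma \phi^* ( i_{X_g}^{} i_{X_f}^{} \omega_{\mathcal{H}}^{} )~
 =~\mathcal{F}_{\Sigma,\{f,g\}}^{}[\phi]~,
\]
which is exactly~(\ref{eq:POISB8}) for the pseudo-bracket~(\ref{eq:POISB3}).

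Finally, to cover the corrected bracket~(\ref{eq:POISB4}) and, more generally, the other candidate ``multisymplectic Poisson brackets'' from the literature, I would observe that each of them differs from $i_{X_g}^{} i_{X_f}^{}\omega_{\mathcal{H}}^{}$ by an \emph{exact} $(n\!-\!1)$-form on~$\vec{J}^{\circledast} E$~-- in the case of~(\ref{eq:POISB4}) the correction is $d\bigl( i_{X_g}^{} f - i_{X_f}^{} g - i_{X_g}^{} i_{X_f}^{}\theta_{\mathcal{H}}^{} \bigr)$. Since $\phi^*$ commutes with~$d$ and $\partial\Sigma = \emptyset$, Stokes' theorem shows that such a correction leaves $\int_\Sigma \phi^*\{f,g\}$ unchanged, so $\mathcal{F}_{\Sigma,\{f,g\}}^{}$ is the same for all of these brackets and~(\ref{eq:POISB8}) holds with any of them. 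The only matters left aside~-- purely routine~-- are checking that $\{f,g\}$ again has a base support meeting~$\Sigma$ in a compact set, which follows from the spatial compactness of $\mathrm{supp}\,f$ and $\mathrm{supp}\,g$, so that the functional $\mathcal{F}_{\Sigma,\{f,g\}}^{}$ makes sense. The one genuinely substantial ingredient is the on-shell identity $\phi^* dg = 0$; past that, the argument is bookkeeping with Stokes' theorem and the absence of a boundary for a Cauchy surface.
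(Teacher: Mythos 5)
Your proposal is correct and follows essentially the same route as the paper: insert the algebraic formula $\mathslf{X}_{\mathcal{F}_{\Sigma,f}}[\phi]=\delta_{X_f}\phi$ from Theorem~\ref{thm:FHVF2} into the definition~(\ref{eq:POISB5}), evaluate the functional derivative via equation~(\ref{eq:FFUNDER2}), kill the term $i_{X_{f,M}}(\phi^* dg)$ on shell, and dispose of the exact correction term in~(\ref{eq:POISB4}) by Stokes' theorem using the spatially compact supports. The only cosmetic difference is that you re-derive the on-shell vanishing of $\phi^* dg$ by splitting $X_g$ into a $\phi$-related projectable part and a vertical remainder via Lemma~\ref{lem:VFREPVAR}, whereas the paper cites the projectable form~(\ref{eq:FEQMOT2}) of the equations of motion directly~-- the same mechanism in either phrasing.
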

\begin{proof}
 Combining equation~(\ref{eq:POISB5}) with equation~(\ref{eq:FHVF3})
 from the previous theorem and applying equation~(\ref{eq:FFUNDER2})
 from Proposition~\ref{prop:FFUNDER} (which is applicable because the
 relevant integral need only be extended over a compact subset of~%
 $\Sigma$ such that the base supports of~$f$ and~$g$ are contained
 in its interior), we obtain, for any $\, \phi \in \mathscr{S}$,
 \begin{eqnarray*}
  \bigl\{ \mathcal{F}_{\Sigma,f}^{} \,,\, \mathcal{F}_{\Sigma,g}^{} \bigr\}
  [\phi] \!\!
  &=&\!\!\! \mbox{} - \, \mathcal{F}_{\Sigma,g}^{\>\prime}[\phi]
            \Bigl( \mathslf{X}_{\mathcal{F}_{\Sigma,f}}^{} [\phi] \Bigr)~
   = \; -\, \mathcal{F}_{\Sigma,g}^{\>\prime}[\phi]
            \Bigl( \delta_{X_f}^{} \phi \Bigr) \\[2mm]
  &=&\!\!\! \mbox{} - \, \int_\Sigma
            \Bigl( \phi^* \bigl( i_{X_f}^{} dg \bigr) \, - \,
                                 i_{X_{f,M}}^{} \bigl( \phi^* dg \bigr) \Bigr)~.
 \end{eqnarray*}
 Using that $\, dg = i_{X_g}^{} \omega_{\mathcal{H}}^{}$, we see that the
 second integral vanishes due to the equations of motion~(\ref{eq:FEQMOT2}),
 and we get
 \[
  \bigl\{ \mathcal{F}_{\Sigma,f}^{} \,,\,
          \mathcal{F}_{\Sigma,g}^{} \bigr\} [\phi]~
  =~\int_\Sigma \phi^*
    \Bigl( i_{X_g}^{} i_{X_f}^{} \omega_{\mathcal{H}}^{} \, \Bigr)~
  =~\int_\Sigma \phi^* \{f,g\}~
  =~\mathcal{F}_{\Sigma,\{f,g\}}^{}[\phi]~,
 \]
 where the second equality is obvious if we employ the ``pseudo-bracket''
 of equation~(\ref{eq:POISB3}) but holds equally well if we employ the
 modified bracket of equation~(\ref{eq:POISB4}) since, once again, the
 relevant integral need only be extended over a compact subset of~%
 $\Sigma$ such that the base supports of~$f$ and~$g$ are contained in
 its interior and then the integral over the additional term can, by
 Stokes's theorem, be converted to an integral over the boundary of
 that compact subset and hence vanishes.
\qed
\end{proof}

\section{Conclusions and Outlook}

In this paper we have established a link between the multisymplectic
Poisson brackets that have been studied by geometers over more than
four decades and the covariant functional Poisson bracket of classical
field theory, commonly known as the Peierls\,--\,De\,Witt bracket.
This link is based on associating to each differential form on the
pertinent multiphase space a certain local functional obtained by
pulling that form back to space-time via a solution of the equations
of motion and integrating over some fixed submanifold of space-time
of the appropriate dimension, considering the result as a functional
on covariant phase space (the space of solutions of the equations
of motion). 
Here, we have restricted attention to forms of degree $n\!-\!1$,
where $n$ is the dimension of space-time, which have to be integrated
over submanifolds of codimension~$1$ (hypersurfaces), but we cannot
see any obvious obstruction to extending this kind of analysis to
forms of other degree.
This would be of considerable interest since in physics there appear
many functionals that are localized on submanifolds of space-time of
other dimensions, such as: values of observable fields at space-time
points (dimension~$0$), Wilson loops (traces of parallel transport
operators around loops) in gauge theories (dimension~$1$),
electromagnetic field strength tensors and curvature tensors
(dimension~$2$), etc.

The overall picture that emerges is that the correct approach to
the concept of observables \linebreak in classical field theory is
to regard them as smooth functionals on covariant phase space.
\linebreak
As is well known, covariant phase space is, at least formally and for
nondegenerate systems, an (infinite-dimensional) symplectic manifold,
so the space of all such functionals constitutes a Poisson algebra,
and that is what we are referring to when we speak about the
``algebra of \linebreak observables'' in classical field theory.
Of course, this algebra is huge, and the construction of local
functionals on covariant phase space from differential forms on
multiphase space, as employed in this paper, is merely a device for
producing special (and quite small) classes of such observables.
\linebreak
But the reduction of the algebraic structure at the level
of such functionals to some corresponding algebraic structure
for the generating differential forms is highly problematic:
in fact, there is no reason to expect that there might exist
any product or bracket between differential forms on multiphase
space capable of reproducing the standard product or bracket
between the corresponding functionals on covariant phase space.
One possible obstacle is that any such prescription would most
likely be highly ambiguous since a crucial piece of information
is missing: after all, the functional does not only depend on the
differential form which (after pull-back) is being integrated but
also on the submanifold over which one integrates!
One way out would be to restrict to functionals defined by some
fixed submanifold and hope that the resulting algebraic structure
does not depend on which submanifold (within a certain given class)
is chosen: that is what we have done in this paper when reducing the
covariant Peierls\,--\,De\,Witt bracket to a multisymplectic bracket.
But the fact that this actually works is to a certain extent a miracle
which cannot be expected to happen in general, since we will very
likely be forced into admitting functionals defined by integration
over different submanifolds, including submanifolds of different
dimension.
For example, this happens as soon as we want to include differential
forms of different degrees and/or explore the existence of a relation
between the product of functionals and the exterior product of forms,
or some modified form thereof.
In particular, these arguments show why the notorious absence of 
a decent associative product in the multi\-symplectic formalism
should come as no surprise: it merely expresses the fact that
functionals of the form given by  equation~(\ref{eq:LFUNCT1})
do not form a subalgebra.

At any rate, it is a highly interesting question what kind of
algebraic structure on what kind of spaces of differential forms
(or pairs of submanifolds and differential forms) will ultimately
result from the functional approach advocated in this paper.
These and similar questions are presently under investigation
and will be reported in a future publication.

{\footnotesize

\end{document}